\documentclass[]{llncs}
\usepackage[utf8]{inputenc}
\usepackage{float}
\floatstyle{ruled}
\newfloat{algorithm}{tbp}{loa}
\floatname{algorithm}{Protocol}
\usepackage{theorem}
\usepackage{latexsym}
\usepackage{amsmath}
\usepackage{graphicx}
\usepackage{amssymb}
\usepackage{array}
\usepackage{setspace}
\usepackage{wrapfig}
\usepackage{algpseudocode}
\usepackage{mathtools}
\usepackage{enumerate}
\usepackage{color}


\topmargin -10mm
\oddsidemargin 0mm
\evensidemargin 0mm
\textwidth 160mm
\textheight 230mm

\newcommand{\ie}{i.e., }

\title{Asynchronous Gathering in a Torus}
\author{Sayaka Kamei\inst{1} \and Anissa Lamani\inst{2} \and Fukuhito Ooshita\inst{3} \and S\'ebastien Tixeuil \inst{4} \and Koichi Wada \inst{5}}

\institute{
              Graduate School of Advanced Science and Engineering, Hiroshima University, Japan
           \and
           University of Strasbourg, ICube, France
              \and
        Graduate School of Science and Technology, Nara Institute of Science and Technology, Japan. \and
        Sorbonne University, CNRS, LIP6, France. \and
        Faculty of Science and Engineering, Hosei University, Japan.
        }


\begin{document}

\maketitle

\begin{abstract}\label{sec:abstract}

We consider the gathering problem for asynchronous and oblivious robots that cannot communicate explicitly with each other, but are endowed with visibility sensors that allow them to see the positions of the other robots.
Most of the investigations on the gathering problem on the discrete universe are done on ring shaped networks due to the number of symmetric configurations. We extend in this paper the study of the gathering problem on torus shaped networks assuming robots endowed with local weak multiplicity detection. That is, robots cannot make the difference between nodes occupied by only one robot from those occupied by more than one robots unless it is their current node. 
As a consequence, solutions based on creating a single multiplicity node as a landmark for the gathering cannot be used. 
We present in this paper a deterministic algorithm that solves the gathering problem starting from any rigid configuration on an asymmetric unoriented torus shaped network.  

\end{abstract}

\section{Introduction}\label{sec:intro}

We consider autonomous robots~\cite{SY99j} that are endowed with visibility 
sensors and motion actuators, yet are unable to communicate explicitly.
They evolve in a discrete environment, \emph{i.e.}, their space is
partitioned into a finite number of locations, conveniently
represented by a graph, where the nodes represent the possible
locations that a robot can be, and the edges denote the possibility for a
robot to move from one location to another.

Those robots must collaborate to solve a collective task despite being
limited with respect to computing capabilities, inputs from the environment, etc. 
In particular, the robots we consider are anonymous, uniform, yet they can 
sense their environment and take decisions according to their own ego-centered 
view. In addition, they are oblivious, \emph{i.e.}, they do not remember their past actions.
Robots operate in \emph{cycles} that include three phases:
\emph{Look}, \emph{Compute}, and \emph{Move} (LCM for short).  
The Look phase consists in taking a snapshot of the other robots positions
using a robot's visibility sensors. During the Compute phase, a robot
computes a target destination based on its previous observation. The
Move phase simply consists in moving toward the computed destination
using motion actuators. Using LCM cycles, three execution models have been
considered in the literature, capturing the various degrees of
synchrony between robots. According to current
taxonomy~\cite{FPS19b}, they are denoted FSYNC, SSYNC, and ASYNC, from
the stronger to the weaker. FSYNC stands for \emph{fully
  synchronous}.  In this model, all robots execute the LCM cycle
synchronously and atomically.  In the SSYNC (\emph{semi-synchronous})
model, robots are asynchronously activated to perform cycles, yet at
each activation, a robot executes one cycle atomically. With the
weaker model, ASYNC (\emph{asynchronous}), robots execute
LCM in a completely independent manner. Of course, the ASYNC model is 
the most realistic.

In the context of robots evolving on graphs, the two benchmarking tasks are \emph{exploration}~\cite{I19bc} and \emph{gathering}~\cite{CSN19bc}.
In this paper, we address the \emph{gathering} problem, which requires 
that robots eventually all meet at a single node, not known beforehand, and terminate upon completion.

We focus on the case where the network is an \emph{anonymous unoriented
  torus} (or simply \emph{torus}, for short).  The terms \emph{anonymous} 
  and \emph{unoriented} mean that no robot has access to any
kind external information (\emph{e.g.}, node identifiers, oracle, local edge
labeling, etc.) allowing to identify nodes or to determine any (global or
local) direction, such as North-South/East-West. Torus networks were previously investigated for the purpose of exploration by Devismes et al.\cite{DLPT2019j}.

\paragraph{Related Works.}

Mobile robot gathering on graphs was first considered for ring-shaped graphs. 
Klasing \emph{et al.}~\cite{KMP08j,KKN10j}, who proposed gathering algorithms for rings with \emph{global-weak} multiplicity detection. Global-weak multiplicity detection enables a robot to detect whether the number of robots on each node is one, or more than one. However, the exact number of robots on a given node remains unknown if there is more than one robot on the node.
Then, Izumi \emph{et al.}~\cite{IIKO13j} provided a gathering algorithm for rings with \emph{local-weak} multiplicity detection under the assumption that the initial configurations are non-symmetric and non-periodic, and that the number of robots is less than half the number of nodes. Local-weak multiplicity detection enables a robot to detect whether the number of robots on its \emph{current} node is one, or more than one. This condition was slightly relaxed by Kamei et al.~\cite{KLOT12c}.
D'Angelo \emph{et al.}~\cite{DNN17j} proposed unified ring gathering algorithms for most of the solvable initial configurations, using local-weak multiplicity detection. 
Overall, for rings, relatively few open cases remain~\cite{BPT16c}, as algorithm synthesis was demonstrated feasible~\cite{MPST14c}. 

The case of gathering in tree-shaped networks was investigated by D'Angelo \emph{et al.}~\cite{DDKN16j} and by Di Stefano et al.\cite{DN17j}. Hypercubes were the focus of Bose at el.~\cite{BKAS18c}. Complete and complete bipartite graphs were outlined by Cicerone et al.~\cite{CDN21j}, and regular bipartite by Guilbault et al.~\cite{GP13j}.
Finite grids were studied by D'Angelo et al.~\cite{DDKN16j}, Das et al.~\cite{DGLM19c}, and Castenow et al.~\cite{CFHJM20j}, while infinite grids were considered by Di Stefano et al.~\cite{DN17j}, and by Durjoy et al.~\cite{DDG17c}. Results on grids and infinite grids do not naturally extend to tori. On the one hand, the proof arguments for impossibility results on the grid can be extended for the torus, since their indistinguishability criterium remains valid. So, if a torus admits an edge symmetry (the robot positions are mirrored over an axial symmetry traversing an edge), or is periodic (a non-trivial translation leaves the robot positions unchanged), gathering is impossible on a torus. On the other hand, both the finite and the infinite grid allow algorithmic tricks to be implemented. For example, the finite grid has three classes of nodes: corners (of degree $2$), borders (of degree $3$), and inner nodes (of degree $4$), and those three classes permit the robots to obtain some sense of direction. By contrast, the infinite grid makes a difference between two locations: the inner space (the set of nodes within the convex hull formed by the robot positions) and the outer space (the rest of the infinite grid), which also give some sense of direction. Now, every node in a torus has degree $4$, and no notion of inner/outer space can be defined.  
To our knowledge, torus-shaped networks were never considered before for the gathering problem. The aforementioned work by Devismes et al~\cite{DLPT2019j} only considers the exploration task.

\paragraph{Our contribution.}

We consider the problem of gathering on torus-shaped networks. In more details, for initial configurations that are \emph{rigid} (i.e. neither symmetric nor periodic), we propose a distributed algorithm that gather all robots to a single node, not known beforehand. We only make use of local-weak multiplicity detection: robots may only know whether at least one other robot is currently hosted at their hosting node, but cannot know the exact number, and are also unable to retried multiplicity information from other nodes. Furthermore, robots have no common notion of North, and no common notion of handedness. Finally, robots operate in the most general and realistic ASYNC execution model. 

\section{Model}\label{sec:model}

In this paper, we consider a distributed system that consists of a collection of $\mathcal{K} \geq 3$ robots evolving on a non-oriented and anonymous $(\ell,L)$-torus (or simply torus for short) of $n$ nodes. Values $\ell$ and $L$ are two integers such that (definition borrowed from Devismes et al.~\cite{DLPT2019j}):
\begin{enumerate}
\item $n= \ell \times L$
\item Let $E$ be a finite set of edges.
There exists an ordering $v_1, \ldots, v_n$ of the nodes of
the torus such that $\forall i \in \{1,\ldots,n\}$:
\begin{itemize}
\item if $i+\ell \leq n$, then $\{i,(i+\ell)\} \in E$, else $\{i,(i+\ell)\bmod n\} \in E$.
\item if $i \bmod \ell \neq 0$, then $\{i,i+1\} \in E$, else
  $\{i,i-\ell+1\} \in E$.
\end{itemize}
\end{enumerate}

Given the previous ordering $v_1, \ldots, v_n$, for every $j \in
\{0,\ldots,L-1\}$, the sequence $v_{1+j\times \ell},$ $v_{2+j\times
  \ell},\ldots,$ $v_{\ell + j\times \ell}$ is called an \emph{$\ell$-ring}.
Similarly, for every $k \in \{1,\ldots,\ell\}$, $v_{k},$ $v_{k + \ell},$ $v_{k +
  2\times\ell},$ $\ldots, v_{k + (L-1) \times \ell}$ is called an
\emph{$L$-ring}. 
In the sequel, we use the term {\em ring}
to designate an $\ell$-ring or an $L$-ring.

On the torus operate $\mathcal{K}\geq 3$ identical robots, \ie they all execute the same algorithm using no local parameters and one cannot distinguish them using their appearance. In addition, they are oblivious, \ie they cannot remember the operations performed before. No direct communication is allowed between robots however, we assume that each robot is endowed with visibility sensors that allow him to see the position of the other robots on the torus. Robots operate in cycles that comprise three phases: \textit{Look}, \textit{Compute} and \textit{Move}. During the first phase (Look), each robot takes a snapshot to see the positions of the other robots on the torus. In the second phase (Compute), they decide to either stay idle or move. In the case they decide to move, a neighboring destination is computed. Finally, in the last phase (Move), they move to the computed destination (if any). 

At each instant $t$, a subset of robots is activated for the execution by an external entity known as the \textit{scheduler}. We assume that the scheduler is fair, \ie all robots must be activated infinitely many times. The model considered in this paper is the asynchronous model (ASYNC) also known as the \textit{CORDA} model. In this model, the time between Look, Compute and Move phases, is finite but not bounded. In our case, we add a constraint that is the move operation is instantaneous, \ie when a robot performs a look operation, it sees all the robots on nodes and never on edges. However, note that even under this constraint, each robot may move according to an outdated view, \ie the robot takes a snapshot to see the positions of the other robots, but when it decides to move, some other robots may have moved already. 

In this paper, we refer by $v_{i,j}$ to the $j^{th}$ node located on $\ell_i$. By $d_{i,j}(t)$ we denote the number of robots on node $v_{i,j}$ at time $t$. We say that $v_{i,j}$ is empty if $d_{i,j}(t)=0$. Otherwise, $v_{i,j}$ is said to be occupied. In the case where $d_{i,j}(t)=1$, we say that there is a single robot on $v_{i,j}$. By contrast, if $d_{i,j}(t)\geq 2$, we say that there is a multiplicity on $v_{i,j}$.  
In this paper, we assume that robots have a local weak multiplicity detection, that is, for any robot $r$, located at node $u$, $r$ can only detect a multiplicity on its current node $u$ (local). Moreover, $r$ cannot be aware of the exact number of robots part of the multiplicity (weak). 

During the process, some robots move and occupy at any time some nodes of the torus, their positions form the configuration of the system at that time. At instant $t=0$, we assume that each node is occupied by at most one robot, \ie the initial configuration contains no multiplicities. 

In the following, we assume that for any occupied node $v_{i,j}$, independently of the number of robots on $v_{i,j}$, $d_{i,j}(t)=1$.  For any $i, j \geq 0$, let $\delta^{+}_{i,j}(t)$ denote the sequence $<d_{i,j}(t), d_{i,j+1}(t), \dots, d_{i,j+\ell-1}(t)>$, and let $\delta^{-}_{i,j}(t)$ denote the  sequence $<d_{i,j}(t), d_{i,j-1}(t), \dots, d_{i,j-(\ell-1)}(t)>$. Similarly, let $\Delta^{+s}_{i,j}(t)$ be the sequence $<\delta^{s}_{i,j}(t), \delta^{s}_{i+1,j}(t), \dots, \delta^{s}_{i+(L-1),j}(t)>$ and $\Delta^{-s}_{i,j}(t)$ to be the sequence $<\delta^{s}_{i,j}(t), \delta^{s}_{i-1,j}(t), \dots, \delta^{s}_{i-(L-1),j}(t)>$ with $s \in \{+,-\}$.

The view of a given robot $r$ located on node $v_{i,j}$ at time $t$ is defined as the pair $view_r(t)=(\mathcal{V}_{i,j}(t), m_j)$ where $\mathcal{V}_{i,j}(t)$ consists of the four sequences $\Delta^{++}_{i,j}, \Delta^{+-}_{i,j}, \Delta^{-+}_{i,j}, \Delta^{--}_{i,j}$ ordered in the lexicographical order and $m_j=1$ if $v_j$ hosts a multiplicity and $m_j=0$ otherwise.

By $view_r(t)(1)$, we refer to $\mathcal{V}_{i,j}(t)$ in $view_r(t)$. Let $r$ and $r'$ be two robots satisfying $view_r(t)(1) > view_{r'}(t)(1)$. Robot $r$ is said in this case to have a larger view than $r'$. Similarly, $r$ is said to have the largest view at time $t$, if for any robots $r' \ne r$, not located on the same node as $r$, $view_r(t)(1) > view_{r'}(t)(1)$ holds.


\begin{figure}[!t]
\begin{center}
\includegraphics[scale=0.46]{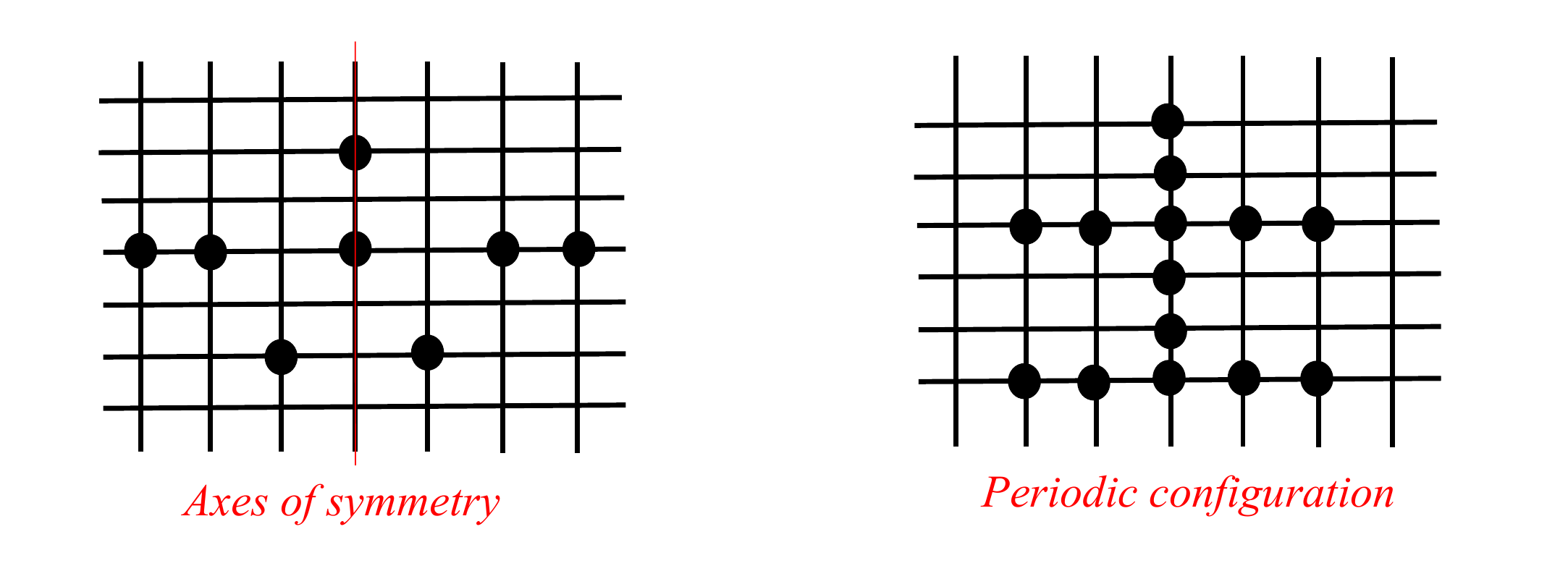}
\caption{Instance of a symmetric configuration and a periodic configuration}\label{fig:symPeriodic}
\end{center}
\end{figure}

A configuration is said to be rigid at time $t$, if for any two robots $r$ and $r'$, located on two different nodes of the torus, $view_{r}(t)(1) \ne view_{r'}(t)(1)$ holds.

 A configuration is said to be periodic at time $t$ if there exist two integers $i$ and $j$ such that $i\ne j$, $i\ne 0 \mod \ell$, $j\ne 0 \mod L$, and for every robot $r_{(x,w)}$ located on $\ell_x$ at node $v_{x,w}$, $view_{r_{(x,w)}}(t)(1) = view_{r_{(x+i,w+j)}}(t)(1)$ (An example is given in Fig.~\ref{fig:symPeriodic}).

As defined by D'Angelo et al.~\cite{DDKN16j}, a configuration is said to be symmetric at time $t$, if the configuration is invariant after a reflexion with respect to either a vertical or a horizontal axis. This axis is called axis of symmetry (An example is given in Fig.~\ref{fig:symPeriodic}).

In this paper, we consider asymmetric $({\ell},L)$-torus, \ie $\ell \ne L$. We assume without loss of generality that $L<\ell$. In this case, we can differentiate two sides of the torus. 
We denote by $nb_{\ell_i}(C)$ the number of occupied nodes on $\ell$-ring $\ell_i$, in configuration $C$. An $\ell$-ring $\ell_i$ is said to be maximal in $C$ if $\forall~ j \in \{1, \dots, \ell\}\setminus\{i\}$, $nb_{\ell_j}(C) \leq nb_{\ell_i}(C)$.\\

Given a configuration $C$ and two $\ell$-rings $\ell_i$ and $\ell_j$. We say that $\ell_j$ is adjacent to $\ell_i$ if $|i-j|=1 \mod L$ holds. Similarly, we say that $\ell_j$ is neighbor of $\ell_i$ in configuration $C$ if $nb_{\ell_j}(C)>0$ and $nb_{\ell_k}(C)=0$ for any
$k \in \{i+1, i+2, \dots, j-1\}$ or $k \in \{i-1, i-2, \dots, j+1\}$. For instance, in Figure~\ref{fig:neighbor}, $\ell_1$ and $\ell_2$ are adjacent while $\ell_2$ and $\ell_3$ are neighbors. 

We also define $dis(x_i,x_j)$ to be a function which returns the shortest distance, in terms of hops, between $x_i$ and $x_j$ where $x_i$ and $x_j$ are two nodes of the torus. We sometimes write $x_i=r_i$ where $r_i$ is a robot. In this case, $x_i$ refers to the node that hosts $r_i$. Finally, we use the notion of $d$.block to refer to a sequence of consecutive nodes in which there are occupied nodes each $d$ hops (distance).

\begin{figure}[!t]
\begin{center}
\includegraphics[scale=0.45]{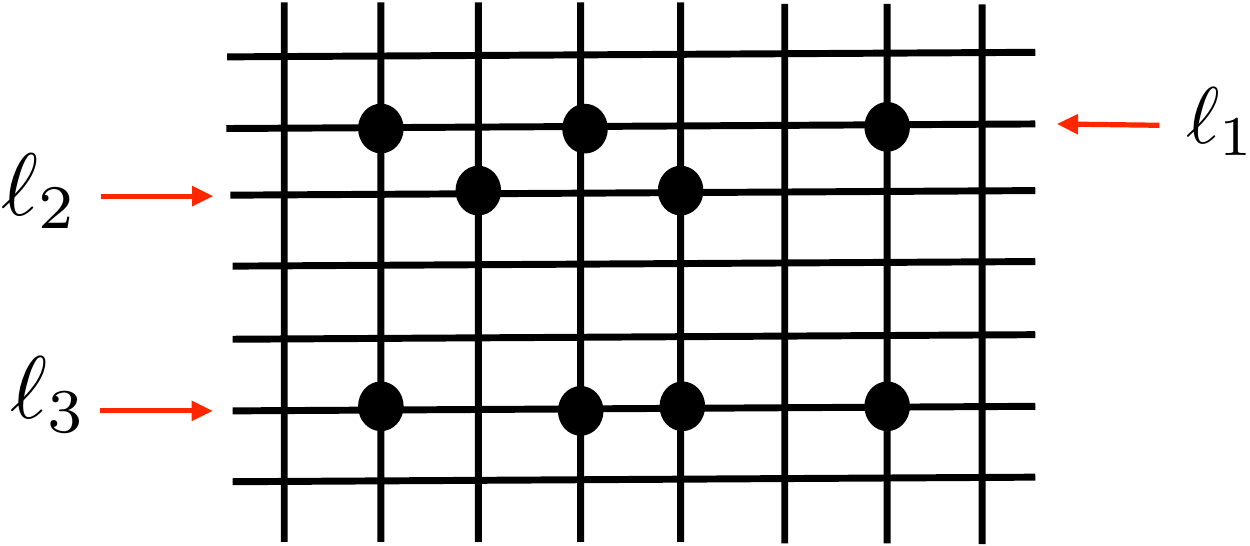}
\caption{Instance of two adjacent/ neighboring $\ell$-rings}\label{fig:neighbor}
\end{center}
\end{figure}

\section{Algorithm}


We describe in the following our strategy to solve the gathering problem is the predefined settings. Before explaining our algorithm in details, let us first define an important set configurations.\\


A configuration $C$ is called $C_{\mathit{target}}$ if there are three $\ell$-rings $\ell_{\mathit{max}}$, $\ell_{\mathit{secondary}}$ and $\ell_{\mathit{target}}$ satisfying following properties:

\begin{enumerate}
\item $\ell_{\mathit{max}}$ is the unique maximal $\ell$-ring in $C$. 

\item $\ell_{\mathit{secondary}}$ and $\ell_{\mathit{target}}$ are adjacent to $\ell_{\mathit{max}}$. 

\item $nb_{\ell_{\mathit{secondary}}}(C)=0$. 

\item $\ell_{\mathit{target}}$ satisfies exactly one of the following conditions:
\begin{enumerate}
\item $nb_{\ell_{\mathit{target}}}(C)=1$. We refer to the occupied node on $\ell_{\mathit{target}}$ by $v_{\mathit{target}}$.
\item $nb_{\ell_{\mathit{target}}}(C)=2$. Let us refer to the robots on $\ell_{\mathit{target}}$ by $r_{1}$ and $r_{2}$ respectively. Then, $r_1$ and $r_2$ are at distance $2$ from each other. We refer to the node that has two adjacent occupied nodes on $\ell_{\mathit{target}}$ by $v_{\mathit{target}}$.
\item $nb_{\ell_{\mathit{target}}}(C)=3$. In this case, there is three consecutive occupied nodes on $\ell_{\mathit{target}}$. By $v_{\mathit{target}}$, we refer to the unique node on $\ell_{\mathit{target}}$ that has two adjacent nodes on $\ell_{\mathit{target}}$. 
\end{enumerate}
\end{enumerate}


Some instances of $C_{\mathit{target}}$ configurations are presented in Figure~\ref{fig:conf-target}. We call $v_{\mathit{target}}$ \emph{target node}.

\begin{figure}[H]
\begin{center}
\includegraphics[scale=0.45]{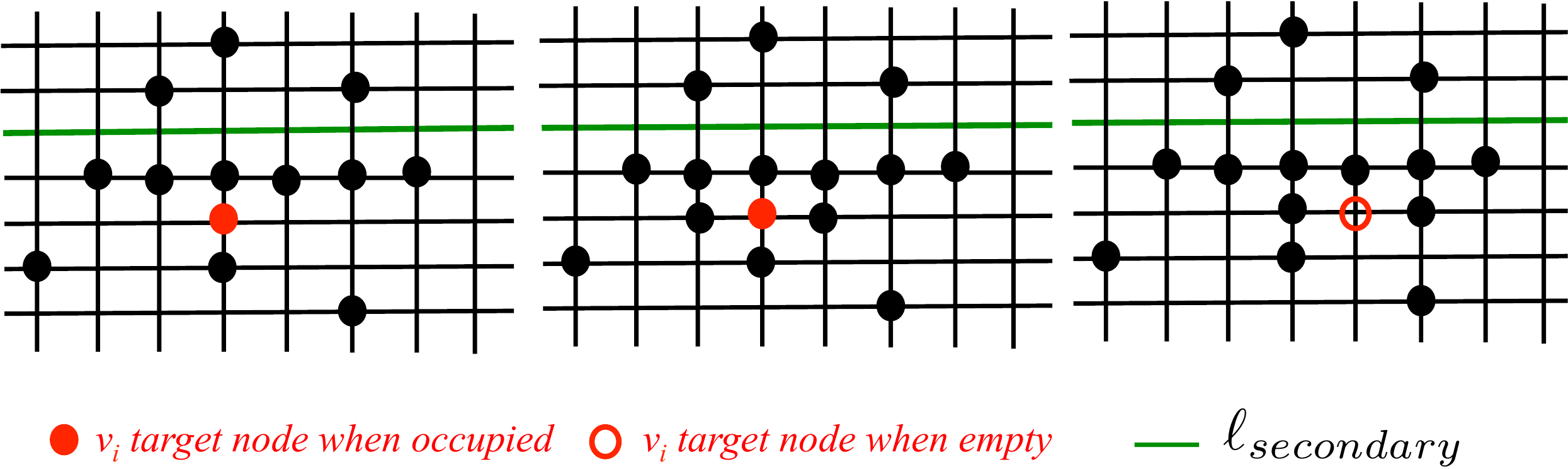}
\caption{Three instances of $C_{\mathit{target}}$ configurations}\label{fig:conf-target}
\end{center}
\end{figure}

Let $\mathcal{C}_{\mathit{target}}$ be the set of all $C_{\mathit{target}}$ configurations. Our algorithm consists of two phases as explained in the following:
  \begin{enumerate}
    \item {\bf Preparation Phase}. This phase starts from an arbitrary rigid configuration $C_0$ in which there is at most one robot on each node, \ie no node contains a multiplicity. The aim of this phase is to reach a configuration $C \in \mathcal{C}_{\mathit{target}}$.
    \item {\bf Gathering Phase}. Starting from a configuration $C \in \mathcal{C}_{\mathit{target}}$ configuration, robots perform the gathering task in such a way that at the end of this phase, all robots are, and remain, on the same node, \ie the gathering is achieved.
  \end{enumerate}

Let us refer by $\mathcal{C}_{p_{1}}$ (respectively $\mathcal{C}_{p_{2}}$) to the set of configurations that appear during the Preparation (respectively the Gathering) phase. Let $C$ be the current configuration, robots execute Protocol~\ref{alg:algo}. 

\begin{algorithm}[H]
\caption{Main protocol}\label{alg:algo}
\begin{algorithmic}
\If {$C \in \mathcal{C}_{p_{2}}$}
\State Execute \textbf{Gathering phase}
\Else
\State Execute \textbf{Preparation phase}
\EndIf
\end{algorithmic}
\end{algorithm}

Observe that $\mathcal{C}_{p_1} \cap \mathcal{C}_{p_2}= \emptyset$ and $\mathcal{C}_{\mathit{target}} \subset \mathcal{C}_{p_2}$. \\

We also define the following predicates on a given configuration $C$:

\begin{itemize}
\item \textbf{Unique($C$)}: There exists a unique $i \in \{1,2, \dots, L\}$ such that $\forall ~j \in \{1, \dots, L\}\setminus\{i\}$, $nb_{\ell_j}(C) < nb_{\ell_i}(C)$. 
\item \textbf{Empty($C$)}: $(C \in \mathcal{C}_{\mathit{target}})~ \wedge~ (\forall~i\in\{1, \dots, L\}$, such that $\ell_i \ne \ell_{\mathit{target}}$ and $\ell_i \ne \ell_{\mathit{max}}$, $nb_{\ell_i}(C)=0$).
\item \textbf{Partial($C$)}: $(C \in \mathcal{C}_{\mathit{target}})~ \wedge~(\exists~i\in\{1, \dots, L\}$, such that $\ell_i \ne \ell_{\mathit{target}}$ and $\ell_i \ne \ell_{\mathit{max}}$, $nb_{\ell_i}(C)\neq 0$).
\end{itemize}

Given a configuration $C$, Unique($C$) indicates that $C$ contains a unique maximal $\ell$-ring $\ell_{\mathit{max}}$.  Empty($C$) indicates that all the $\ell$-rings in $C$, except for $\ell_{\mathit{max}}$ and $\ell_{\mathit{target}}$, are empty. By contrast, Partial($C$) indicates that there is at least one $\ell$-ring besides $\ell_{\mathit{max}}$ and $\ell_{\mathit{target}}$ that is occupied (has at least one occupied node).


\subsection{Procedure Align}

Let us present a procedure referred to by \textbf{Align}($\ell_i$, $\ell_k$) which is called by our algorithm to align robots on $\ell_i$ with respect to robots positions on $\ell_k$. The procedure is only called when the following properties hold on both $\ell_i$ and $\ell_k$:
    \begin{enumerate}
        \item $nb_{\ell_i}(C)=j$ with $j \in \{2,\ldots,5\}$, \ie there are at least two and at most five robots on $\ell_i$. 
        \item $nb_{\ell_i}(C)>nb_{\ell_k}(C)$ holds, and either (1) $nb_{\ell_k}(C)=1$ or (2) $nb_{\ell_k}(C)=2$ and $\ell_k$ contains a $2$.block or (3) $nb_{\ell_k}(C)=3$ and $\ell_k$ contains a $1$.block of size $3$. Let $u_{mark}$ be the node on $\ell_k$ that is 
              \begin{itemize}
                 \item occupied if $nb_{\ell_k}(C)=1$.
                 \item empty in the 2.block if $nb_{\ell_k}(C)=2$.
                 \item occupied in the middle of the $1$.block if $nb_{\ell_k}(C)=3$.
              \end{itemize}
    \end{enumerate}
  
Let $u_1$, $u_2$, $u_3$, $u_4$ and $u_5$ be five consecutive nodes on $\ell_i$ such that $u_3$ is on the same $L$-ring as $u_{mark}$. This notation is used for explanation purposes only, recall that the nodes are anonymous.  The purpose of procedure \textbf{Align}($\ell_i$, $\ell_k$) is to align robots on $\ell_i$ with respect to the robots on $\ell_k$. Depending on the number of robots on $\ell_i$ and $\ell_k$, the following cases are possible: 
  \begin{enumerate}[a)]
      \item $nb_{\ell_i}(C)=2$. In this case, \textbf{Align}($\ell_i$,$\ell_k$) is only called when $nb_{\ell_k}(C)=1$. The aim is to create a $2$.block on $\ell_i$ whose middle node is on the same $L$-ring as $u_{mark}$ (refer to Figure \ref{fig:Align2r}).  Let $r_1$ and $r_2$ be the two robots on $\ell_i$. The aim is to make the robots on $\ell_i$ move to reach a configuration in which both $u_2$ and $u_4$ are occupied. Observe that in the desired configuration, robots $r_1$ and $r_2$ form a $2$.block whose unique middle node is $u_3$. 
      \begin{itemize}
      \item If $u_3$ is occupied and without loss of generality it hosts $r_1$. Robot $r_1$ is the one allowed to move. If both $u_2$ and $u_4$ are empty and without loss of generality $dist(r_2, u_2) < dist(r_2, u_4)$ then $r_1$ moves to $u_4$. If $dist(r_2, u_2) = dist(r_2, u_4)$, then $r_1$ moves to either $u_2$ or $u_4$ (the adversary chooses a node to which $r_1$ moves to). 
      \item If $u_3$ is empty then assume without loss of generality that the path on $\ell_i$ between $r_{1}$ (respectively $r_2$) and $u_2$ (respectively $u_4$) is empty. If $u_2$ (respectively $u_4$) is an empty node then $r_1$ (respectively $r_2$) moves on its adjacent empty node on the empty path toward $u_2$ (respectively $u_4$).
      \end{itemize}
      \begin{figure}[H]
        \begin{center}
            \includegraphics[scale=0.55]{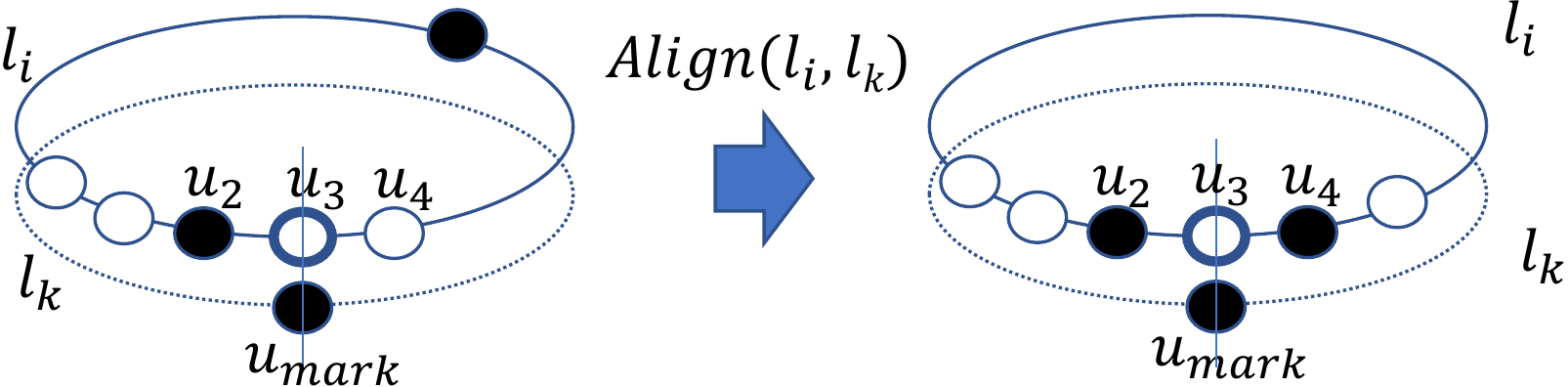}
            \caption{Align($\ell_i, \ell_k$) when $nb_{\ell_i}(C)=2$}\label{fig:Align2r}
        \end{center}
    \end{figure}
       
      \item $nb_{\ell_i}(C)=3$. The aim of \textbf{Align}($\ell_i, \ell_k$) is to create a $1$.block of size $3$ whose middle occupied node is on $u_3$ (refer to Figure \ref{fig:Align3r}). To this end, the robots behave as follows: Let $r_1, r_2$ and $r_3$ be the three robots on $\ell_{i}$. 
      \begin{itemize}
          \item If $u_3$ is occupied (assume without loss of generality that $r_1$ is on $u_3$) then $r_2$ and $r_3$ are both allowed to move. Assume without loss of generality that there is an empty path on $\ell_i$ between $r_2$ and $u_2$ respectively $r_3$ and $u_4$. The destination of $r_2$ (resp. $r_3$) is its adjacent node on $\ell_i$ toward $u_2$ (resp. $u_4$). 
          \item If $u_3$ is empty, then the aim of the robots is to make $u_3$ occupied without creating a tower. To this end, we identify two special cases $c1$ and $c2$. 
          \begin{itemize} \item In Case~$c1$: $r_1$, $r_2$ and $r_3$ form a $1$.block and the two extremities of the $1$.block are equidistant to $u_3$ (Assume without loss of generality that $r_1$ and $r_3$ are at the ones at the borders of the $1$.block, refer to Figure~\ref{fig:Semi-equidistant}). Both $r_1$ and $r_3$ are allowed to move. Their respective destination is their adjacent node on $\ell_i$ outside the $1$.block they belong to. 
          \item In Case~$c2$: without loss of generality, $r_1$ and $r_2$ form a $1$.block while $r_3$ is at distance $2$ from $r_2$. Moreover, $dis(r_1,u_3)=dis(r_3,u_3)+1$  (refer to Figure \ref{fig:Semi-equidistant}). Robot $r_1$ is the only one allowed to move, its destination is its adjacent empty node. Observe that Case~$c2$ can be reached from Case~$c1$ when a unique robot moves outside the block it belongs to. Case $c2$ ensures that the second robot that was supposed to move, also moves.
          \end{itemize}
          Finally, if neither Case~$c1$ nor Case~$c2$ hold then, let $R_m$ be the set of robots that are the closest $u_3$. If $|R_m|=2$ then the third robot (not in $R_m$), say $r_2$, is used to break the symmetry, \ie the robot that is allowed to move is the one that is the closest to $r_2$. Its destination is its adjacent empty node on $\ell_i$ on the empty path toward $u_3$. If $r_2$ is equidistant from both robots in $R_m$ then $r_2$ first moves to one of its adjacent empty nodes on $\ell_i$ (the choice is made by the adversary), the symmetry is then broken. If $|R_m|=1$, then the unique robot on $R_m$ moves to its adjacent empty node on its current $\ell$-ring taking the shortest path to $u_3$. 
          \end{itemize}

         \begin{figure}[H]
        \begin{center}
            \includegraphics[scale=0.55]{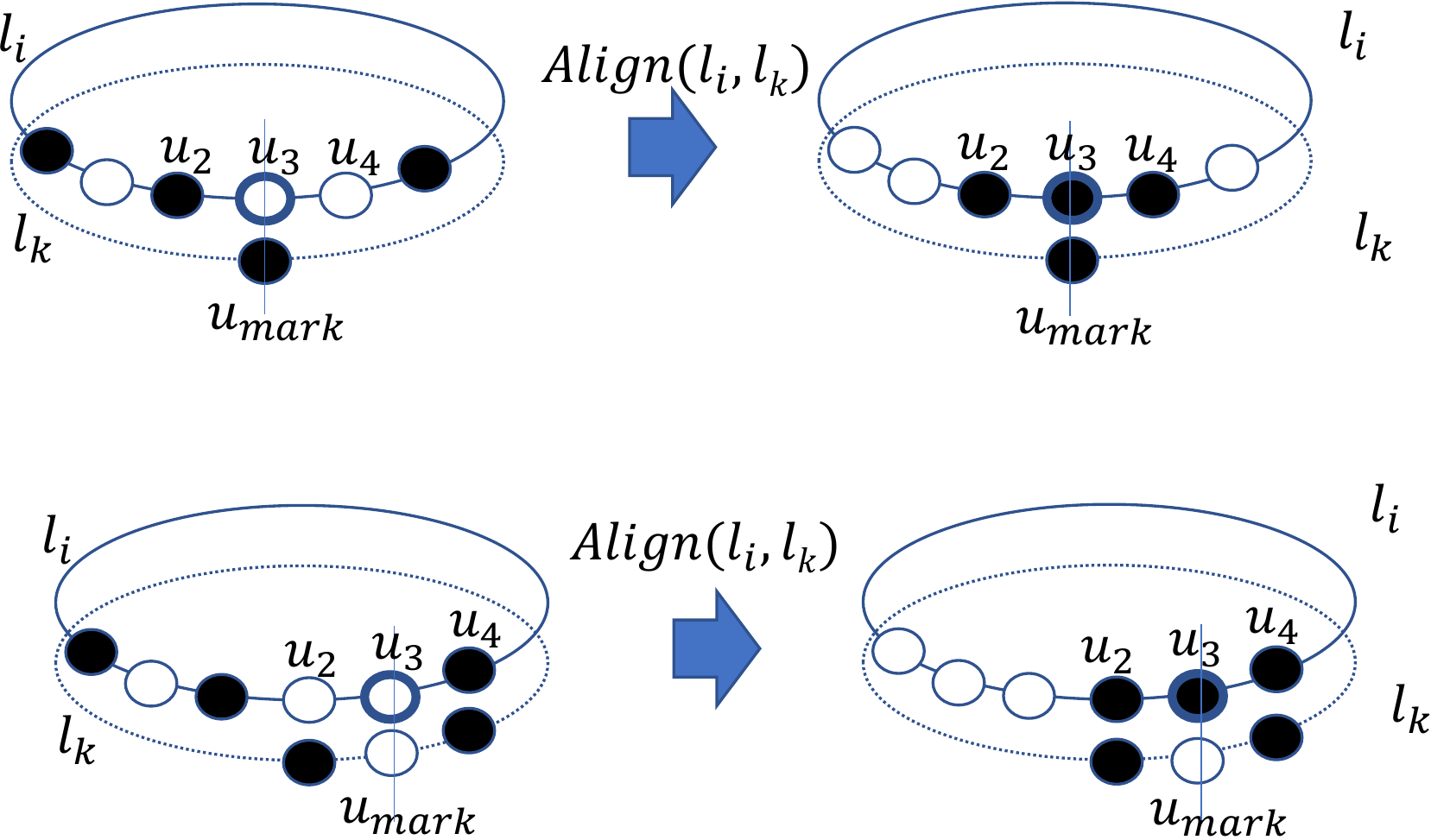}
            \caption{Align($\ell_i, \ell_k$) when $nb_{\ell_i}(C)=3$}\label{fig:Align3r}
        \end{center}
    \end{figure}
    
      \begin{figure}[htb]
        \begin{center}
            \includegraphics[scale=0.46]{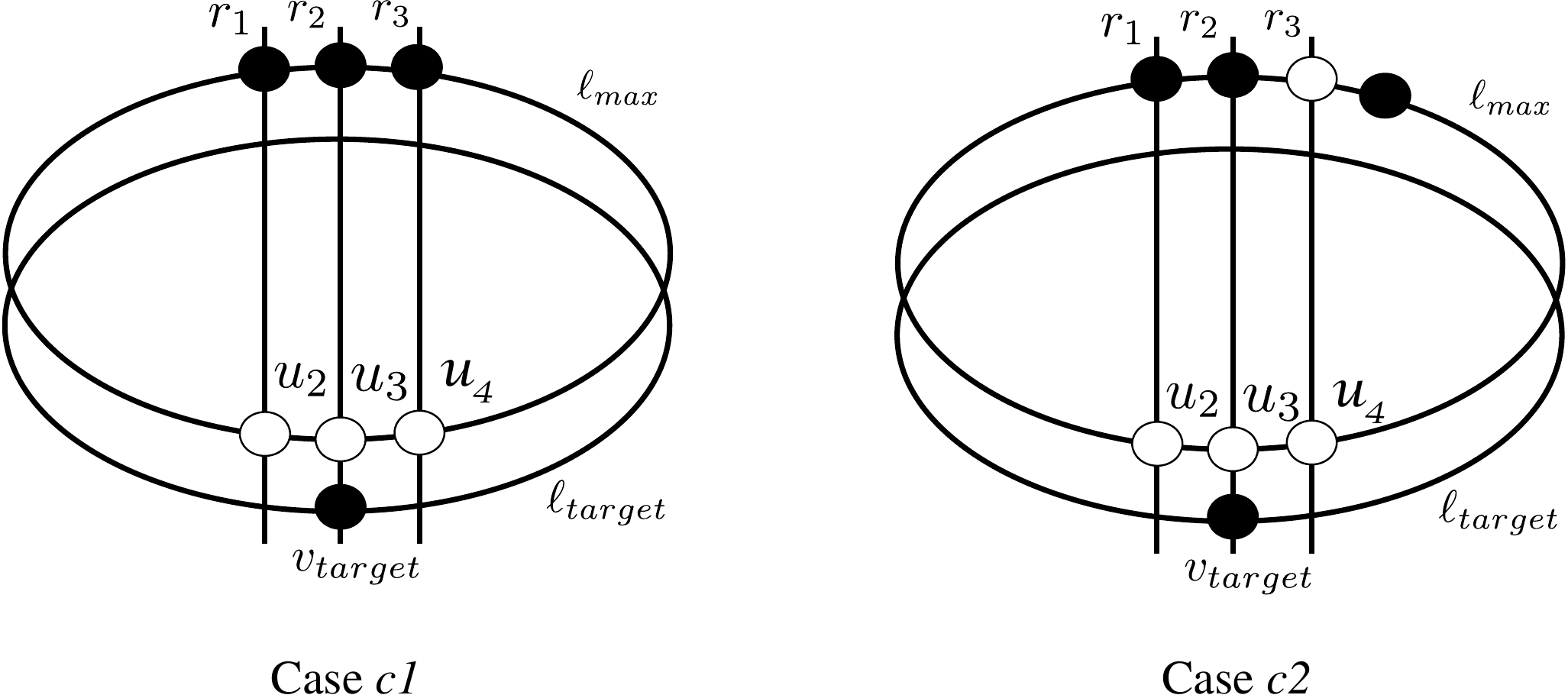}
            \caption{Special cases $c1$ and $c2$}\label{fig:Semi-equidistant}
        \end{center}
    \end{figure}

      \item $nb_{\ell_i}(C)=4$. The aim of \textbf{Align}($\ell_i, \ell_k$) is to create two $1$.blocks of size $2$ being at distance $2$ from each other in such a way that the unique empty node between the two 1.blocks is on the same $L$-ring as $u_{mark}$ (refer to Figure \ref{fig:Align4r}). Let $r_1$, $r_2$, $r_3$ and $r_4$ be the four robots on $\ell_i$, we distinguish the following two cases: 
             \begin{enumerate}
				\item Node $u_3$ is empty. Let $\rightarrow$ and $\leftarrow$ be two directions of the $\ell$-ring starting from~$u_3$. Let $r_1 \leq r_2 \leq r_3 \leq r_4$ be the ordering of robots according to their distance to $u_3$ with respect to a given direction $\rightarrow$, \ie $r_1$ is the closest to $u_3$ with respect to $\rightarrow$ while $r_4$ is the farthest one. Observe that the order of the robots according to their distance to $u_3$ with respect $\leftarrow$ is $r_4 \leq r_3 \leq r_2 \leq r_1$.  If $u_2$ (respectively $u_4$) is empty then $r_1$ (respectively $r_4$) moves toward $u_3$ with respect to the direction $\rightarrow$ (respectively $\leftarrow$). If $u_2$ is occupied while $u_1$ is empty (respectively $u_4$ is occupied while $u_5$ is empty) then $r_2$ (respectively $r_3$) moves toward $u_1$ (respectively $u_5$) with respect to the direction $\rightarrow$ (respectively $\leftarrow$). 
    			 \item Note $u_3$ is occupied. (i) If both $u_2$ and $u_4$ are empty then, the robots on $u_3$ moves to one of its adjacent nodes on $\ell_i$ (either $u_2$ or $u_4$, the choice is made by the scheduler). (ii) if without loss of generality, $u_2$ is empty while $u_4$ is occupied then the robot on $u_3$ moves to $u_2$. Finally, (iii) if both $u_2$ and $u_4$ are occupied then since $nb_{\ell_i}(C)=4$, we are sure that either $u_1$ or $u_5$ is empty. If without loss of generality, if only $u_1$ is empty then the robot on $u_2$ moves to $u_1$. If both $u_1$ and $u_5$ are empty  then assume that $r_1$ is the robot not part of the $1$.block of size $3$ on $\ell_i$. If $dis(u_1,r_1)=dis(u_5,r_1)$ then $r_1$ moves to one of its adjacent empty node on $\ell_i$ (the choice is made by the adversary). By contrast, if without loss of generality, $dis(u_1,r_1)< dis(u_5,r_1)$, then $r_1$ moves to its adjacent empty node toward $u_1$.
 
              \end{enumerate}
              
              \begin{figure}[H]
                \begin{center}
                        \includegraphics[scale=0.55]{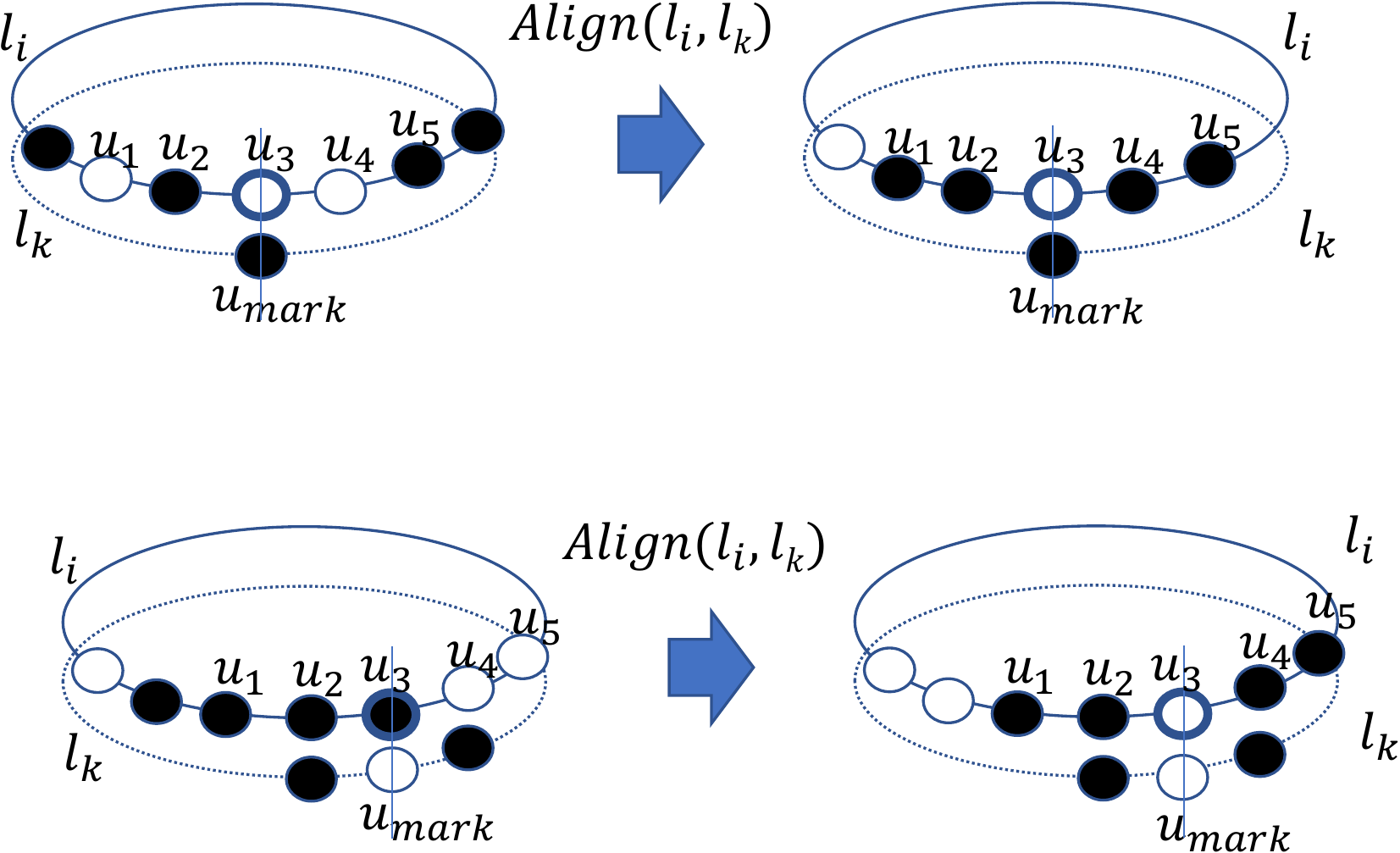}
                        \caption{Align($\ell_i, \ell_k$) when $nb_{\ell_i}(C)=4$}\label{fig:Align4r}
                 \end{center}
            \end{figure}
    
    \item $nb_{\ell_i}(C)=5$. The aim of \textbf{Align}($\ell_i, \ell_k$) is to create a $1$.block of size $5$ whose middle robot is on the same $L$-ring as $u_{mark}$.  Let $\rightarrow$ and $\leftarrow$ be two directions of the $\ell$-ring starting from~$u_3$. Let $r_1 \leq r_2 \leq r_3 \leq r_4 \leq r_5$ be the ordering of robots according to their distance to $u_3$ with respect to a given direction $\rightarrow$, \ie $r_1$ is the closest to $u_3$ with respect to $\rightarrow$ while $r_5$ is the farthest one. 
    \begin{itemize}
        \item If $u_3$ is occupied, then $u_3$ hosts $r_1$ by assumption. If $u_2$ (respectively $u_5$) is empty then $r_2$ (respectively $r_5$) is allowed to move. Its destination is its adjacent node on $\ell_i$ towards $u_2$ (respectively $u_4$) taking the empty path. If $u_2$ (respectively $u_4$) is occupied then robot $r_3$ (respectively $r_4$) is allowed to move. Its destination is its adjacent node on $\ell_i$ toward $u_2$ (respectively $u_4$) taking the empty path. 
        \item If $u_3$ is empty, then as for the case in which $nb_{\ell_i}(C)=3$, robots need to be careful not to create a tower. We first distinguish some special configurations that help us deal with robot with potentially outdated view. These cases are presented in Figures \ref{fig:Align5r1} and \ref{fig:Align5r2} along with the robots to move. If robots are not in any of these special cases then they proceed as follows: Let $R$ be the set of the robots on $\ell_i$ which are the closest to $u_3$. If $|R|=1$ then the unique robot in $R$ moves to its adjacent empty node toward $u_3$ taking the shortest path. By contrast, if $|R|=2$ then by assumption $r_1$ and $r_5$ are equidistant from $u_3$. To choose the one to move, the two robots compare dist($r_1$,$r_3$) and dist($r_5$,$r_3$). If dist($r_1$,$r_2$) $=$ dist($r_5$,$r_4$) then $r_3$ moves to one of its adjacent empty node on $\ell_i$ (Observe that the case in which there is no such empty node is handled by the special configurations identified in Figures \ref{fig:Align5r1} and \ref{fig:Align5r2}). Finally, if without loss of generality, $r_3$ is closer to $r_1$ than $r_5$, then $r_1$ is the one to move. 
    \end{itemize}
    
        \begin{figure}[H]
                \begin{center}
                        \includegraphics[scale=0.46]{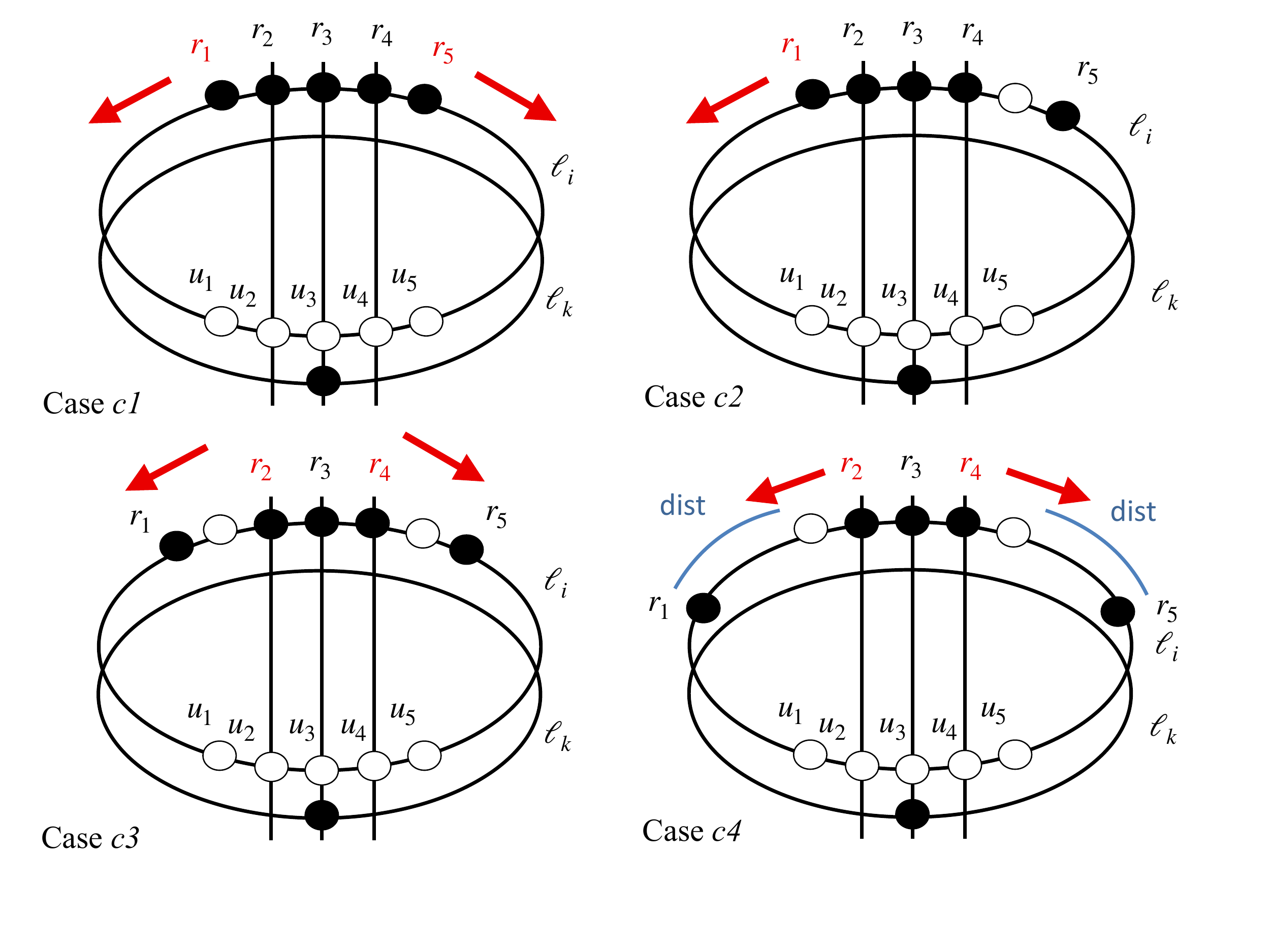}
                        \caption{Align($\ell_i, \ell_k$) when $nb_{\ell_i}(C)=5$}\label{fig:Align5r1}
                 \end{center}
            \end{figure}
            
              \begin{figure}[H]
                \begin{center}
                        \includegraphics[scale=0.46]{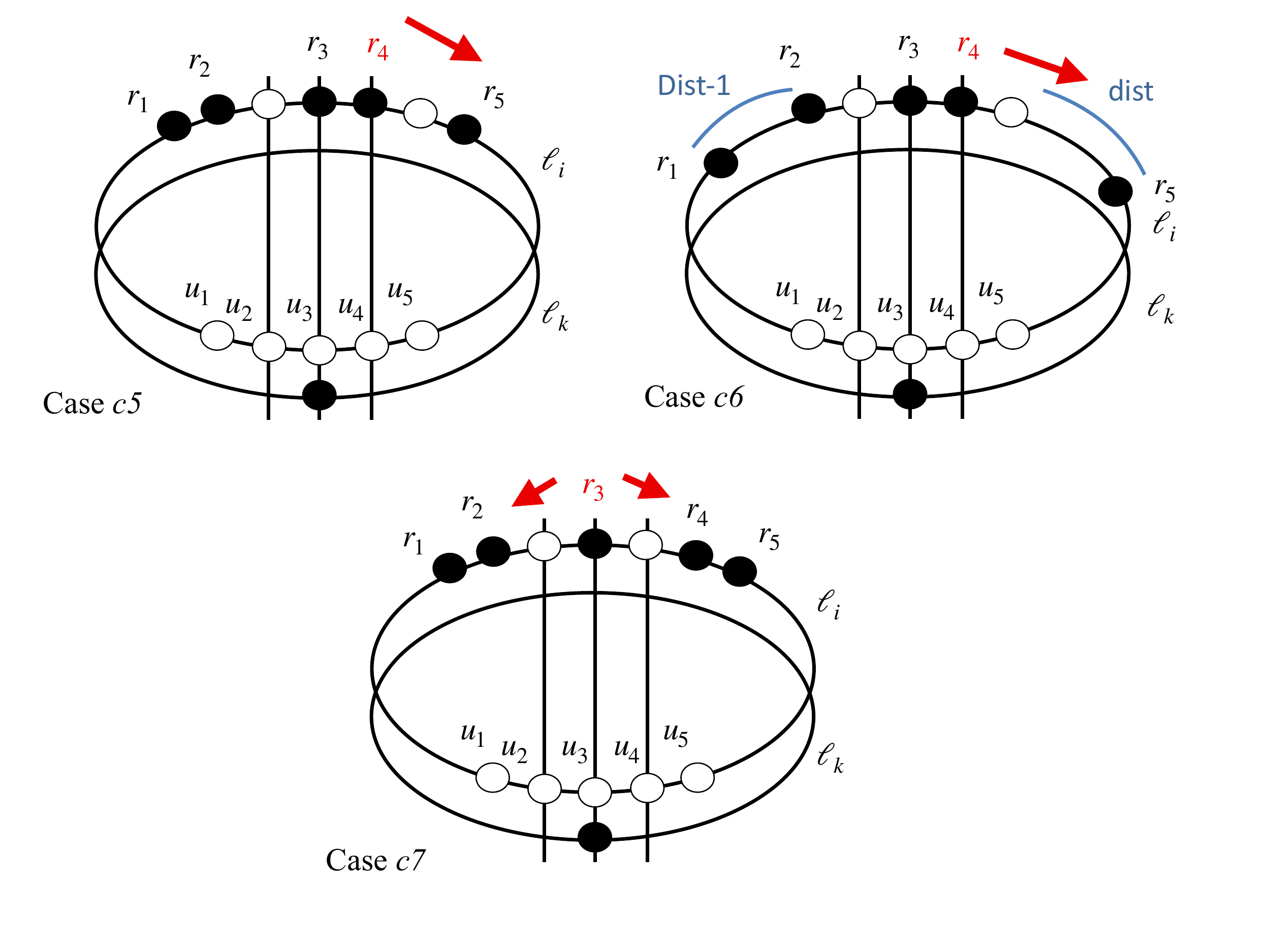}
                        \caption{Align($\ell_i, \ell_k$) when $nb_{\ell_i}(C)=5$}\label{fig:Align5r2}
                 \end{center}
            \end{figure}

  \end{enumerate}




\subsection{Preparation phase}\label{subsec:preparation}
Let $C \in \mathcal{C}_{p1}$. The main purpose of this phase is to reach a configuration $C' \in C_{\mathit{target}}$ from $C$. For this aim, robots first decrease the number of maximal $\ell$-rings to reach a configuration $C''$ in which Unique($C''$) is true. From $C''$, a configuration $C' \in C_{\mathit{target}}$ is then created. In the case in which Unique($C$) is true, we refer to the unique maximal $\ell$-ring in $C$ by $\ell_{\mathit{max}}$ and to the two adjacent $\ell$-rings of $\ell_{\mathit{max}}$ by respectively $\ell_k$ and $\ell_i$. To ease the description of our this phase, we distinguish five main sets of configurations when \textbf{Unique($C$) is true}: 

\begin{enumerate}
\item  Set $\mathcal{C}_{Empty}$: $C \in \mathcal{C}_{Empty}$ if $nb_{\ell_i}(C)= nb_{\ell_k}(C)=0$.
 
\item Set $\mathcal{C}_{Semi-Empty}$: $C \in \mathcal{C}_{Semi-Empty}$ if without loss of generality $nb_{\ell_i}(C)= 0$ and $nb_{\ell_k}(C)>1$.

\item Set $\mathcal{C}_{Oriented}$: $C \in \mathcal{C}_{Oriented}$ if without loss of generality $nb_{\ell_i}(C)= 1$ and $nb_{\ell_k}(C)>1$. Set $\mathcal{C}_{Oriented}$ includes: 
          \begin{enumerate}
             \item $\mathcal{C}_{Oriented-1}$. In this case either (i) $nb_{\ell_k}(C)=3$ and $\ell_k$ contains a $1$.block of size $3$ whose middle robot is on the same $L$-ring as the unique occupied node on $\ell_i$. (ii) $nb_{\ell_k}(C)=2$ and $\ell_k$ contains a $2$.block. Moreover, the unique empty node in the $2$.block is on the same $L$-ring as the unique robot on $\ell_i$. 
             
             \item $\mathcal{C}_{Oriented-2}$. Contains all the configuration in $\mathcal{C}_{oriented}$ that are not in $\mathcal{C}_{Oriented-1}$. That is, $\mathcal{C}_{Oriented-2}= \mathcal{C}_{Oriented} - \mathcal{C}_{Oriented-1}$.
          \end{enumerate}

\item Set $\mathcal{C}_{Semi-Oriented}$: $C \in \mathcal{C}_{Semi-Oriented}$ if without loss of generality $nb_{\ell_i}(C)= 1$ and $nb_{\ell_k}(C)=1$. 

\item Set $\mathcal{C}_{Undefined}$: $C \in \mathcal{C}_{Undefined}$ if $nb_{\ell_i}(C)>1$ and $nb_{\ell_k}(C)>1$.
\end{enumerate}

\subsubsection*{Robots behavior} 

We describe in the following robots behavior during the preparation phase. Let $C$ be the current configuration. Recall that if Unique($C$) is false then robots first aim at decreasing the number of maximal $\ell$-rings to reach a configuration $C'$ in which $Unique(C')$ is true. From there, robots create a configuration $C'' \in \mathcal{C}_{\mathit{target}}$. That is, we distinguish the following two cases: 

\begin{enumerate}

    \item \textbf{Unique($C$) is false}. The idea of the algorithm is to reduce the number of maximal $\ell$-rings while keeping the configuration rigid. We distinguish two cases: 
        \begin{enumerate}
            \item $nb_{\ell_{\mathit{max}}}(C)=\ell$. Using the rigidity of $C$, a single maximal $\ell$-ring is selected and a single robot on this $\ell$-ring is selected to move. Its destination is its adjacent occupied node on its current $\ell$-ring. The robot to move is the one that keeps the configuration rigid (the existence of such a robot is proven in Lemma \ref{lem:L--}).  
        
            \item $nb_{\ell_{\mathit{max}}}(C)<\ell$. As there is at least one empty node on each maximal $\ell$-ring, the idea is to fill exactly one of these nodes. Let $R$ be the set of robots that are the closest to an empty node on a maximal $\ell$-ring. Under some conditions, using the rigidity of $C$, one robot of $R$, say $r$, is elected to move. Its destination is its adjacent empty node toward the closest empty node on a maximal $\ell$-ring, say $u$, taking the shortest path. Among robots in the set $R$, the robot to move is the one that does not create a symmetric configuration. If no such robot exists in $R$ then, some extra steps are taken to make sure that the configuration remains rigid. We discuss the various cases in what follows : 
        \begin{itemize}
        \item Assume that $C$ contains exactly two occupied $\ell$-rings. This means that $C$ contains two maximal $\ell$-rings and $r$ belongs to a maximal $\ell$-ring. Using the rigidity of $C$, one robot is elected to move, its destination is its adjacent empty node on an empty $\ell$-ring. 
        \item If $C$ contains more than two occupied $\ell$-rings then the robots proceed as follows: let $r$ be the robot in $R$ with the largest view. By $u$ and target-$\ell$ we refer to respectively the closest empty node on a maximal $\ell$-ring to $r$ and the $\ell$-ring including $u$. If by moving, $r$ does not create a symmetric configuration then $r$ simply moves to its adjacent node toward $u$ taking the shortest path. By contrast, if by moving $r$ creates a symmetric configuration then let $C'$ be the configuration reached once $r$ moves. Using configuration $C'$ that each robot can compute without $r$ moving, a robot in $C$ is selected to move. 
            We show later on that a symmetric configuration can only be reached when $r$ joins an empty node on the same $L$-ring as $u$ for the first time or when it joins $u$. For the other cases, the configuration remains rigid (refer to Lemma \ref{lem:uniquelessl}, Claims 1 and 2). Hence, we only consider the following two cases: 
            \begin{enumerate}
            
                \item  Robot $r$ joins an empty node on the same $L$-ring as $u$ for the first time in $C'$. In this case, in $C$, the robot that is on target-$\ell$ being on the same $L$-ring as $r$ moves to $u$ (refer to Figure \ref{fig:joinL}).
            
                 \begin{figure}[htb]
                    \begin{center}
                        \includegraphics[scale=0.3]{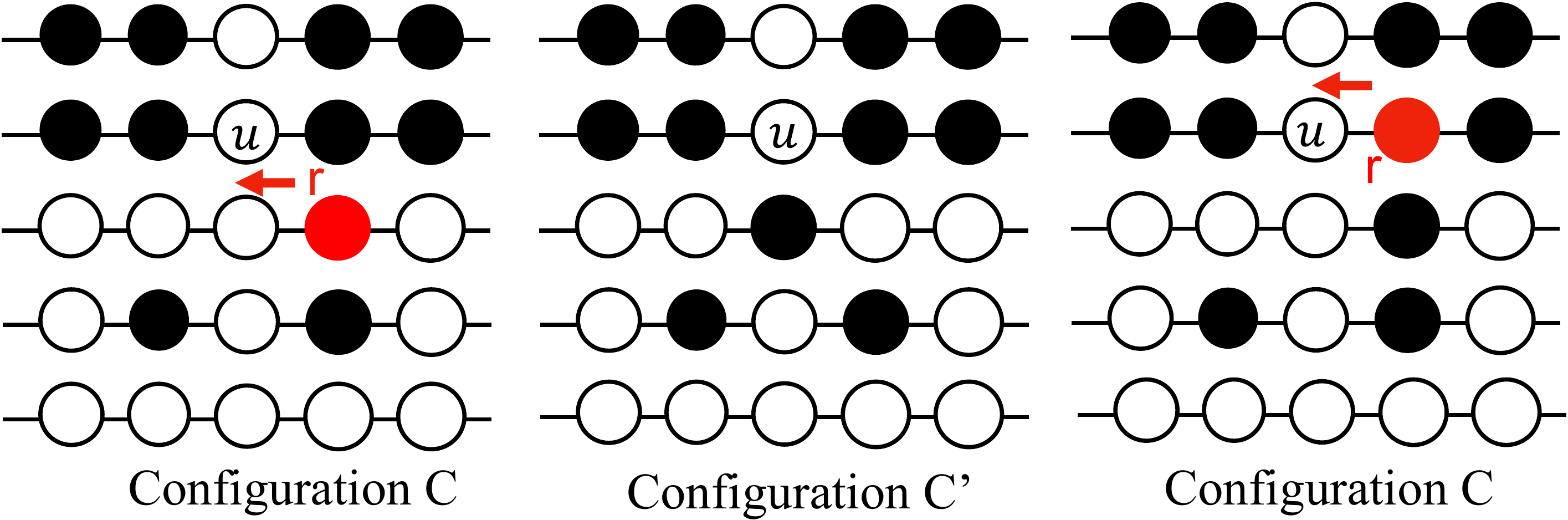}
                        \caption{On the left, $r$ is suppose to move but by moving, it creates a symmetric configuration $C'$ shown in the middle. The robot on target-$\ell$ on the same $L$-ring as $r$ moves to $u$.}\label{fig:joinL}
                    \end{center}
                \end{figure}

                \item Robot $r$ joins $u$ in $C'$. If in $C'$ there are only two occupied $\ell$-rings then using the rigidity of $C$, one robot from a maximal $\ell$-ring is elected to move. Its destination is its adjacent empty node on an empty $\ell$-ring. By contrast, if there are more than two occupied $\ell$-rings in $C'$ then robots proceed as follows: 
                \begin{itemize}
            \item If the axes of symmetry lies on the unique $\ell_{max}$ in $C'$ then we are sure that there are two $\ell$-rings which are maximal in $C$ and that are symmetric with respect to the unique maximal $\ell$-ring in $C'$. Using the rigidity of $C$, one robot from such an $\ell$-ring is allowed to move. Its destination is its adjacent empty node on its current $\ell$-ring. 
            \item If the axes of symmetry is perpendicular to the unique maximal $\ell$-ring in $C'$ then let $T$ be the set of occupied $\ell$-rings in $C$ without target-$\ell$. If there is an $\ell$-ring in $T$ which does not contain two $1$.blocks separated by a single empty node on each side then using the rigidity of $C$, a single robot on such an $\ell$-ring which is the closest to the biggest $1$.block is elected to move. Its destination is the closest $1$.block. If there no such $\ell$-ring in $T$ (all $\ell$-rings contains two $1$.blocks separated by a unique empty node, then using the rigidity of $C$, one robot being on an $\ell$-ring of $T$ who has an empty node as a neighbor on its $\ell$-ring is elected to move. Its destinations is its adjacent empty node on its current $\ell$-ring.
            \end{itemize}
            \end{enumerate}   
            Note that we only discussed the cases in which the reached configuration is either rigid and symmetric. We will show in the correctness proof that when $r$ moves, it cannot create a periodic configuration. This is mainly due to the fact that in $C'$ there is a unique maximal $\ell$-ring and $C$ is assumed to be rigid.
        \end{itemize}
        \end{enumerate}

    \item \textbf{Unique($C$) is true}. From $C$, robots aim to create a configuration $C' \in \mathcal{C}_{\mathit{target}}$. Let $\ell_{\mathit{max}}$ be the unique maximal $\ell$-ring and let $\ell_i$ and $\ell_k$ be the two adjacent $\ell$-rings of $\ell_{\mathit{max}}$. We use the set of configurations defined previously to describe robots behavior:
    
    \begin{enumerate}
         \item $C \in \mathcal{C}_{Empty}$. Let $\ell_{n_i}$ and $\ell_{n_k}$ be the two neighboring $\ell$-rings of $\ell_{\mathit{max}}$ (one neighboring $\ell$-ring from each direction of the torus). Observe that in the case where $\ell_{n_i}=\ell_{n_k}=\ell_{\mathit{max}}$, then $C$ contains a single $\ell$-ring that is occupied. Using the rigidity of~$C$, one robot from $C$ is selected to move to its adjacent empty node outside its $\ell$-{ring} (the scheduler chooses the direction to take: move toward $\ell_i$ or $\ell_k$). Otherwise, let $R_m$ be the set of robots which are the closest to either $\ell_{i}$ or $\ell_{k}$. If $|R_m|=1$ then, the unique robot in $R_m$, referred to by $r$, is the one allowed to move. Assume without loss of generality that $r$ is the closest to $\ell_{i}$. The destination of $r$ is its adjacent empty node outside its current $\ell$-ring on the shortest empty path toward $\ell_i$. If $r$ is the closest to both $\ell_{i}$ and $\ell_k$ then the scheduler chooses the direction to take (it moves either toward $\ell_i$ or $\ell_{k}$). In the case where $|R_m|>1$ ($R_m$ contains more than one robot) then, by using the rigidity of $C$, one robot $r$ is selected to move. Its behavior is the same as $r$ in the case where $|R_m|=1$.
     
        \item $C \in \mathcal{C}_{Semi-Empty}$. Without loss of generality $nb_{\ell_k}(C)>1$ and $nb_{\ell_i}(C)=0$. We distinguish two cases as follows:
        \begin{enumerate} 
              \item $nb_{\ell_k}(C)> 3$ or $nb_{\ell_k}(C)=2$. Recall that $C \not \in \mathcal{C}_{\mathit{target}}$. Let "$\uparrow$" be the direction defined from $\ell_{\mathit{max}}$ to $\ell_{k}$ taking the shortest path and let $\ell_n$ be the $\ell$-ring that is neighbor of $\ell_{i}$. Observe that $\ell_{n}=\ell_k$ is possible (in the case where only two $\ell$-rings are occupied in $C$). Using the rigidity of configuration $C$, one robot from $\ell_n$ is elected. This robot is the one allowed to move, its destination is its adjacent node outside $\ell_n$ and towards $\ell_i$ with respect to the direction $\uparrow$. 

            \item  $nb_{\ell_k}(C)=3$. Again, recall that $C \not \in \mathcal{C}_{\mathit{target}}$. The aim is to make the three robots form a single $1$.block. To this end, if the configuration contains a single $d$.block of size~$3$ with $d > 1$ then the robot in the middle of the $d$.block moves to its adjacent node on $\ell_k$ (the scheduler chooses the direction to take). By contrast, if the configuration contains a single $d$.block of size $2$ ($d \geq 1$) then the robot not part of the $d$.block moves towards its adjacent empty node towards the $d$.block taking the shortest empty path.  
        \end{enumerate} 

    \item $C \in \mathcal{C}_{Oriented}$. Let $r_i$ be the single robot on $\ell_i$. Recall that two sub-sets of configurations are defined in this case:
    \begin{enumerate} 

        \item $C \in \mathcal{C}_{Oriented-1}$. 
           If $nb_{\mathit{max}}(C)>4$ then the unique robot on $\ell_i$  moves to its adjacent node on $\ell_{\mathit{max}}$. Otherwise, let $u$ be the node on $\ell_{\mathit{max}}$ which is adjacent to the unique robot on $\ell_i$.
           \begin{itemize}
           \item If $nb_{\mathit{max}}(C)=3$ and the robots form a $1$.block of size $3$ whose middle robot is adjacent to $u$ then the unique robot on $\ell_i$ moves to its adjacent node on $\ell_{\mathit{max}}$. Otherwise, robots on $\ell_{\mathit{max}}$ execute \textbf{Align}$(\ell_{\mathit{max}}, \ell_i)$. 
           \item If $nb_{\mathit{max}}(C)=4$ and $u$ is empty, then the unique robot on $\ell_i$  moves to $u$. Otherwise ($u$ is occupied), then let $r$ be the robot on $u$. 
           \begin{itemize}
           \item If $r$ has an adjacent empty node on $\ell_{\mathit{max}}$ then $r$ moves to one of its adjacent nodes (the scheduler chooses the node to move to in case of symmetry). 
           \item If $r$ does not have an adjacent empty node on $\ell_{\mathit{max}}$, then let $r'$ be the robot on $\ell_{\mathit{max}}$ which is adjacent to $r$ and which does not have a neighboring robot on $\ell_{\mathit{max}}$ at distance $\lfloor \ell / 2\rfloor $. Robot $r'$ is the one allowed to move. Its destination is its adjacent empty node on $\ell_{\mathit{max}}$. 
           \end{itemize}
               \end{itemize}
            
            \item $C \in \mathcal{C}_{Oriented-2}$. If $nb_{\ell_k}(C)=2$ or $nb_{\ell_k}(C)=3$ then \textbf{Align}($\ell_k$, $\ell_i$) is executed. Otherwise, if $nb_{\ell_k}(C)>3$ then, $nb_{\ell_{k}}(C)-2$ robots gather on the node $u_k$ located on $\ell_k$ and which is on the same $L$-ring as the unique occupied node on $\ell_i$. For this purpose, the robot on $\ell_k$ which is the closest to $u_k$ with the largest view is the one allowed to move. Its destination is its adjacent node on $\ell_k$ toward $u_k$.  

       \end{enumerate} 

\item $C \in \mathcal{C}_{Semi-Oriented}$. Let $\ell_{n_i}$ and $\ell_{n_k}$ be the  two neighboring $\ell$-rings of $\ell_{i}$ and $\ell_{k}$ respectively. 
\begin{itemize}
\item If without loss of generality, $\ell_{i}=\ell_{n_k}$ (and hence $\ell_{k}=\ell_{n_i}$). Then configuration $C$ contains only 3 occupied $\ell$-rings $\ell_i$, $\ell_{\mathit{max}}$ and  $\ell_{j}$. Using the rigidity of $C$, one robot from either $\ell_{n_i}$ or $\ell_{n_k}$ (not both) is selected to move. Its destination is its adjacent empty node outside its current $\ell$-ring in the opposite direction of $\ell_{\mathit{max}}$. 
\item In the case where  $\ell_{i}\ne\ell_{n_k}$ (and hence $\ell_{k}\ne\ell_{n_i}$) then, again, by using the rigidity of $C$, a unique robot is selected from either $\ell_{n_i}$ or either $\ell_{n_k}$ (not both) to move. Its destination is its adjacent empty node outside its current $\ell$-ring toward $\ell_i$ (respectively $\ell_k$) if the robot was elected from $\ell_{n_i}$ (respectively $\ell_{n_k}$). 
If $\ell_{n_i}=\ell_{n_k}$ then the direction of the selected robot is chosen by the adversary.
\end{itemize}
\item $C \in \mathcal{C}_{Undefined}$. Depending on the number of robots on $\ell_i$ and $\ell_k$, we distinguish the following two cases: 
    \begin{enumerate}
        \item $nb_{\ell_i}(C) < nb_{\ell_k}(C)$. The idea is to make robots on $\ell_i$ gather on a single node on $\ell_i$. We define in the following a configuration, denoted $\Gamma(C)$, built from $C$ ignoring some $\ell$-rings and that will be used in order to identify a single node on $\ell_i$ on which all robots on $\ell_i$ will gather. In the case in which there are at least four occupied $\ell$-rings in $C$ then $\Gamma(C)$ is the configuration built from $C$ ignoring both $\ell_i$ and $\ell_k$. By contrast, if there are only three occupied $\ell$-rings then $\Gamma(C)$ is the configuration built from $C$ ignoring only $\ell_i$. The following cases are possible:
            \begin{enumerate}
                \item Configuration $\Gamma(C)$ is rigid. Using the rigidity of $\Gamma(C)$, one node on $\ell_i$ is elected as the gathering node. Let us refer to such a node by $u$. Robots on $\ell_i$ moves to join $u$ starting from the closest ones and taking the shortest path. 
                \item Configuration $\Gamma(C)$ has exactly one axes of symmetry. In this case the axes of symmetry of $\Gamma(C)$ either intersects with $\ell_i$ on a single node (edge-node symmetric), or on two nodes (node-node symmetric) or only on edges (edge-edge symmetric). We discuss in the following each of the mentioned cases:
                \begin{itemize}
                    \item $\Gamma(C)$ is node-edge symmetric:  The single node on $\ell_i$ that is on the axes of symmetry of $\Gamma$ is identified as the gathering node. Robots on $\ell_i$ move to join this node starting from the closest robots and taking the shortest path. 
                    
                    \item $\Gamma(C)$ is node-node symmetric:  Let $u_1$ and $u_2$ be the two nodes on $\ell_i$ on which the axes of symmetry passes through. If both nodes are occupied, then using the rigidity of $C$, exactly one of the two nodes is elected. Assume without loss of generality that $u_1$ is elected. Robots on $u_1$ move to their adjacent node. If both $u_1$ and $u_2$ are empty then let $R$ be the set of robots on $\ell_i$ that are at the smallest distance from either $u_1$ or $u_2$.  If $|R|=1$ (Let $r$ be the robot in $R$ and assume without loss generality that $r$ is the closest to $u_1$) then, $r$ moves to its adjacent node on its current $\ell$-ring toward $u_1$ taking the shortest path. By contrast, if $|R|>1$ then using the rigidity of $C$, exactly one robot of $R$ is elected to move. Its destination is its adjacent node on its current $\ell$-ring toward the closest node being on the axes of symmetry of $\Gamma(C)$, taking the shortest path.    
                    
                    \item $\Gamma(C)$ is edge-edge symmetric. Without loss of generality, assume that the axes of symmetry of $\Gamma(C)$ passes through $\ell_i$ on the following two edges $e_1=(u_1,u_2)$ and $e_1=(u_3,u_4)$ with $u_1$ and $u_3$ being on the same side. Let $U=\{u_j,~ j \in [1-4] \}$. We distinguish the following cases:
                    \begin{itemize}
                        \item For all $u \in U$, $u$ is occupied. Using the rigidity of $C$, a single node $u \in U$ is elected. Robots on $u$ move to their adjacent node $u'\in U$ (refer to Figure \ref{fig:e-e1}, (A)).
                           
                        \item Three nodes of $U$ are occupied. Assume without loss of generality that $u_1$ is the empty node of $U$. If there are robots on $\ell_i$ which are located on the same side as $u_1$ and $u_3$ with respect to the axes of symmetry of $\Gamma(C)$ then the robots among these which are the closest to $u_3$ are the ones to move. Their destination is their adjacent node on their current $\ell$-ring toward $u_3$ (refer to Figure (refer to Figure \ref{fig:e-e1}, (B)). By contrast, if there are no robots on $\ell_i$ which are on the same side of $u_1$ and $u_3$ then robots on $u_2$ are the ones allowed to move. Their destination is their adjacent node in the opposite direction of $u_1$ (refer to Figure \ref{fig:e-e1}, (C)).

                    \begin{figure}[htb]
                    \begin{center}
                    \includegraphics[scale=0.35]{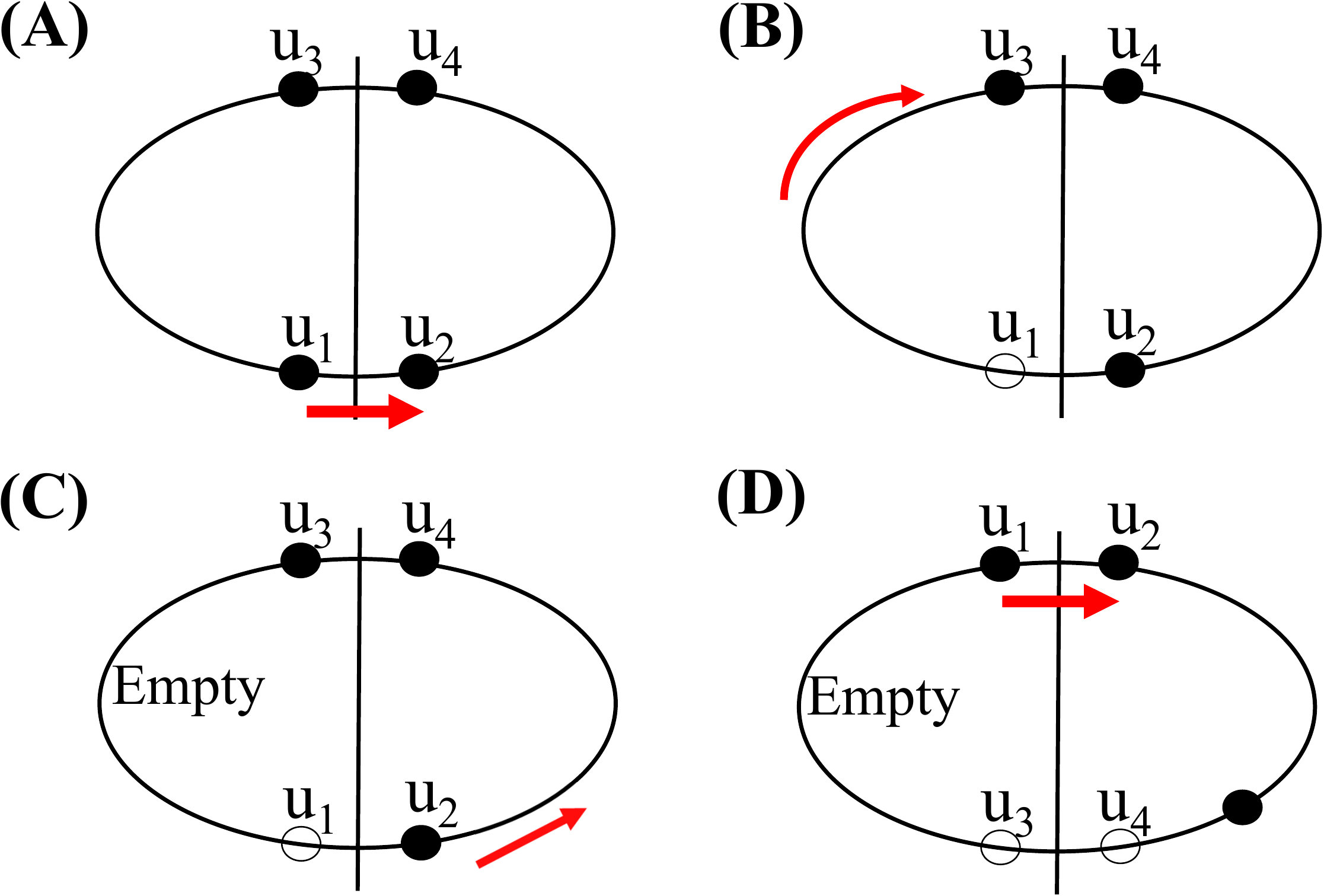}
                    \caption{$\Gamma(C)$ is edge-edge symmetric - Part 1}\label{fig:e-e1}
                    \end{center}
                    \end{figure}
                        
                        \item Two nodes of $U$ are occupied. First, assume without loss of generality that $u_1$ and $u_2$ are occupied (the case where the two nodes are neighbors). If all robots on $\ell_i$ are on the same side of the axes of symmetry (assume without loss of generality that they are at the same side as $u_1$). Robots on $u_2$ are the ones allowed to move. Their destination is their $u_1$ (refer to Figure \ref{fig:e-e2}, (A)). By contrast, if there are robots on both sides of the axes of symmetry of $\Gamma(C)$ then let $U'$ be the set of occupied nodes on $\ell_i$ which are the farthest from the occupied node of $U$ which is on the side (of the axes of symmetry). If there are two of such nodes (one at each side), as $C$ is rigid, the scheduler elects exactly one of these two nodes. Let us refer to the elected node by $u$. Robots on $u$ are the ones allowed to move. Their destination is their adjacent node on their current $\ell$-ring towards the occupied node of $U$ which is on their side (refer to Figure \ref{fig:e-e2}, (B)). By contrast if there is only one node in $U'$ then, robots on the other side of the axes of symmetry are the ones allowed to move starting starting from the robots that are the closest to the occupied node of $U$ which is on their side. Their destination is their adjacent node on their current $\ell$-ring toward the occupied node of $U$ on their side (refer to Figure \ref{fig:e-e2}, (C)). Finally, if there are no robots on both side of the axes of symmetry, then using the rigidity of $C$, one occupied node of $U$ is elected. Robots on the elected node are the ones allowed to move. Their destination is their adjacent occupied node in $U$.
                        
                        Next, assume without loss of generality that $u_1$ and $u_3$ are occupied (the two node of $U$ are not neighbors but are at the same side of $\Gamma(C)$'s axes of symmetry). Robots on a node of $U$ with the largest view are the ones allowed to move. Their destination is their adjacent node in the opposite direction of a node of $U$ (refer to Figure \ref{fig:e-e2}, (D)). Finally, assume without loss of generality that $u_1$ and $u_4$ are occupied (the two nodes of $U$ are not neighbors and are in opposite sides of the axes of symmetry). Robots on a node of $U$ with the largest view are the ones allowed to move. Their destination is their adjacent node in the opposite direction of their adjacent node in $U$ (refer to Figure \ref{fig:e-e2}, (A)). 
                        
                    \begin{figure}[htb]
                    \begin{center}
                    \includegraphics[scale=0.35]{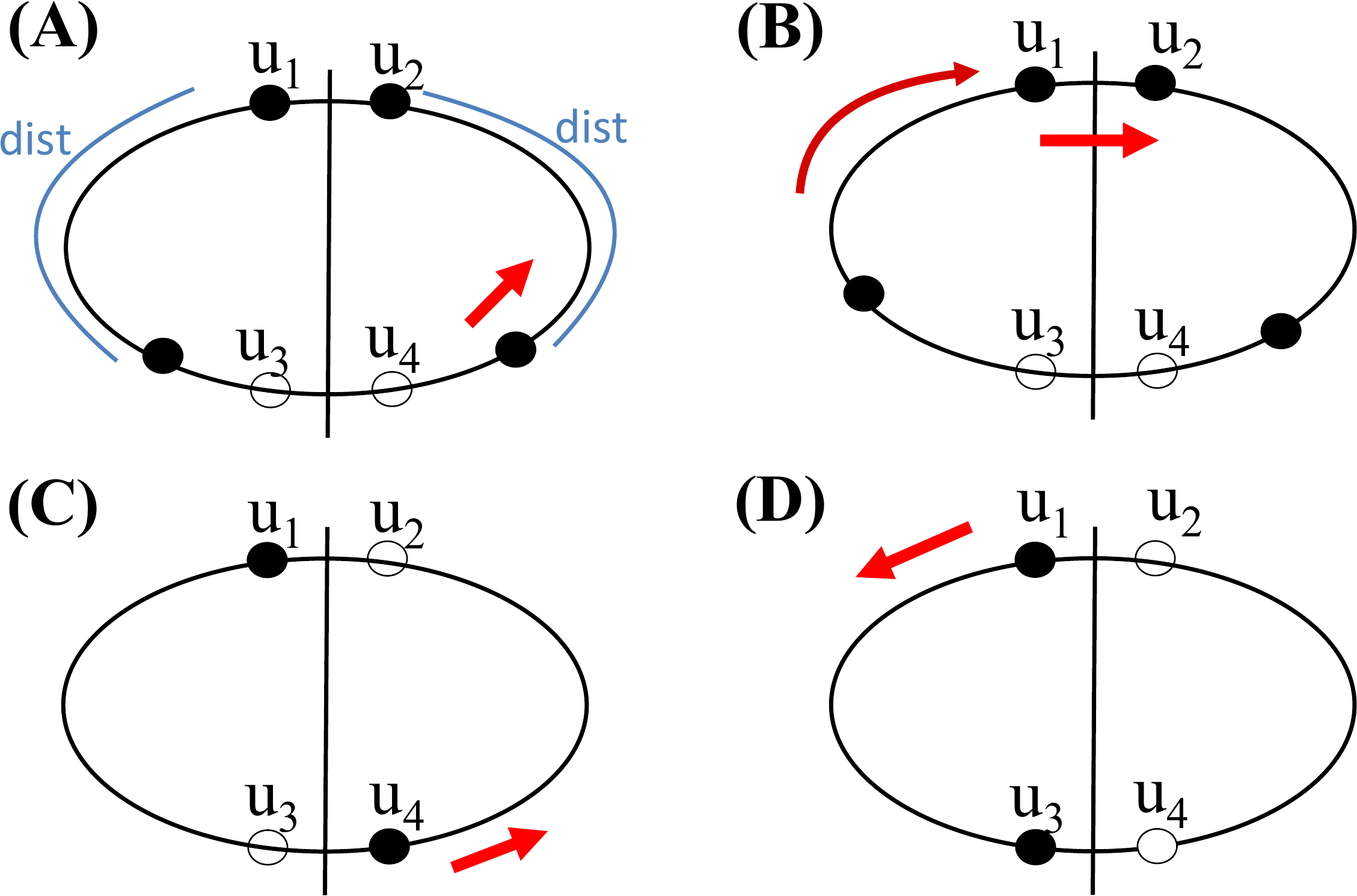}
                    \caption{$\Gamma(C)$ is edge-edge symmetric - Part 2}\label{fig:e-e2}
                    \end{center}
                    \end{figure}
                        
                        \item There is only one node of $U$ that is occupied. Assume without loss of generality that $u_1$ is occupied. If all robots on $\ell_i$ are on the same side of the axes of symmetry as $u_1$ then the closest robots to $u_1$ on $\ell_i$ is allowed to move. Its destination is its adjacent node towards $u_1$ taking the shortest path. By contrast, if all robots on $\ell_i$ are in the opposite side of the axes of symmetry of $u_1$ then robots on $u_1$ are the ones to move. Their destination is $u_2$. Finally, if robots on $\ell_i$ are on both sides of the axes of symmetry then the closest robots to $u_1$ which are on the same side of the axes of symmetry as $u_1$ are the ones allowed to move. Their destination is their adjacent node on $\ell_i$ towards $u_1$ taking the shortest path. 
                        
                        \item All nodes of $U$ are empty. Let $d$ be the smallest distance between a node of $u \in U$ and a robot being on the same side of the axes of symmetry as $u$. Let $R$ be the set of robots that are at distance $d$ from a node $u \in U$. If $|R|=1$ then the unique robot in $R$ moves towards the closest node $u \in U$. By contrast, if $|R|>1$ then, using the rigidity of $C$, a unique robot in $R$ is selected to move. Its destination is its adjacent node toward the closest node $u\in U$.   
                         
                    \end{itemize}
                \end{itemize}
                
                \item Configuration $\Gamma(C)$ has more than one axes of symmetry. Using the rigidity of $C$, a single robot from $\Gamma(C)$ is elected to move. Its destination is its adjacent empty node on its current $\ell$-ring. This reduces the number of axes of symmetries to either $1$ or $0$ (Please refer to Lemma\ref{lem:periodicGamma}). 
            \end{enumerate}
            \item $nb_{\ell_i}(C) = nb_{\ell_k}(C)$. Let $\ell'_i$ and $\ell'_k$ be the two $\ell$-rings that are adjacent to respectively $\ell_i$ and $\ell_k$. Assume without loss of generality that $\ell'_i$ is empty while $\ell'_k$ hosts at least one robot. Using the rigidity of $C$, one robot from $\ell_i$ is elected to move. Its destination is its adjacent empty node on $\ell'_i$. In the case in which both $\ell'_i$ and $\ell'_k$ are empty, using the rigidity of $C$, one robot is elected to move. Assume without loss of generality that the elected robot is on $\ell_i$. The destination of the elected robot is its adjacent empty node on $\ell'_i$. By contrast, if neither $\ell'_i$ nor $\ell'_k$ is empty then, as for the case in which $nb_{\ell_i}(C) < nb_{\ell_k}(C)$, we use configuration $\Gamma(C)$ to elect the robot to move. Recall that $\Gamma(C)$ is defined as the configuration in which both $\ell_i$ and $\ell_k$ are ignored (the set of $\ell$-ring without $\ell_i$ and $\ell_k$). Robots proceed as follows: 
            \begin{enumerate}
                \item Configuration $\Gamma(C)$ is rigid. Using the rigidity of $\Gamma(C)$, a unique node $u$ is elected on either $\ell_i$ or $\ell_k$. Assume without loss of generality that $u$ is elected on $\ell_i$. If $u$ is empty and $nb_{\ell_i}(C)=\ell-1$ then the robot on $\ell_i$ that is adjacent to $u$ and is not part of a multiplicity moves to $u$. Otherwise, robots on $\ell_i$ which are the closest to $u$ move to their adjacent node toward $u$ taking the shortest path.
                \item Configuration $\Gamma(C)$ contains one axes of symmetry. We distinguish the following cases: 
                    \begin{itemize}
                        \item  $\Gamma(C)$ is node-edge symmetric. Let $u$ and $u'$ be the nodes on respectively $\ell_i$ and $\ell_k$ on which the axes of symmetry of $\Gamma(C)$ passes through. If, without loss of generality, only $u$ is occupied then the closest robot to $u$ on $\ell_i$ is the one allowed to move. Its destination is its adjacent node towards $u$ taking the shortest path. If there are two such robots, one is elected using the rigidity of $C$. On another hand, if both $u$ and $u'$ are occupied then using the rigidity of $C$ one closest robot to either $u$ or $u'$ is elected. Assume that the elected robot $r$ is on $\ell_i$. The destination of $r$ is its adjacent node on its current $\ell$-ring towards $u$ taking the shortest path. Finally, if both $u$ and $u'$ are empty then the idea is to make either $u$ or $u'$ occupied. Let $R_1(C)$ (respectively $R_2(C)$) be the set of robots on $\ell_i$ (respectively $\ell_k$) which are not on $u$ (respectively $u'$) but are the closest $u$ (respectively $u'$). Let $R(C) = R_1(C) \cup R_2(C)$. If $|R(C)|>1$ then using the rigidity of $C$ a unique robot of $R(C)$ is elected to move otherwise, the unique robot in $R(C)$ is the one allowed to move. Let us refer to such a robot by $r$ and assume without loss of generality that $r$ is on $\ell_i$. Robot $r$ is the only one allowed to move. Its destination is its adjacent empty node on its current $\ell$-ring toward $u$ taking the shortest path. 
                        
                        
                        \item $\Gamma(C)$ is node-node symmetric.  Let $u_i$ and $u'_i$ (respectively $u_k$ and $u'_k$) be the two nodes on $\ell_i$ (respectively $\ell_k$) on which the axes of symmetry of $\Gamma(C)$ passes through. First assume that $\forall u \in \{u_i, u'_i, u_k, u'_k\}$, $u$ is occupied. Let $U \subseteq \{u_i, u'_i, u_k, u'_k\}$ be the set of nodes that have an occupied adjacent node on their $\ell$-ring. If $|U|\geq 1$ then, using the rigidity of $C$ a single node in $U$ is elected.  Assume without loss of generality that this node is $u_i$. The robot on $u_i$ is the one allowed to move. Its destination is its adjacent occupied node on its $\ell$-ring. By contrast, if  $|U| = 0$. Again, using the rigidity of $C$, one node of $ u \in  \{u_i, u'_i, u_k, u'_k\}$ is elected. The robot on node $u$ is the one allowed to move. Its destination is its adjacent empty node on its current $\ell$-ring (the scheduler chooses the direction to take). Next, let us consider the case in which $\exists u \in  \{u_i, u'_i, u_k, u'_k\}$ such that $u$ is empty. Several cases are possible depending on the nodes that are occupied. If there is a unique node $u \in  \{u_i, u'_i, u_k, u'_k\}$ which is occupied then the closest robot to $u$ on the same $\ell$-ring as $u$ moves to its adjacent node toward $u$. By contrast, if there are three nodes in $\{u_i, u'_i, u_k, u'_k\}$ which are occupied then assume without loss of generality that $u'_i$ is the empty node. Robots on $\ell_i$ which are the closest to $u_i$ are the only ones allowed to move. Their destination is their adjacent node towards $u_i$. Finally, if there are two nodes of $ \{u_i, u'_i, u_k, u'_k\}$ which are occupied then, if the two nodes are part of the same $\ell$-ring (assume without loss of generality that these two nodes are $u_i$ and $u'_i$) then using the rigidity of $C$ one node among $u_i$ and $u'_i$ is elected. The robot on the elected node is the one allowed to move. Its destination is its adjacent node on its current $\ell$-ring (the direction is chosen by the scheduler). If the two occupied nodes are on two different $\ell$-rings then let $R(C)$ be the set of robots on $\ell_i$ and $\ell_k$ which are the closest to the occupied node of $\{u_i, u'_i, u_k, u'_k\}$ on their $\ell$-ring. Using the rigidity of $C$, one robot from $R(C)$ is elected to move. Its destination is its adjacent node on its current $\ell$-ring toward the occupied node of $\{u_i, u'_i, u_k, u'_k\}$ which is on its $\ell$-ring. 
                        
                        \item  $\Gamma(C)$ is edge-edge symmetric. Assume without loss of generality that the axes of symmetry of $\Gamma(C)$ passes through respectively $e_1=(u_1,u_2)$ and $e_2=(u_3,u_4)$ on $\ell_i$ and $e'_1=(u'_1,u'_2)$ and $e'_2=(u'_3,u'_4)$ on $\ell_k$ and that $u_1$ and $u_3$ (respectively $u'_1$ and $u'_3$) are on the same side of $\Gamma(C)$'s axes of symmetry on $\ell_i$ (respectively $\ell_k$). Let $\mathcal{L}=\{\ell_i, \ell_k\}$, $U_i=\{u_1, u_2, u_3, u_4\}$, $U_k=\{u'_1, u'_2, u'_3, u'_4\}$ and  $U=U_i \cup U_k$. The main idea is to gather all robots on either $\ell_i$ or $\ell_k$. For this purpose, the number of occupied nodes is decreased in exactly one of the two $\ell$-rings of $\mathcal{L}$ so that we can reach a configuration $C'$ in which $nb_{\ell_i}(C') \ne nb_{\ell_k}(C')$ and hence use the strategy explained previously. That is, a single $\ell$-ring $\ell_r \in \mathcal{L}$ needs to be selected. Robots on the selected $\ell$-ring, $\ell_r$ behaves in the same manner as in the case in which $nb_{\ell_i}(C) \ne nb_{\ell_k}(C)$. That is, we describe in the following how a unique $\ell$-ring in $\mathcal{L}$ is selected. 
                        
                        Given two nodes $u$ and $u'$ of $U$ located on the same side of $\Gamma(C)$'s axes of symmetry and being on the same $\ell$-ring $\ell_j$, we refer by $Free(u,u')$ to the predicate that is equal to true is there is no robot on the shortest sequence of nodes between $u$ and $u'$ (excluding $u$ and $u'$) on $\ell_j$. In the case where $Free(u,u')$ is true for a given $\ell$-ring $\ell_j$, we say that $\ell_j$ has an empty side. Otherwise, we say that $\ell_j$ has an occupied side. Observe that each $\ell$-ring of $\mathcal{L}$ have at most two empty sides. In order to select a unique $\ell$-ring in $\mathcal{L}$, robots checks if the following properties hold in this order and decide accordingly: \\

                        - First, there exists an $\ell$-ring in $\mathcal{L}$, let this $\ell$-ring be without loss of generality $\ell_i$, in which:  $|U_i|=2$ with the two occupied nodes of $U_i$ being adjacent to each other (Assume without loss of generality that $u_1$ and $u_2$ are the occupied node of $U_i$), $Free(u_2,u_4) \wedge Free(u_1,u_3)$ holds. Observe that since $C$ is rigid and $\Gamma(C)$ is symmetric, on $\ell_k$, if $|U_k|=2$ then the two occupied nodes of $U_k$ cannot be neighbors. That is, $\ell_k$ cannot satisfy the same properties at the same time in $C$. In this case $\ell_i$ is the one that is selected. As this case is not considered in a configuration $C'$ in which $nb_{\ell_i}(C') \ne nb_{\ell_k}(C')$, in addition to the selection process, we describe robots behavior: the robot that is on a node of $U_i$ with the largest view is the one allowed to move. Its destination is $u_1$.  \\

                        - Next, there exists an $\ell$-ring in $\mathcal{L}$, let this $\ell$-ring be without loss of generality $\ell_i$, in which:  $|U_i|=1$ (Assume without loss of generality that $u_1$ is the occupied node of $U_i$), $Free(u_2,u_4) \wedge \neg Free(u_1,u_3)$ holds. In this case, if $\ell_k$ does not satisfy the same properties then $\ell_i$ is elected. Otherwise, assume without loss of generality that $u'_1$ is the occupied node in $U_k$. Let $d_i$ (respectively $d_k$) be the smallest distance between a robot on $\ell_i$ (respectively $\ell_k$) and $u_1$ (respectively $u'_1$). If $d_i \ne d_k$ (assume without loss of generality that $d_i < d_k$) then $\ell_i$ is elected. Otherwise (\ie $d_i = d_k$), let $r$ (respectively $r'$) be the robot on $\ell_i$ (respectively $\ell_k$) which is at distance $d_i$ from $u_1$ (respectively $u'_1$). If $view_{r}(t)(1) > view_{r'}(t)(1)$ then $\ell_i$ is elected. Otherwise, $\ell_k$ is elected. \\
                        
                        - Next, there exists an $\ell$-ring in $\mathcal{L}$, let this $\ell$-ring be without loss of generality $\ell_i$, in which: $|U_i|=2$ with the two occupied nodes of $U_i$ being neighbors (let these two nodes be respectively $u_1$ and $u_2$), $Free(u_1,u_3) \vee Free(u_2,u_4)$ holds. If $\ell_k$ does not satisfy the same properties as $\ell_i$ then, $\ell_i$ is the one that is selected. Otherwise, the $\ell$-ring that hosts a robot located on a node of $U$ with the largest view is the one that is selected. \\
                        
                        - Next, there exists an $\ell$-ring in $\mathcal{L}$, let this $\ell$-ring be without loss of generality $\ell_i$, in which:  $|U_i|=2$ with the two occupied nodes of $U_i$ being on the same side of $\Gamma(C)$'s axes of symmetry (let these robots be respectively $u_1$ and $u_3$), $Free(u_2, u_4)$ and $\neg Free(u_1,u_3)$ holds. If $\ell_k$ does not satisfy the same properties as $\ell_i$ then, $\ell_i$ is the one that is selected. Otherwise, the $\ell$-ring of $\mathcal{L}$ that hosts a robot on a node of $U$ with the largest view is the one to be selected.
                        
                        - Next, there exists an $\ell$-ring in $\mathcal{L}$, let this $\ell$-ring be without loss of generality $\ell_i$, in which: $|U_i|=3$ (let $u_1$ be the unique empty node in $U_i$) and $Free(u_1,u_3)$ holds. If $\ell_k$ does not satisfy the same properties then, $\ell_i$ is selected. Otherwise, the $\ell$-ring that hosts a robot on a node of $U$ with the largest view is the one to be elected. 
                        
                        For all the remaining cases, if, without loss of generality, $|U_i|<|U_k|$ then $\ell_i$ is the selected. Otherwise, in the case where $|U_i|=|U_k|$ then, the selection is achieved as follows: 
                        
                         \begin{itemize}
                            \item $|U_i|=4$. Let $R$ be the set of robots on a node of $U$. The $\ell$-ring that hosts the robots of $R$ with the largest view is the one to be selected (Recall that $C$ is rigid). 
                            
                            \item $|U_i|=3$. Assume without loss of generality that $u_1$ and $u'_1$ are the empty nodes on $\ell_i$ and $\ell_k$ respectively. Observe that $\neg Free(u_1,u_3)$ an $\neg Free(u'_1,u'_3)$ holds.  Let $R_i$ (respectively $R_k$) be the set of robots on $\ell_i$ (respectively $\ell_k$) being on the same side of the axes of symmetry as $u_1$ (respectively $u'_1$). Let $R= R_i \cup R_k$. If without loss of generality $|R_i| < |R_k|$ then, $\ell_i$ is elected. By contrast, if $|R_i| = |R_k|$ then, let $r_1$ (respectively $r'_1$) be the robot on $\ell_i$ (respectively $\ell_k$) which is the closest to $u_3$ (respectively $u'_3$). If without loss of generality  $dist(r_1, u_3) < dist(r'_1, u'_3)$ then $\ell_i$ is selected. Otherwise, if $dist(r_1, u_3) = dist(r'_1, u'_3)$ then the views of $r_1$ and $r'_1$ are used for the selection. Assume without loss of generality that $view_{r_1}(t)(1)> view_{r'_1}(t)(1)$. Then, $\ell_i$ is selected. 
                            
                            \item $|U_i|=2$. Three cases are possible on each $\ell$-ring of $\mathcal{L}$ depending on the nodes of $U$ that are occupied: $Case(I)$ the two nodes are neighbors, $Case(II)$ the two nodes are on the same side of $\Gamma(C)$'s axes of symmetry. $Case(III)$ the two nodes are not neighbors and are on different sides of $\Gamma(C)$'s axes of symmetry. We set: $Case(I) > Case(II) > Case(III)$ where $Case(a) > Case(b)$ means that $Case(a)$ has a higher priority than $Case(b)$. That is, if $\ell_i$ and $\ell_k$ have two different priorities then, the $\ell$-ring with the largest priority is elected. By contrast, if the two $\ell$-rings have the same priority (they belong to the same case), the selection is done in the following manner: if both $\ell_i$ and $\ell_k$ belongs to $Case(I)$ (assume without loss of generality that $u_1$, $u_2$ are the two nodes of $U_i$ which are occupied and $u'_1$ and $u'_2$ are the two occupied nodes of $U_k$). Let $F_1$, $F_2$ (respectively $F'_1$, $F'_2$) be the number of robots on $\ell_i$ (respectively $\ell_k$) being on each side of $\Gamma(C)$'s axes of symmetry. Let $F = \mathit{min}(F_1,F_2,F'_1,F'_2)$. The $\ell$-ring that has a side with $F$ robots is selected. If both $\ell$-rings has a side with $F$ robots, we use the distance to break the symmetry. That is, let $d$ be the smallest distance between a robot on an $\ell$-ring of $\mathcal{L}$ and the occupied node of $U$ on the same $\ell$-ring. The $\ell$-ring that hosts a robot at distance $d$ from a node of $U$ is elected. Again, if both $\ell$-rings satisfy the property, the $\ell$-ring hosting the robot at distance $d$ from a node of $U$ with the largest view is elected. Lastly, if both $\ell$-ring of $\mathcal{L}$ belong to $Case(II)$ or if they both belong to $Case(III)$ then, let $r$ (respectively $r'$) be the robot allowed to move on $\ell_i$ (respectively $\ell_k$) with respect to our algorithm. If $view_r(t)(1) > view_{r'}(t)(1)$ then $\ell_i$ is elected. Otherwise, $\ell_k$ is selected. 
                            
                            \item $|U|_i=1$. let $d$ be the smallest distance between a robot on an $\ell$-ring of $\mathcal{L}$ and a node of $U$ located on the same $\ell$-ring.  The $\ell$-ring that hosts a robot at distance $d$ from a node of $U$ with the largest view is the one that is elected (this is possible as $C$ is rigid).  
                            
                            \item $|U_i|=0$. Let $d$ be the smallest distance between a robot on an $\ell$-ring of $\mathcal{L}$ and a node of $U$ on its $\ell$-ring. The $\ell$-ring that hosts a robot at distance $d$ from a node of $U$ with the largest view is the one that is elected.

                         \end{itemize}

                        \item $\Gamma(C)$ has more than one axes of symmetry. Using the rigidity of $C$, a single robot of $\Gamma(C)$ is elected to move to its adjacent node on its current $\ell$-ring. By doing so, the number of axes of symmetry is reduced to either $1$ or $0$ (Please refer to Lemma\ref{lem:periodicGamma}).
                    \end{itemize}
            \end{enumerate}

    \end{enumerate}

\end{enumerate} 

\end{enumerate}

\subsection{Gathering Phase}\label{subsec:gathering} 

Recall that this phase starts from a configuration $C \in \mathcal{C}_{\mathit{target}}$. From $C$, an orientation of the torus can be defined (from $\ell_{\mathit{target}}$ to $\ell_{\mathit{max}}$). The idea is to make all robots that are neither on $\ell_{\mathit{target}}$ nor $\ell_{\mathit{max}}$ move to join $v_{target}$ following the predefined orientation. Then, some robots on $\ell_{\mathit{max}}$ move to join $v_{target}$ while the other are aligned with respect to $v_{target}$ to finally gather all on one node. 

To ease the description of our algorithm, we first identify a set of configurations that can appear in this phase and then present for each of them the behavior of the robots. 

\paragraph{Set of configurations.} Three sets are identified as follows:

 \begin{enumerate}
      \item Set $C_{sp}$ which includes the following four sub-sets: 
             \begin{enumerate}
             
                \item SubSet $\mathcal{C}_{sp-1}$: $C \in C_{sp-1}$ if there are exactly two occupied $\ell$-rings in~$C$.  Let these two $\ell$-rings be $\ell_i$ and $\ell_j$ respectively. The following conditions are verified: (1) $\ell_i$ and $\ell_j$ are adjacent. (2) $nb_{\ell_{j}}(C) < nb_{\ell_i}(C)$ (3) either : 
                 \begin{itemize}
                     \item $nb_{\ell_i}(C)=4$ and $\ell_i$ contains two $1$.blocks of size $2$ being at distance $2$ from each other. Let $u$ be the unique node between the two $1$.blocks on $\ell_i$.
                     \item  $nb_{\ell_i}(C)=3$ and $\ell_i$ contains a $1$.block of size $3$. Let $u$ be the middle node of the $1$.block of size $3$. 
                     \item $nb_{\ell_i}(C)=5$ and $\ell_i$ contains a $1$.block of size $5$. Let $u$ be the middle node of the $1$.block of size $5$. 
                 \end{itemize}
                 
                 (4) Either $nb_{\ell_j}(C)=3$ and $\ell_j$ contains a $1$.block of size $3$ whose middle node is adjacent to $u$ or $nb_{\ell_j}(C)=2$ and $\ell_j$ contains either a $2$.block of size $2$ whose middle node is adjacent to $u$ or a $1$.block of size $2$ having one extremity adjacent to $u$ (refer to Figure~\ref{fig:Sp} for some examples).
             
                 \item SubSet $\mathcal{C}_{sp-2}$: $C \in \mathcal{C}_{sp-2}$ if $C \in \mathcal{C}_{\mathit{target}}$ and $nb_{\ell_{\mathit{target}}}(C)=1$. In addition either one of the following conditions are verified: (1) $nb_{\ell_{\mathit{max}}}(C)=4$ and on $\ell_{\mathit{max}}$ there are two $1$.blocks of size $2$ being at distance $2$ from each other. Let $u$ be the unique node between the two $1$.blocks then $u$ is adjacent to $v_{\mathit{target}}$. (2) $nb_{\ell_{\mathit{max}}}=5$ and on $\ell_{\mathit{max}}$ there is a $1$.block of size $5$ whose middle robot is adjacent to $v_{\mathit{target}}$. (3) $nb_{\ell_{\mathit{max}}}(C)=4$ and on $\ell_{\mathit{max}}$ there is a $1$.block of size $3$ having a unique occupied node at distance $2$. Let $u$ be the unique empty node between the 1.block of size $3$ and the $1$.block of size $1$. Then $u$ is adjacent to $v_{\mathit{target}}$ (refer to Figure~\ref{fig:Sp}).

                    \begin{figure}[h]
                        \begin{center}
                            \includegraphics[scale=0.37]{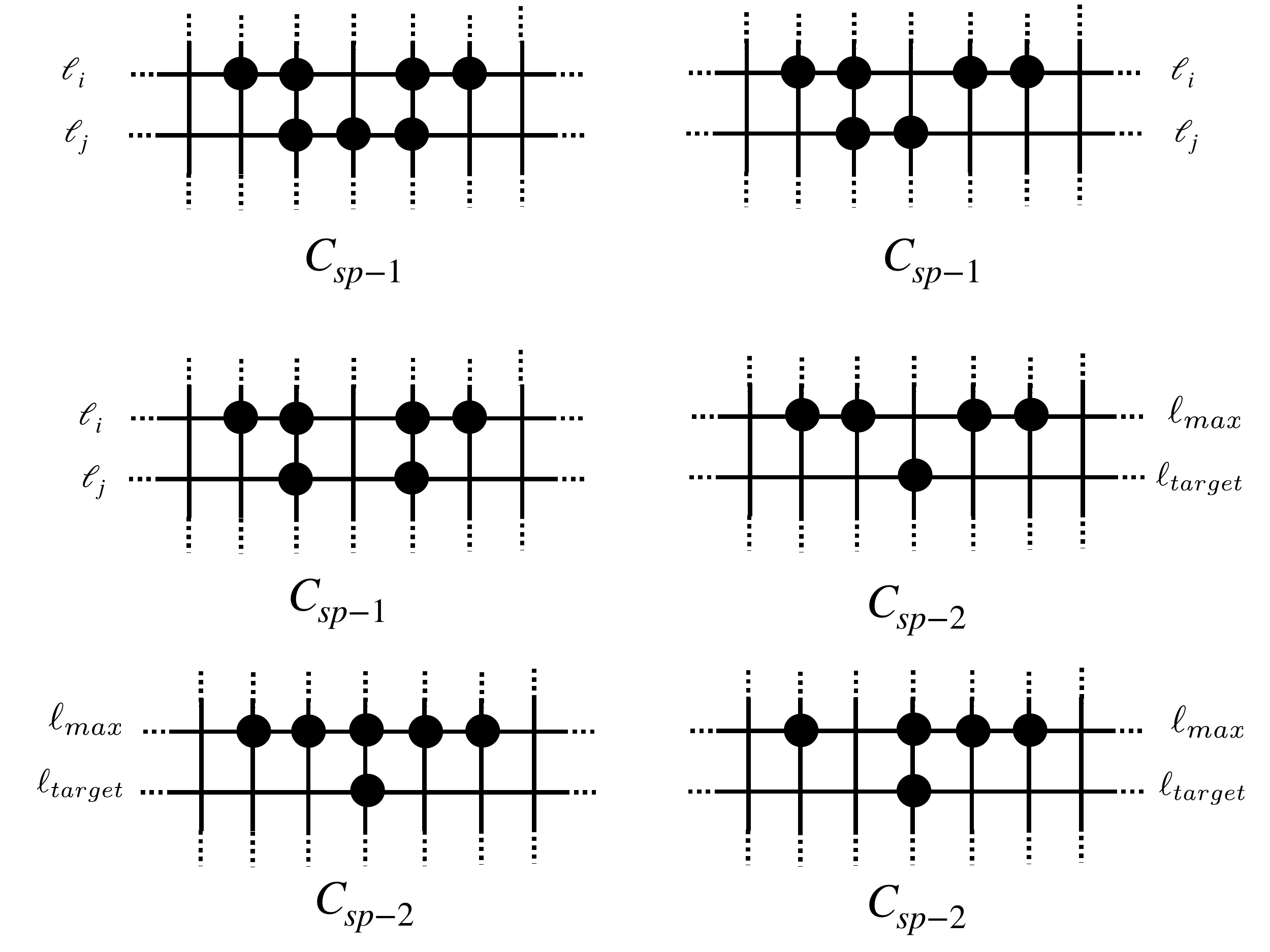}
                            \caption{Subsets $\mathcal{C}_{sp-1}$ and $\mathcal{C}_{sp-2}$}\label{fig:Sp}
                        \end{center}
                    \end{figure}

                 \item SubSet $\mathcal{C}_{sp-3}$: $C \in \mathcal{C}_{sp-3}$ if $C \in \mathcal{C}_{\mathit{target}}$, $Empty(C)$ is true,  $nb_{\ell_{\mathit{target}}}(C)= 1$ and one of the two following conditions holds: (1) $nb_{\ell_{\mathit{max}}}(C)=3$ and $\ell_{\mathit{max}}$ contains an $1$.block of size $3$ whose middle robot is adjacent to $v_{\mathit{target}}$. (2)  $nb_{\ell_{\mathit{max}}}(C)=2$ and the two robots form a 2.block on $\ell_{\mathit{max}}$. Let $u$ be the unique empty node between the two robots on $\ell_{\mathit{max}}$, then $u$ is adjacent to $v_{\mathit{target}}$ (refer to Figure~\ref{fig:FinalSet}).
                 
                 \item SubSet $\mathcal{C}_{sp-4}$: $C \in C_{sp-4}$ if there is a unique $\ell$-ring that is occupied and on this $\ell$-ring there are either two or three occupied nodes that form a $1$.block (refer to Figure~\ref{fig:FinalSet}).

                \begin{figure}[!h]
                    \begin{center}
                        \includegraphics[scale=0.37]{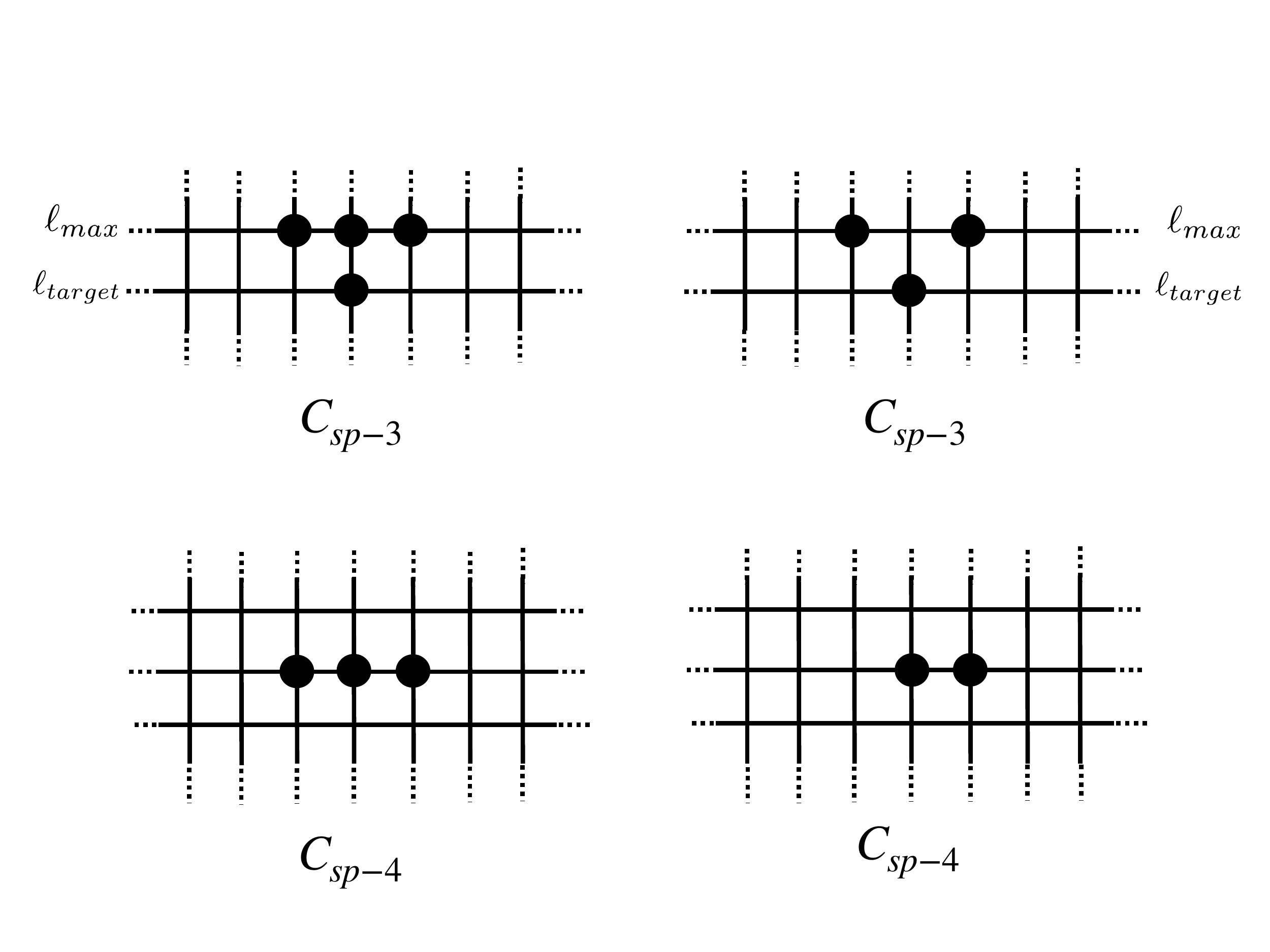}
                        \caption{Subsets $\mathcal{C}_{sp-3}$ and $\mathcal{C}_{sp-4}$}\label{fig:FinalSet}
                    \end{center}
                \end{figure}

             \end{enumerate}
   
       \item  Set $\mathcal{C}_{pr}$: $C \in \mathcal{C}_{pr}$ if $C\in \mathcal{C}_{\mathit{target}}$ and $Partial(C)$ is true. That is, $\exists~i \in \{1, \dots, L\}$ such that $\ell_i \ne \ell_{\mathit{max}}$ and $\ell_i \ne \ell_{\mathit{target}}$ and $nb_{\ell_i}(C)>0$. Note that we are sure that $C \not \in \mathcal{C}_{sp}$.
        
        \item Set $\mathcal{C}_{ls}$: $C \in \mathcal{C}_{ls}$ if $C \in \mathcal{C}_{\mathit{target}}$ and $C \not \in \mathcal{C}_{sp}$ and $Empty(C)$. In other words, there are only two $\ell$-rings that are occupied: $\ell_{\mathit{max}}$ and $\ell_{\mathit{target}}$.\\

   \end{enumerate}

\paragraph{Robots behavior.} We present now the behavior of robots during the gathering phase. If the current configuration $C \in \mathcal{C}_{\mathit{target}}$ then we define $\uparrow$ as the direction from $\ell_{\mathit{target}}$ to $\ell_{\mathit{max}}$ taking the shortest path. Observe that $\uparrow$ can be computed by all robots and in addition, $\uparrow$ is unique (recall that $\ell_{\mathit{max}}$ is unique and $\forall~C \in \mathcal{C}_{\mathit{target}}$, $nb_{\ell_{\mathit{target}}}(C) \ne nb_{\ell_{\mathit{secondary}}}(C)$). 
Using Direction $\uparrow$, we define a total order on the $\ell$-rings of the torus such that $\ell_i \leq \ell_j$ if $\ell_i$ is not further to  $\ell_{\mathit{target}}$ than $\ell_j$  with respect to Direction $\uparrow$. 

Note that $\mathcal{C}_{p_2}= \mathcal{C}_{pr} \cup \mathcal{C}_{ls} \cup \mathcal{C}_{sp}$. Let $C$ be the current configuration, we present robots behavior for each defined set:  

\begin{enumerate} 
\item $C \in \mathcal{C}_{pr}$. Let us refer by $\ell_i$ to the $\ell$-ring that is adjacent to $\ell_{\mathit{target}}$ such that $\ell_{i}\ne \ell_{\mathit{max}}$. Depending on the number of robots on $\ell_i$, two cases are possible as follows:

			\begin{enumerate} 
				\item $nb_{\ell_i}(C)>0$. Let $R_m$ be the set of robots on $\ell_i$ that are the closest to $v_{\mathit{target}}$. We distinguish the following cases:
					\begin{enumerate} 
       						\item There is an occupied node on $\ell_i$ that is adjacent to $v_{\mathit{target}}$. Let us refer to such a node by $u_i$. Robots on $u_i$ are the ones allowed to move. Their destination is their adjacent node on $\ell_{\mathit{target}}$ \ie they move to $v_{\mathit{target}}$.
       						\item There is no robots on $\ell_i$ that is adjacent to $v_{\mathit{target}}$ and $nb_{\ell_{i}}(C)< \ell-1$. In this case, robots in $R_m$ are the ones allowed to move, their destination is their adjacent empty node on $\ell_{i}$ on the empty path toward $v_{\mathit{target}}$. 
       						\item There is no robots on $\ell_i$ that is adjacent to $v_{\mathit{target}}$ and $nb_{\ell_{i}}(C)= \ell-1$. In this case, let $R_{m'}$ be the set of robots that share a hole with $u_i$ where $u_i$ is the node on $\ell_i$ that is adjacent to $v_{\mathit{target}}$. Robots in $R_{m'}$ are allowed to move only if they are not part of a multiplicity. Their destination is their adjacent empty node towards $u_i$ taking the empty path. 
					\end{enumerate} 

				\item $nb_{\ell_i}(C)=0$. Let $\ell_k$ be the closest neighboring $\ell$-ring of $\ell_{\mathit{target}}$ with respect to $\uparrow$. Let $R_{m}$ be the set of robots on $\ell_{k}$ that are the closest to $v_{\mathit{target}}$. Robots on $R_{m}$ are the ones allowed to move, their destination is their adjacent node outside $\ell_{k}$ and toward $\ell_{\mathit{target}}$ with respect to $\uparrow$. 
				\end{enumerate} 

\item $C \in \mathcal{C}_{ls}$. The aim for the robots is to reach a configuration $C' \in \mathcal{C}_{sp}$. 

If $nb_{\ell_{\mathit{max}}}(C)<=5$, robots on $\ell_{\mathit{max}}$ execute \textbf{Align}$(\ell_{\mathit{max}}, \ell_{\mathit{target}})$. Otherwise, robots behave as follows: Let $u_1, u_2, u_3$, $u_4$ and $u_5$ be a sequence of five consecutive nodes on $\ell_{\mathit{max}}$ such that $u_3$ is adjacent to $v_{\mathit{target}}$. If $u_3$ is occupied and has exactly one adjacent occupied node on $\ell_{\mathit{max}}$ (assume without loss of generality that this node is $u_2$) then the robot on $u_2$ is the one allowed to move. Its destination is $u_3$. By contrast, if $u_3$ has either no adjacent occupied nodes on $\ell_{\mathit{max}}$ or two adjacent occupied nodes on $\ell_{\mathit{max}}$ then robots on $u_3$ are the ones allowed to move. Their destination is $v_{\mathit{target}}$. Finally, if $u_3$ is empty then let $R$ be the set of robots that are on $\ell_{\mathit{max}}$ which are the closest to $u_3$. If $|R|=2$ then both robots in $R$ are allowed to move. Their destination is their adjacent node on $\ell_{\mathit{max}}$ toward $u_3$.  By contrast, if $|R|=1$ then first assume that the distance between the robot in the set $R$ and $u_3$ is $d$. If there is a robot on $\ell_{\mathit{max}}$ which shares a hole with $u_3$ and which is at distance $d+1$ from $u_3$ then this robot is the one allowed to move. Its destination is its adjacent empty node towards $u_3$ taking the shortest path. If no such robot exists then the robot in the set $R$ is the one allowed to move. Its destination is its adjacent node toward $u_3$ taking the shortest path.  

\item $C \in \mathcal{C}_{sp}$. We distinguish:
        \begin{enumerate} 
            \item $C \in \mathcal{C}_{sp-1}$. If $C \in \mathcal{C}_{\mathit{target}}$ then the robots on $\ell_{\mathit{target}}$ that are at the extremities of the $1$.block or the $2$.block are the ones allowed to move. Their destination is their adjacent occupied node on $\ell_{\mathit{max}}$. By contrast, if $C \not \in \mathcal{C}_{\mathit{target}}$ then the robot not on $\ell_{\mathit{max}}$ which has two adjacent occupied nodes is the one allowed to move. Its destination is its adjacent node on $\ell_{\mathit{max}}$. 
           \item $C \in \mathcal{C}_{sp-2}$. Recall that three cases are possible as follows: If there is a $1$.block of size $3$ on $\ell_{\mathit{max}}$ then the robots that are in the middle of the $1$.block of size $3$ moves to their adjacent occupied node that has one robot at distance $2$. If $\ell_{\mathit{max}}$ contains a $1$.block of size $5$ then the robots on $\ell_{\mathit{max}}$ that are adjacent of the extremities of the $1$.block move on $\ell_{\mathit{max}}$ in the opposite direction of the extremities of the $1$.block. Finally, if $\ell_{\mathit{max}}$ contains two $1$.blocks of size $2$ then the robots that share a hole of size $1$ move toward each other.
           \item $C \in \mathcal{C}_{sp-3}$. In this case, robots on $v_{\mathit{target}}$ are the ones that are allowed to move (note that $v_{\mathit{target}}$ can be occupied by either a single robot or a tower). Their destination is their adjacent node on $\ell_{\mathit{max}}$.
           \item $C \in \mathcal{C}_{sp-4}$. If $C$ contains a $1$.block of size $3$ then the robots that are at the extremities of the $1$.block are the ones allowed to move. Their destination is their adjacent occupied node. By contrast, if $C$ contains a $1$.block of size $2$ then the robot that is not part of a tower moves to its adjacent occupied node (As it will be shown later, the proposed solution ensures that in this case, one node hosts a tower while the other one hosts only one robot). 
                   
        \end{enumerate} 

\end{enumerate} 

\section{Proof of correctness}

Let us first show the correctness of Procedure $Align(\ell_i, \ell_k)$. For this purpose, we define the set of configurations to which we refer to by \textit{Aligned}$(\ell_i, \ell_k)$. A configuration $C$ is said to be \textit{Aligned}$(\ell_i, \ell_k)$ if the following conditions hold on $\ell_i$ and $\ell_k$:

     \begin{enumerate}
        \item $nb_{\ell_i}(C)=2$ and robots on $\ell_i$ form a $2$.block of size $2$. By $u$, we refer to the empty node which is the middle of the $2$.block on $\ell_i$. Or $nb_{\ell_i}(C)=3$ (respectively $nb_{\ell_i}(C)=5$) and robots on $\ell_i$ form a $1$.block of size $3$ (respectively of size $5$). By $u$, we refer to the occupied node which is in the middle of the $1$.block on $\ell_i$. Or, $nb_{\ell_i}(C)=2$ and robots on $\ell_i$ form two $1$.blocks of size $2$ being at distance $2$ from each other. Let $u$ be the unique empty node between the two $1$.blocks. 
        \item $nb_{\ell_i}(C)>nb_{\ell_k}(C)$ holds, and either (1) $nb_{\ell_k}(C)=1$ or (2) $nb_{\ell_k}(C)=2$ and $\ell_k$ contains a $2$.block or (3) $nb_{\ell_k}(C)=3$ and $\ell_k$ contains a $1$.block of size $3$. Let $u_{mark}$ be the node on $\ell_k$ that is 
              \begin{itemize}
                 \item occupied if $nb_{\ell_k}(C)=1$.
                 \item empty in the 2.block if $nb_{\ell_k}(C)=2$.
                 \item occupied in the middle of the $1$.block if $nb_{\ell_k}(C)=3$.
              \end{itemize}
        \item Nodes $u$ and $u_{mark}$ are on the same $L$-ring. 
    \end{enumerate}

\begin{lemma}\label{lem:align}
Starting from a configuration in which Align($\ell_i$, $\ell_k$) is called with no robots with outdated views, an Aligned($\ell_i$, $\ell_k$) configuration is eventually reached. 
\end{lemma}

\begin{proof}
As Align($\ell_i$, $\ell_k$) is called only when the second property of \textit{Aligned}$(\ell_i, \ell_k)$ is verified and as no robot is allowed to move on $\ell_k$ when executing Align($\ell_i$, $\ell_k$), we only focus in this proof on robots which are on $\ell_i$. Let $u_1$, $u_2$, $u_3$, $u_4$ and $u_5$ be five consecutive nodes on $\ell_i$ such that $u_3$ is on the same $\ell$-ring as $\ell_k$.  Let $r_1$, $r_2$, ..., $r_j$ be the robots that are located on $\ell_i$ in the case where $nb_{\ell_i}(C)=j$. Depending on the number of robots on $\ell_i$, procedure Align($\ell_i$, $\ell_k$) considers the following cases: 
    \begin{enumerate}
    \item \label{case:2rp} $nb_{\ell_i}(C)=2$. Assume without loss of generality that there is no robot between $r_1$ and $u_2$ on $\ell_i$. By procedure Align($\ell_i$, $\ell_k$), if $u_2$ (respectively $u_4$) is empty then $r_1$ (respectively $r_2$) is allowed to move. Its destination is its adjacent empty node on $\ell_i$ toward $u_2$ (respectively $u_4$). By doing so, the robot gets even more closer to $u_2$ (respectively $u_4$) and hence it is still allowed to move, eventually $u_2$ (respectively $u_4$) becomes occupied and the Lemma holds. 
    
    \item $nb_{\ell_i}(C)=3$. If $u_3$ is occupied then as for case \ref{case:2rp}, we can deduce that eventually $u_2$ and $u_4$ become occupied. That is, we only focus in the following in the case in which $u_3$ is empty. As the scheduler is asynchronous, we need to be careful not to create a tower. If there is a unique robot, say $r_1$ which is the closest to $u_3$ then by procedure Align, $r_1$ is the only one allowed to move. Its destination is its adjacent empty node on its current $\ell$-ring towards $u_3$. That is, eventually $u_3$ becomes occupied. By contrast if there are two robots which are the closest to $u_3$ (assume that these two robots are $r_1$ and $r_2$) then procedure Align uses $r_3$ to break the symmetry. Two special cases are identified to deal with robots with  potential outdated views: the case in which dist($r_1$, $r_3$) $=$ dist($r_2$, $r_3$) and $r_1$ and $r_2$ are part of the same $1$.block (refer to Figure \ref{fig:Align3r}). By procedure Align, both $r_1$ and $r_2$ are allowed to move. Their destination is their adjacent empty node in the opposite direction of $r_3$. If only one of the two robots move (assume without loss of generality that this robot is $r_1$) then a configuration in which there is a $1$.block of size $2$ on $\ell_i$ with dist($r_1$, $r_3$) $=$ dist($r_2$, $r_3$) $+1$ is reached. This situation corresponds the second special case $(c2)$ (refer to Figure \ref{fig:Align3r}). In this case, by procedure Align, robot $r_2$ is the only one allowed. That is, eventually $r_2$ moves to its adjacent empty node in the opposite direction of $r_3$. That is, we are sure that in the configuration reached  none of the robots on $\ell_i$ has an outdated view. Now note that, in the case where dist($r_1$, $r_3$) $=$ dist($r_2$, $r_3$) with $r_1$ and $r_2$ not part of the same $1$.block, then $r_3$ is the one allowed to move, its destination is its adjacent empty node on its current $\ell$-ring (the scheduler chooses the direction to take), that is eventually dist($r_1$, $r_3$) $<$ dist($r_2$, $r_3$) or dist($r_1$, $r_3$) $>$ dist($r_2$, $r_3$). Finally, if without loss of generality dist($r_1$, $r_3$) $<$ dist($r_2$, $r_3$), then $r_1$ is the only one allowed to move. Its destination is its adjacent node on its current $\ell$-ring towards $u_3$. By moving, $r_1$ becomes the only one which is the closest $u_3$ and as discussed previously $u_3$ becomes occupied eventually. We can thus deduce that the lemma holds.
    
    \item $nb_{\ell_i}(C)=4$. If node $u_3$ is occupied and either $u_2$ or $u_4$ is empty. Then by procedure Align, robot on $u_3$ is the one allowed to move. Its destination is its adjacent empty node (in case of symmetry, the scheduler chooses the direction to take). By contrast, if both $u2$ and $u_4$ are occupied then since $nb_{\ell_i}(C)=4$ then either $u_1$ or $u_5$ is empty. Assume without loss of generality that $u_1$ is the node that is empty. In this case, by procedure Align, the robot on $u_2$ is the one allowed to move, its destination is its adjacent empty node in the opposite direction of $u_3$. Hence, we can deduce that $u_3$ becomes eventually empty. Let us now consider the case in which $u_3$ is empty. By procedure align, an order on robots is defined according to the robots distance from $u_3$ and a given orientation of the ring $\ell_i$. Assume that $r_1 \leq r_2 \leq r_3 \leq r_4$. Robot $r_1$'s (respectively $r_4$'s) destination is $u_2$ (respectively $u_4$). Similarly $r_2$'s (respectively $r_3$'s) destination is $u_1$ (respectively $u_5$). As they move to their respective target node starting from the ones which are the closest, we can deduce that in this case too the Lemma hods. 
    
    \item $nb_{\ell_i}(C)=5$. If $u_3$ is occupied, then as for the case in which $nb_{\ell_i}(C)=4$ with $u_3$ empty (robots not on $u_3$ have the same behavior), we can easily show that eventually $u_1$, $u_2$, $u_4$ and $u_5$ become occupied which makes the lemma hold. By contrast, if $u_3$ is empty then given an orientation of the ring, assume that $r_1 \le r_2 \le r_3 \le r_4 \le r_5$ holds where $r \le r'$ means that $r$ is closer to $u_3$ than $r'$ with respect to the chosen orientation. If there is a unique robot which is the closest to $u_3$, by procedure Align, this robot is the only one allowed to move. Its destination is its adjacent empty node towards $u_3$ taking the shortest path. By moving, the robot either joins $u_3$ or becomes even closer. In the later case, the same robot remains the only one allowed to move. That is eventually, $u_3$ becomes occupied. If there are more than one robot that is the closest to $u_3$ (by hypothesis, these two robots are $r_1$ and $r_5$), then procedure Align identifies some special configurations to deal with robots with possible outdated views (refer to Figures \ref{fig:Align5r1} and \ref{fig:Align5r2}): if all robots are part of a single $1$.block with distance($r_1$, $u_3$) $=$ distance($r_5$, $u_3$) then Align makes $r_1$ and $r_5$ moves on their $\ell$-ring outside the $1$.block they belong to. If the scheduler activates only one robot (say $r_1$) then in the reached configuration, by procedure Align, the robot that was supposed to move previously is the only one allowed to move. That is,a configuration in which both $r_1$ and $r_5$ have moved is eventually reached. Now, if dist($r_1$, $r_2$) $=$ dist($r_5$, $r_4$) with $r_2$ and $r_5$ being on the same $1$.block then by procedure Align, $r_2$ and $r_4$ are the only one allowed to move, their destination is their adjacent empty node toward respectively $r_1$ and $r_5$. If only one of the two robots moves, in the reached configuration the robot that was supposed to move is the only one allowed to move. That is a configuration in which both $r_2$ and $r_4$ have moved is reached. For all the other cases in which distance($r_1$, $u_3$) $=$ distance($r_5$, $u_3$), robot $r_3$ is used to break the symmetry. That is, if without loss of generality dist($r_1$, $r_3$) $=$ dist($r_5$, $r_3$) then $r_3$ moves to one of its adjacent empty node on $\ell_i$. By contrast, if without loss of generality, dist($r_1$, $r_3$) $<$ dist($r_5$, $r_3$), then $r_1$ is the one allowed to move. Its destination is its adjacent empty node on $\ell_i$ taking the shortest path. By moving, $r_1$ either joins $u_3$ or becomes the only robot that is the closest to $u_3$. Hence, eventually $u_3$ becomes occupied and we can deduce that the lemma holds. 
    \end{enumerate}
    
    From the cases above, we can deduce that starting from a configuration in which Align($\ell_i$, $\ell_k$) is called, an Aligned($\ell_i$, $\ell_k$) configuration is eventually reached. Hence the lemma holds. 
\end{proof}



Let us now focus on the correctness of the Preparation phase. Let $|\ell_{\mathit{max}}|(C)$ be the number of maximal $\ell$-rings in configuration $C$. 

\begin{lemma}\label{lem:L--}
Let $C$ be a rigid configuration in which $|\ell_{\mathit{max}}|_C>1$ and $nb_{\ell_{\mathit{max}}}(C)=\ell$. On each maximal $\ell$-ring on $C$ there exists at least one robot $r$ such that if $r$ moves to one of its adjacent occupied node on its current $\ell$-ring, the configuration reached $C'$, is rigid and $|\ell_{\mathit{max}}|_C>|\ell_{\mathit{max}}|_{C'}$.  

\end{lemma}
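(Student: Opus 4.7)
The strategy is to pick any maximal $\ell$-ring $R$ in $C$, move any robot $r\in R$ onto an adjacent occupied node $v\in R$, and argue that the resulting configuration $C'$ satisfies both conclusions. My claim is that in fact every choice of $r$ and every choice of its neighbor $v$ on $R$ works, so the existence half of the statement should fall out as a by-product; the real work lies in (i) verifying that the count of maximal $\ell$-rings strictly drops, and (ii) verifying that $C'$ remains rigid.

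\textbf{Step 1: the number of maximal rings decreases.} After such a move, the node formerly holding $r$ is empty, $v$ carries a two-robot tower, and no other vertex changes. Hence $R$ now has at most $L-1$ occupied nodes in $C'$ and drops out of the family of maximal $\ell$-rings. Because no other ring was touched and because $|\ell_{max}|_C>1$ guarantees at least one further maximal $\ell$-ring still witnessing the value $L=|\ell_{max}|(C')$, I should get $|\ell_{max}|_{C'}=|\ell_{max}|_C-1$. This part is mostly bookkeeping, once I agree that $\ell$-rings are topological features of the underlying graph and that distinct maximal rings do not interfere with one another; if the setting allows shared vertices, I would invoke a prior structural lemma for this.

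\textbf{Step 2 (the main obstacle): rigidity of $C'$.} The hard part is to show that no non-trivial graph automorphism $\sigma$ can preserve $C'$. The plan is first to stabilise $R$ setwise and then to kill $\sigma$ on $R$. Since any automorphism preserves node multiplicities, every maximal $\ell$-ring of $C'$ other than $R$ is fully occupied by singletons, whereas $R$ carries the distinctive ``one hole $+$ one tower $+$ $L-2$ singletons'' signature; therefore $\sigma(R)=R$. Restricted to the induced $\ell$-cycle $R$, the map $\sigma$ lies in the dihedral group and must fix both the unique empty vertex and the unique tower vertex, which are adjacent in $R$; the only cycle automorphism fixing two adjacent vertices is the identity, so $\sigma|_R=\mathrm{id}$. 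Because $C$ and $C'$ agree off $R$, $\sigma$ then preserves $C$ as well, contradicting the rigidity of $C$. The subtle point I would double-check is the step $\sigma(R)=R$: if the topology permits spurious $\ell$-cycles to appear around the new tower, I would either rule them out from the structural hypotheses on $\ell$-rings or refine the argument by invoking rigidity of the ring skeleton itself.
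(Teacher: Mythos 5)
There is a genuine gap, and it sits exactly at the step you yourself flagged as subtle: $\sigma(R)=R$ and, more specifically, the claim that $\sigma$ must fix ``the unique tower vertex.'' Your argument silently assumes that a symmetry of $C'$ must preserve multiplicities, so that the doubly-occupied node is a distinguished landmark. In the model used here, the symmetry of a configuration is judged on the set of occupied nodes only: the paper measures a ring by ``the number of occupied nodes,'' records that $\ell$ has $L-1$ occupied nodes and a single empty node after the move, and never appeals to the tower being detectable. Once the tower is invisible, your two anchors on $R$ collapse to one: restricted to $R$, $\sigma$ only has to fix the single empty node, so a reflection of $R$ survives; worse, $\sigma$ need not stabilize $R$ at all, since $R$ is now merely some $\ell$-ring with $L-1$ occupied nodes and can be exchanged with another such ring by a reflection whose axis is parallel to the $\ell$-rings. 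This is not a repairable technicality, because the stronger statement you aim for (``every choice of $r$ works'') is false: take a rigid $C$ in which $\ell$ is fully occupied, a parallel ring $\ell'$ misses exactly one node, and everything else is mirror-symmetric across the axis exchanging $\ell$ and $\ell'$; moving the one robot of $\ell$ that faces the hole of $\ell'$ produces a symmetric $C'$. The existential quantifier in the lemma is therefore essential and cannot be obtained as a by-product of a ``every move works'' argument.

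Because of this, the paper's proof is necessarily of a different character. It assumes for contradiction that every robot of $\ell$ produces a symmetric configuration, shows that each resulting axis must be parallel to $\ell$ (Claim 1) and must not lie on $\ell$ (Claim 2) --- both via the one observation you do use correctly, namely that a symmetry fixing the modified nodes forces $C$ itself to be symmetric --- then shows that distinct robots force distinct axes (Claim 3, by locating the mirror ring $\ell'$ and counting its empty nodes), and finally counts: there are at most $l$ axes parallel to the $\ell$-rings, but $L>l$ robots on $\ell$, a contradiction. Note that this is precisely where the hypothesis $L>l$ enters; your proposal never uses it, which is itself a warning sign. Your Step 1 (the number of maximal rings drops, using $|\ell_{max}|_C>1$ to keep the maximal size equal to $L$) is fine and matches the paper, and your worry about ``spurious $\ell$-cycles'' is a red herring --- the $\ell$-rings are fixed structural features of the torus; the real obstacle is the invisibility of the multiplicity.
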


\begin{proof}
We proceed by contradiction. Assume that for every robot $r$ located on a maximal $\ell$-ring in $C$, if $r$ moves to its adjacent occupied node on its current $\ell$-ring in $C$,  then the configuration reached $C'$ is either symmetric or $|\ell_{\mathit{max}}|_C \leq |\ell_{\mathit{max}}|_{C'}$. Let $r$ be one robot on a given $\ell_{\mathit{max}}$. When $r$ moves to its adjacent occupied node on its current $\ell$-ring that we denote $\ell_m$, since initially there is one robot on each node, $nb_{\ell_m}(C')=\ell-1$. Since by assumption no other robot is allowed to move, $|\ell_{\mathit{max}}|_C>|\ell_{\mathit{max}}|_{C'}$. Hence, we deduce that $C'$ is either symmetric or periodic. We first show the following claim:\\

\noindent \textbf{Claim 1}. On each maximal $\ell$-ring $\ell_m$, there exists at most one robot $r$ such that if $r$ moves to one of its adjacent nodes on $\ell$, a periodic configuration is reached. \\

\noindent \textbf{Proof of Claim 1}. Assume by contradiction that the claim does not hold. That is, there exists at least $2$ robots on each maximal $\ell$-rings $\ell_m$ such that if they move to their adjacent node on $\ell_m$, a periodic configuration is reached. 

Let $r$ be such a robot and let $C'$ be the periodic configuration reached. First, observe that when $r$ moves, as there is a single empty node on $\ell_m$, there exists no sequence of $L$-rings $L_0, L_1, L_2, ... L_k$ which is repeated at least twice. Since $C'$ is periodic, we can deduce that, instead, there exists a sequence of $\ell$-rings $\ell_0, \ell_1, \dots, \ell_k$ with $k>1$ which is repeated $t$ times with $t>1$. Assume without loss of generality that $\ell_i$ ($ 0 \leq i \leq k$) is the $\ell$-ring in $C'$ that hosts $r$ ($\ell_i = \ell_m$). Configuration $C'$ can be expressed as $(p_0, \ell_i, p_1)^t$ where $p_0 = \ell_0, \ell_1, \dots, \ell_{i-1}$ and $p_1 = \ell_{i+1}, \ell_{i+2}, \dots, \ell_{k}$.

By assumption, there is another robot $r'$ on the same $\ell$-ring as $r$ such that if $r'$ moves to one of its adjacent node on its current $\ell$-ring, a periodic configuration $C''$ is reached. Let us consider the case in which $r'$ moves. As for the previous case, we can deduce that there exists a sequence of $\ell$-rings $\ell'_0, \ell'_1, \dots, \ell'_{k'}$ with $k'>1$ which is repeated $t'$ times with $t'>1$. Assume without loss of generality that $\ell'_{i}$ is the $\ell$-ring that hosts $r'$ ($\ell'_{i} = \ell_m)$. Recall that $\ell_i=\ell'_i$ . Since $r'$ is the only robot that has moved in $C$, configuration $C''$ can be expressed as $(p_0, \ell_i, p_1)^{t-1}(p_0, \ell'_i, p_1)$. That is, if $C'$ is periodic then $r=r'$ which is a contradiction. \\

We can now assume that each robot $r' \ne r$ on $\ell_m$ when it moves, it creates a symmetric configuration. Let $A$ be one of the axes of symmetry in $C'$. We prove some properties of $A$\\

\noindent \textbf{Claim 2}. $A$ is horizontal to $\ell_m$. \\

\noindent \textbf{Proof of Claim 2}. Suppose by contradiction that $A$ is perpendicular to $\ell_m$. Since $\ell$ contains a single empty node in $C'$ (the node that became empty once $r$ moved), $A$ must intersect with $\ell_m$ on this empty node. Since  $C'$ is symmetric, $C$ is also symmetric. Contradiction.  \\

\noindent \textbf{Claim 3}. $A$ does not lay on $\ell_m$. \\

\noindent \textbf{Proof of Claim 3}. Since $r'$ moved on $\ell$ on its adjacent node which is on the axes of symmetry in $C'$ then $C$ was also symmetric. Contradiction. \\

\noindent \textbf{Claim 4}. Let $r'' \ne r$ be another robot on $\ell_m$ in $C$. If $r''$ moves to its adjacent occupied node on $\ell_m$ in $C$ then if $C''$ is the configuration reached once $r''$ moves and if $C''$ is symmetric then the axes of symmetry in $C''$ are different from the axes of symmetry in $C'$.  \\

\noindent \textbf{Proof of Claim 4}. Let $X$ and $X'$ be the axes of symmetry in respectively $C'$ and $C''$ such that $X=X'$. By Claim 3, there exists another $\ell$-ring $\ell'_m$ that is symmetric to $\ell_m$ with respect to $X$ and $X'$. Since $r'$ and $r''$ are located on two different nodes in $C$, there are two empty nodes on $\ell'_m$ which is a contradiction.\\

Since we consider an asymmetric torus with $L<\ell$, by Claims 3 and 4 we can deduce a contradiction. Thus the lemma holds. 

\end{proof}

From Lemma \ref{lem:L--}, we can deduce:


\begin{lemma}\label{lem:uniquel}\label{lem:UniqueCl}
Starting from a rigid configuration $C$ in which $Unique(C)$ is false and $nb_{\ell_{\mathit{max}}}(C)=\ell$, a rigid configuration $C'$ with no outdated robots and in which $Unique(C)$ is true is eventually reached. Moreover, for any $\ell_i$ with $i \in \{1, 2, ..., \ell\}$ in $C'$, if $\ell_i$ hosts a multiplicity node then the following three conditions are verified: (1) $nb_{\ell_i}(C')= \ell-1$, (2) the multiplicity node hosts exactly two robots and (3) the multiplicity node is at a border of a $1$.block of size $\ell-1$.   
\end{lemma}

\begin{lemma}\label{lem:uniquelessl}\label{lem:UniqueCless}
Starting from a rigid configuration $C$ in which $|nb_{\ell_{mac}}|_C>1$ and $nb_{\ell_{\mathit{max}}}(C)<\ell$,  by executing our algorithm, a rigid configuration $C'$ with no outdated robots and in which $Unique(C')$ is true is eventually reached. Moreover, for any $\ell_i$ with $i \in \{1, 2, ..., \ell\}$ in $C'$, $\ell_i$ does not host a multiplicity node. 
\end{lemma}

\begin{proof}

    By our algorithm, the robot with the largest view in the set $R(C)$ is supposed to move. As $C$ is rigid and initially no node hosts a multiplicity,  a unique robot is allowed to move. Let us first state and prove some important claims: \\

\noindent \textbf{Claim 1}. Assume that $r$ moves on its current $\ell$-ring and its new position is not on the same $L$-ring as $u$. Then, the configuration $C'$ reached once $r$ moves is rigid. Moreover $r \in R(C')$ and $|R(C')|=1$. \\

\noindent \textbf{Proof of Claim 1}.
Assume by contradiction that the claim does not hold. That is either $C'$ is symmetric or $r$ is not the only closest robot to an empty node on $\ell$-max in $C'$. In both cases, this means that there is another robot $r'$ in $C$ which was even closer to an empty node on $\ell$-max. Contradiction.  \\

\noindent \textbf{Claim 2}. Assume that $r$ moves outside its $\ell$-ring but does not join a maximal $\ell$-ring in $C$. Then, the configuration $C'$ reached once $r$ moves is rigid. Moreover $r \in R(C')$ and $|R(C')|=1$. \\

\noindent \textbf{Proof of Claim 2}.
Assume by contradiction that the claim does not hold. That is either $C'$ is symmetric or $r$ is not the closest robot to an empty node on $\ell$-max in $C'$. In both cases, this means that there is another robot $r'$ in $C$ which was even closer to an empty node on $\ell$-max. Contradiction.  \\

By Claims 1 and 2, we know that as long as $r$ does not join a maximal $\ell$-ring or a node that is on the same $L$-ring as $u$ (for the first time), the configuration reached remains rigid. Note that by moving $r$ remains the only robot allowed to move as it becomes the only closest robot to an empty node on a maximal $\ell$-ring. At some time, $r$ has either to join an empty node in the same $L$-ring as $u$ or join $u$. If the configuration remains rigid at each time then a rigid configuration $C'$ in which there is a unique maximal $\ell$-ring is reached and the lemma holds. 

Let us consider the case in which a symmetric configuration $C'$ can be  reached when $r$ moves to join either a node that is on the same $L$-ring as $u$ or $u$ . Recall that by our algorithm, $r$ does not move in this case but $C'$ is computed by all robots to elect one robot to move in $C$. \\




Two cases are possible:

        \begin{enumerate}
            \item Robot $r$ joins an empty node on the same $L$-ring as $u$ for the first time. Note that in this case, the axes of symmetry in $C'$ lies on the $L$-ring containing $r$. Let $d1$ and $d2$ be the two $1$.blocks that has $u$ as a neighbor on $\ell$-target. Since $C'$ is symmetric, the size of $d1$ is equal to the size of $d2$. By our algorithm, $r'$, the robot which is on target-$\ell$ which is on the same $L$-ring as $r$ in $C$ is the one allowed to move. Its destination is $u$ (Note that we are sure of the existence of such a robot since otherwise $r$ will be supposed to move on its current $L$-ring instead of its current $\ell$-ring). In the case where $\ell_{\mathit{max}}(C)=\ell-1$,  by moving, $r'$ does not create a symmetric configuration as there is at least another maximal $\ell$-rings in $C$ whose unique empty node is not on the same $L$-ring as $r$. In the other cases, assume by contradiction that when $r'$ moves, the configuration reached $C''$ is also symmetric. As previously stated, the axes of symmetry lies on the $L$-ring that contains $r$ (otherwise $C$ contains a robot which was closer to an empty node on a maximal $\ell$-ring than $r$). Assume without loss of generality that $r'$ moved toward $d1$. Note that the size of $d1$ in $C''$ has increased by one while the size of $d2$ has decreased by one. That is, $d1$ and $d2$ cannot be symmetric which is a contradiction. We can thus conclude that the reached configuration once $r'$ moves is rigid. Robot $r$ is now on the same $L$-ring as $u$. By Claim 2, we know that as long as $r$ moves toward $u$ without joining $u$, the configuration remains rigid. Once $r$ becomes adjacent to $u$, if by joining $u$ the configuration remains rigid then the lemma holds. Otherwise, we retrieve Case \ref{case:join-u}.

             \item \label{case:join-u} Robot $r$ joins $u$. Let $\ell_m$ and $\ell_i$ be respectively the $\ell$-ring that hosts $u$ and $r$. Let $\ell_k$ be the other $\ell$-ring that is adjacent to $\ell_m$. If the reached configuration $C'$ is rigid then the lemma holds. Let us focus on the case in which $C'$ is symmetric. First assume that the axes of symmetry in $C'$ lies on the unique maximal $\ell$-ring (\ie it lies on $\ell_m$). Note that since $C'$ is symmetric, on $\ell_k$ there is no robot on the same $L$-ring as $r$ in $C$. Since initially, $Unique(C)$ is false, there are two $\ell$-rings which were maximal in $C$ and that are symmetric in $C'$ with respect to the unique maximal $\ell$-ring $\ell_m$. Let us refer to such $\ell$-rings by respectively $\ell_s$ and $\ell_{s'}$. In the following, in a given configuration,  we say that $\ell_i == \ell_{i'}$ (for any $i$ and $i'$) if given an orientation of the torus, the position of the occupied nodes in $\ell_i$ is the exactly the same as in $\ell_{i'}$. Note that $\ell_s == \ell_{s'}$ in $C'$. We write $\ell_i != \ell_{i'}$ if $\ell_i == \ell_{i'}$ does not hold. If $\ell_i == \ell_{i'}$ then $r == r'$ holds if $r'$ has the same position on $\ell_{i'}$ as $r$ on $\ell_i$. 
             We now state some important observations. Since $C'$ is symmetric, we can deduce that the number of $\ell$-rings $\ell_r$ satisfying the following properties is odd in $C$: (i) $\ell_r == \ell_m$, (ii) if $\ell_{i'}$ and $\ell_{k'}$ are adjacent to $\ell_r$ then without loss of generality $\ell_i == \ell_{i'}$ and $\ell_k == \ell_{k'}$. Let $R$ be the set of such $\ell$-ring. For any $\ell$-ring $\ell_r \not \in R$, their number is even in $C$.  
             
             By our algorithm, using the rigidity of $C$, one robot on either $\ell_s$ or $\ell_{s'}$ moves to its adjacent empty node on its current $\ell$-ring. Assume without loss of generality that this robot was selected from $\ell_s$.  Let us refer to the reached configuration after such a move by $C''$. We first show by contradiction that $C''$ is rigid. Note that, because of $\ell_s$ and $\ell_{s'}$, we are sure that there is no axes of symmetry that is perpendicular to $\ell_m$ in $C''$ and in addition, $\ell_m$ cannot be an axes of symmetry in $C''$. That is, the axes of symmetry in $C''$ lies on another $\ell$-ring that we denote by $\ell_{axes}$. First assume that, for any $\ell$-ring $\ell_r$ in $C$, $\ell_{axes} == \ell_r$ does not hold. That is, the number of $\ell$-ring $\ell_r \in R$ remains odd in $C''$. This means that $C''$ is rigid. A contradiction. Next, assume that for any $\ell$-ring $\ell_r \in R$, $\ell_{axes} != \ell_{r}$ but there exists an $\ell$-ring $\ell_r'$ in $C$ such that $\ell_r == \ell_{axes}$. As the number of such $\ell$-rings was even in $C$, their number becomes odd in $C''$. Similarly to the previous case, the number of $\ell$-ring $R$ remains also odd. Hence, $C''$ is rigid. A contradiction. Lastly, assume that for an $\ell$-ring $\ell_r \in R$ $\ell_r == \ell_{axes}$. Let $\ell_{i'}$ and $\ell_{k'}$ be the two $\ell$-rings adjacent to $\ell_{axes}$. Assume without loss of generality that $\ell_i == \ell_{i'}$. Let $r'$ be the robot on $\ell_{i'}$ such that $r' == r$. As $C''$ is symmetric and $\ell_{axes}$ is on the axes of symmetry, the node that is on $\ell_{k'}$ which is on the same $L$-ring as $r'$ is occupied. A contradiction. We can thus deduce that $C''$ is rigid.  
             
             Observe that, in the new configuration $C''$, another robot might become the closest to an empty node on a maximal $\ell$-ring with a the largest view, if by moving, the configuration is rigid then the robot can move, otherwise, we are sure that there is at least one robot in $R(C'')$ which is $r$ that can move and join $u$ without reaching a symmetric configuration (recall that the priority is to choose a robot that keeps the configuration rigid). Now, assume by contrast that the axes of symmetry is perpendicular to the unique maximal $\ell$-ring in $C'$. The axes of symmetry in this case cannot be on the same $L$-ring as $u$ otherwise $C$ is also symmetric. Since $C'$ is symmetric then each $\ell$-ring in the configuration, except maybe for target-$\ell$ and the $\ell$-ring hosting $r$, has at least one axes of symmetry in $C$ (considering the $\ell$-rings individually). Moreover, they all share at least one axes of symmetry.  By our algorithm, if there is an $\ell$-ring in $T(C)$ (the set of all none empty $\ell$-rings except for target-$\ell$) that does not contain exactly two $1$.blocks at distance $2$ from each other from each side then using the rigidity of $C$, one robot on such an $\ell$-ring is elected to move. The elected robot has to be the closest to the largest $1$.block on its current $\ell$-ring. By $r'$ and $\ell$, let us refer to respectively the elected robot to move and its current $\ell$-ring in $C$. If by moving, $r'$ does not join the biggest $1$.block then $r'$ is the only closest robot to a largest $1$.block and hence $\ell$ does not contain any axes of symmetry. That is, the reached configuration is rigid. Moreover, $r$ can join $u$ without reaching a symmetric configuration.  By contrast, if $r'$ joins the largest $1$.block, we show in the following that $C'$ cannot be symmetric. Assume by contradiction, that $C'$, is symmetric. Two cases are possible:
            \begin{enumerate}
                \item the $1$.block that $r'$ joined is on the axes of symmetry in $C'$. In this case, let $d1$ and $d2$ be two $1$.blocks of $C$ that are symmetric with respect to the axes of symmetry in $C'$ and which are the closest to it. In this case, $r'$ was part either of $d1$ or $d2$ in $C$. Assume without loss of generality that $r'$ was part of $d1$. Recall that by our algorithm, the size of $d1$ is less or equal to the size of $d2$.  When $r'$ moves, the size of $d1$ has decreased by one while the size of $d2$ did not change. That is, there is no axes of symmetry in $\ell$ (recall that the unique largest $1$.block has to be on the axes of symmetry). Contradiction. 
                
                \item Otherwise. Let $B$ be the size of the largest $1$.block in $C$. By our algorithm, the robot $r'$ to move is the one that is in the smallest $1$.block in $C$ which is the closest to a $1$.block of size $B$. When $r'$ moves, if $r'$ do not join a $1$.block of size $B$ then $r'$ is the only robot that is the closest to a $1$.block of size $B$ and hence the configuration reached is rigid. By contrast, if $r'$ joins a $1$.block of size $B$  then in the configuration reached $C''$, there is a unique largest $1$.block on $\ell$. Its size is $B+1$. By  \textit{Biggest} (respectively \textit{Block}) the largest $1$.block in $C''$ (respectively to $1$.block to which $r'$ belonged in $C$). If \textit{Block} contained more than one robot in $C$ then \textit{Block} still exists in $C''$ and its size decreased by one. As by assumption $C''$ is also symmetry, \textit{Biggest} is on the axes of symmetry. That is, \textit{Block} is symmetric to another $1$.block which shares a hole with \textit{Biggest}. This means that the size of this block is smaller then the size of \textit{Block} in $C$. Hence $r'$ was not enabled to move. Contradiction. Now, let us consider the case in which the size of \textit{Block} is equal to $1$ in $C$ ($r'$ is an isolated robot). When $r'$ moves, as $C'$ is supposed to be symmetric, the number of $1$.blocks of size $B$ with an isolated robot at distance $2$ is odd. Hence the configuration is rigid.  Contradiction.  
            \end{enumerate}
            
           Finally, if all $\ell$-rings in $T(C)$ contains only two $1$.blocks separated by exactly one empty node on each side then they either share exactly one axes of symmetry or all the axes (Observe that the axes of symmetry either crosses an $\ell$-ring on an empty node or in the middle of a $1$.block.  By our algorithm, using the rigidity of $C$, one robot at the border of the smallest $1$.block moves outside the $1$.block it belongs to. By doing so, it joins the largest $1$.block creating a new axes of symmetry which not shared by the other $\ell$-ring in $T(C)$. Thus the configuration that is reached is rigid. moreover, when $r$ moves to join $u$, the configuration remains rigid.  
           
        \end{enumerate}
        
        From the cases above, we can deduce that the lemma holds.

\end{proof}

From Lemmas \ref{lem:uniquel} and \ref{lem:uniquelessl} we can deduce the following Corollaries :

\begin{corollary}\label{cor:uniqueC}
Starting from a rigid configuration $C$ in which $Unique(C)$ is false, a rigid configuration $C'$ in which $Unique(C')$ is true, is  eventually reached. 
\end{corollary}

\begin{corollary}\label{cor:Unique_Multi}
Let $C$ be the first configuration in which Unique($C$) is true. Then, for any $\ell_i$ with $i\in \{1, 2, ..., \ell\}$, there is at most one multiplicity node. Moreover, if $\ell_i$ hosts a multiplicity node then the three following conditions hold: (1) $nb_{\ell_i}(C')= \ell-1$, (2) the multiplicity node hosts exactly two robots and (3) the multiplicity node is at a border of a $1$.block of size $\ell-1$.      
\end{corollary}

We now focus on configurations $C \in \mathcal{C}_{p_1}$ in which $Unique(C)$ is true and show that eventually a configuration $C' \in \mathcal{C}_{p_2}$ is reached.


\begin{lemma} \label{lem:undefined-1}
Let  $C \in \mathcal{C}_{Undefined}$ be a rigid configuration in which the following conditions hold:
\begin{itemize}
    \item $C$ does not contain outdated robots
    \item $nb_{\ell_i}(C) < nb_{\ell_k}(C)$
    \item  $\Gamma(C)$ is either rigid or contains a single axes of symmetry
\end{itemize} 
From $C$, a configuration $C' \in \mathcal{C}_{oriented }$ with no outdated robots is eventually reached. 
\end{lemma}

\begin{proof}
From corollary \ref{cor:uniqueC}, we know that $C$ do not contain an outdated robot. We consider the following cases:
\begin{enumerate}
    \item $\Gamma(C)$ is rigid. As by our algorithm, no robot on an $\ell$-ring $\ell_r \neq \ell_i$ moves, $\Gamma(C)$ remains rigid. That is, a unique node on $\ell_i$ can be uniquely identified. Let us refer to such a node by $u$. By our algorithm, robots on $\ell_i$ which are the closest to $u$ are the ones to move. their destination is their adjacent empty node on their current $\ell$-ring toward $u$. As the robots join $u$ one by one, the number of robots on $\ell_i$ decreases to eventually be equal to $1$. That is, eventually a configuration $C'$ in which $nb_{\ell_i}(C')=1$ is reached. Observe that once the robots join $u$ they are not allowed to move anymore. That is in $C'$ there are no outdated robots. Hence the lemma holds in this case.
    \item $\Gamma(C)$ is symmetric. First note that by definition, whatever the position of the robots on $\ell_i$, $\Gamma(C)$ remains symmetric with the same axes of symmetry. Depending on how the axes of symmetry crosses $\ell_i$, we consider the following cases: 
    \begin{enumerate}
        \item The axes of symmetry of $\Gamma(C)$ crosses $\ell_i$ on a single node ($\Gamma(C)$ is node-edge symmetric). Let us refer to this node by $u$. By our algorithm, $u$ is identified as a target node for all robots on $\ell_i$. The closest ones first move to join $u$ taking the empty path. That is, eventually all robots join $u$ and a configuration $C'$ in which  which $nb_{\ell_i}(C')=1$ is reached.
        \item The axes of symmetry crosses $\ell_i$ on two nodes ($\Gamma(C)$ is node-node symmetric). Let us refer to these nodes by $u$ and $u'$ respectively. By our algorithm, using the rigidity of $C$, one robot on either $u$ or $u'$ is elected to move. It destination is its adjacent node on its current $\ell$-ring. Since $nb_{\ell_i}(C) < nb_{\ell_k}(C)$, we are sure that neither $u$ nor $u'$ contains a tower (recall that a tower is only created to reduce the number of maximal $\ell$-rings when each of their nodes is occupied). As for case in which the axes of symmetry crosses a single node on $\ell_i$, a configuration $C'$ in which $nb_{\ell_i}(C')=1$ is reached. 
        \item The axes of symmetry crosses $\ell_i$ on edges ($\Gamma(C)$ is edge-edge symmetric). Assume without loss of generality that the axes of symmetry crosses $\ell_i$ on the edges $e_1=(u_1,u_2)$ and $e_2=(u_3,u_4)$ and that $u_1$ and $u_3$ are on the same side of the axes of symmetry. Let $U=\left\{u_j, j \in \{1,\ldots, 4\}\right\}$. The following cases are distinguished by our algorithm: 
        \begin{enumerate}
            \item $\forall u \in U$, $u$ is occupied. By our algorithm, using the rigidity of $C$, one of these nodes is elected. Assume without loss of generality that $u_1$ is elected. Robots on $u_1$ are the ones to move. their destination is $u_2$. Note that if $u_1$ hosts more than one robot and that the scheduler activates only a subset of robots then robots on $u_1$ remain the only ones allowed to move. That is, eventually, $u_1$ becomes empty and no robot has an outdated view. Also, thanks to both $\ell_i$ and $\ell_{\mathit{max}}$, the configuration reached is rigid. We retrieve the case \ref{case:3occupied}.
            
            \item \label{case:3occupied} Three nodes of $U$ are occupied. Assume without loss of generality that $u_1$ is empty. By our algorithm, robots on $\ell_i$ which are on on the side of $u_3$ and $u_1$ and are the closest to $u_3$ are the ones allowed to move. Their destination is their adjacent node toward $u_3$ taking the shortest path. As long as there are robots between $u_3$ and $u_1$, these robots are the only one allowed to move. That is, eventually, the side between $u_3$ and $u_1$ becomes empty. Once such a configuration is reached, by our algorithm, robots on $u_2$ are now the ones allowed to move. Their destination is their adjacent node in the opposite direction of $u_1$. Observe that as long as $u_2$ is occupied, robots on $u_2$ remain the only ones allowed to move. That is, eventually, $u_2$ becomes empty and we retrieve Case \ref{case:2occupied} with a rigid configuration (there is at least one occupied node on the side of $u_4$ and $u_2$ and no robots on the other side) in which $u_3$ and $u_4$ are occupied. We retrieve Case \ref{case:2occupied}.
            
            \item \label{case:2occupied} Two nodes of $U$ are occupied. Several scenarios are considered depending on the nodes that are occupied: first \textbf{($i$)} assume without loss of generality that $u_1$ and $u_2$ are occupied (the case in which the two nodes are neighbors). 
            By our algorithm, if there is no other occupied node on $\ell_i$ then using the rigidity of $C$, a node of $U$ is elected. Assume without loss of generality that this node is $u_1$. Robots on $u_1$ are the ones allowed to move. Their destination is $u_2$. Note that as long as there are robots on $u_1$, robots on $u_1$ remain the only ones allowed to move. That is, eventually, $u_1$ becomes empty. Thanks to $\ell_i$, we can deduce that the configuration reached is rigid. That is, we retrieve Case \ref{case:1occupied}. By contrast, if all robots which are not on a node of $U$ are on the same side of the axes of symmetry of $\Gamma(C)$ (assume without loss of generality that they are on the same side as $u_1$) then, robots on $u_2$ are the ones allowed to move. Their destination is $u_1$. Observe that as long as there are robots on $u_2$, these robots are the only ones allowed to move. That is, eventually $u_2$ becomes empty and no robot has an outdated view. Observe that since robots are on only one side of the axes of symmetry of $\Gamma(C)$ the configuration reached is rigid. Thus, we retrieve Case \ref{case:1occupied}. By contrast, if robots are on both sides of the axes of symmetry then, by our algorithm, one robot remains idle to play the role of a landmark. This robot is the one that is on $\ell_i$ which the farthest from the an occupied node of $U$ being on the same side of the axes of symmetry of $\Gamma(C)$. That is, if there are two robots that are the farthest to an occupied node of $U$ (there is one robot on each side) then, since $C$ is rigid, one robot is elected to move. It destination is its adjacent node towards the occupied node of $U$ being on the same side. By doing so, a configuration in which there is only one occupied node, $u$, which is the farthest from the occupied node of $U$ is reached. This robot (occupied node) is the landmark.  Assume without loss of generality that $u$ is on the same side as $u_1$. By our algorithm, robots that are on the opposite side of $u_1$ which are the closest to $u_2$ moves on their current $\ell$-ring to join $u_2$. As robots join $u$ one by one, eventually, a configuration in which robots occupy only one side of the axes of symmetry is reached. This case was discussed earlier.  Hence, we can deduce that we retrieve Case \ref{case:1occupied}. 
            
            Next, \textbf{($ii$)} let us now consider the case in which $u_1$ and $u_3$ are occupied (the case in which the two nodes of $u$ are on the same side of the axes of symmetry). By our algorithm, robots on the node of $U$ with the largest view are the ones allowed to move. Since $C$ is rigid, robots on exactly one node of $U$ are elected to move. Assume without loss of generality that robots on $u_1$ are the ones allowed to move. Their destination is their adjacent node in the opposite direction of $u_2$. Observe that as long as $u_1$ is occupied, robots on $u_1$ are the only ones allowed to move. That is, eventually, $u_1$ becomes empty and no robot has an outdated view. We thus, retrieve Case \ref{case:1occupied}. 
            
            Finally, ($iii$) assume without loss of generality that $u_1$ and $u_4$ are occupied (the case when the two nodes of $U$ are not neighbors and at different sides of the axes of symmetry of $\Gamma(C)$. By our algorithm, robots with the largest view on either $u_1$ or $u_4$ are allowed to move. Their destination is their adjacent node on $\ell_i$ which is not in the set $U$. Assume without loss of generality that robots on $u_1$ are the ones that are elected. Note that as long as $u_1$ is occupied, robots on $u_1$ remain the only ones to move. That is, eventually $u_1$ becomes empty. Thanks to $\ell_i$ the reached configuration is rigid. Thus, we retrieve Case \ref{case:1occupied}.
            
            \item \label{case:1occupied} Exactly one node of $U$ are occupied. Assume without loss of generality that $u_1$ is the node of $U$ which is occupied. By our algorithm, if all robots on $\ell_i$ are on the same side as $u_1$, then the closest robots on $\ell_i$ to $u_1$ moves to their adjacent node on their current $\ell$-ring towards $u_1$. That is eventually, all robots on $\ell_i$ join $u_1$ and no robot has an outdated view. Hence, a configuration $C' \in \mathcal{C}_{oriented}$ is eventually reached. By contrast, if all robots on $\ell_i$ are on the opposite side of $u_1$ then by our algorithm robots on $u_1$ are the ones to move. Their destination is their adjacent node of $U$. That is either we remain in Case  \ref{case:1occupied} with $u_2$ the only node of $U$ which is occupied and all robots on $\ell_i$ being on the same side as $u_2$ or a configuration in which there are exactly two nodes of $U$ ($u_1$ and $u_2$) which are occupied and neighbors. Note that in the later case, robots on $u_1$ remain the only ones allowed to move as there is no robot on the same side as $u_1$. Their destination remains also the same (they move toward $u_2$). Hence we can deduce that a configuration $C' \in \mathcal{C}_{oriented }$ is eventually reached. Finally, if there are robots on both sides of the axes of symmetry of $\Gamma(C)$ then by our algorithm, robots that are on the same side as $u_1$ which are the closest to $u_1$ are the ones allowed to move. Their destination is their adjacent node on their current $\ell$-ring towards $u_1$. That is eventually, the side of $u_1$ becomes empty and as discussed previously a configuration in which all robots on $\ell_i$ join the same node of $U$ is eventually reached. 
            
            \item $\forall u \in U$, $u$ is empty. By our algorithm, the closest robot on $\ell_i$ to a node of $U$ being on the same side of the axes of symmetry is the one allowed to move. Its destination is its adjacent node toward a node of $U$. In the case in which there are more than one such robot, as $C$ is rigid, robots on exactly one node are elected to move. that is, we retrieve Case \ref{case:1occupied}. 
        \end{enumerate}
    \end{enumerate}
       From the case above, we can deduce that eventually, all robots on $\ell_i$ gather on a unique node of $U$. Moreover, no robot has an outdated view as they have all moved and once they are on the same node, they are no more allowed to move. Hence, a configuration $C' \in \mathcal{C}_{oriented}$ with no outdated robots is eventually reached and the lemma holds. 
\end{enumerate}
\end{proof}

\begin{lemma}\label{lem:undefined-2}
Let  $C \in \mathcal{C}_{Undefined}$ be a rigid configuration in which the following conditions hold:
\begin{itemize}
    \item $C$ has no outdated robots.
    \item $nb_{\ell_i}(C) = nb_{\ell_k}(C)$.
    \item $\Gamma(C)$ is either rigid or contains a single axes of symmetry.
\end{itemize}

From $C$, a configuration $C' \in \mathcal{C}_{oriented }$ with no outdated robots is eventually reached. 
\end{lemma}

\begin{proof}

By Corollary \ref{cor:Unique_Multi}, we know that an $\ell$-ring $\ell'$ can host a multiplicity in $C$ only if $nb_{\ell'}(C)=\ell-1$. Moreover, the multiplicity is adjacent to an empty node in $\ell'$. That is, whenever $nb_{\ell'}(C)<\ell-1$, we are sure that each node of $\ell'$ hosts one robot. Hence, by moving, no new maximal $\ell$-ring is created. Depending on $\Gamma(C)$, robots behavior is different. Our algorithm distinguish the following cases:
\begin{enumerate}
    \item $\Gamma(C)$ is rigid. Using the rigidity of $\Gamma(C)$, a unique node, $u$ is selected from either $\ell_i$ or $\ell_k$. Assume without loss of generality that the selected node $u$ is on $\ell_i$. Note that since $C$ does not contain any outdated robots and since no robot of $\Gamma(C)$ is allowed to move, $u$ keeps being identified. By our algorithm, two cases are possible: $(i)$ $u$ is empty and $nb_{\ell_i}(C)=\ell-1$ and $(ii)$ all the other cases. In case ($i$), by our algorithm, the robot $r$ on $\ell_i$ that is adjacent to $u$ which is not part of a multiplicity moves to $u$. By Corollary \ref{cor:Unique_Multi}, we are sure that such a robot exists. By moving $u$ becomes occupied. Note that since $r$ was note part of a multiplicity, by moving no new maximal $\ell$-ring is created. That is, we retrieve Case $(ii)$. In case $(ii)$, by our algorithms, robots on $\ell_i$ move to join the selected node starting from the closest ones.  That is, eventually, one robot will join $u$ and a configuration $C'$ in which $nb_{\ell_i}(C') < nb_{\ell_k}(C')$ is reached. Robots on $u$ are not allowed to move anymore. If $C'$ is rigid and does not contain outdated robots then by Lemma \ref{lem:undefined-1}, we can deduce that the lemma holds. Otherwise, as $u$ is identified in a unique manner, the target node of the robots on $\ell_i$ remains node $u$. That is eventually, all robots on $\ell_i$ join $u$ and the lemma holds in this case.  
    
    \item $\Gamma(C)$ is node-edge symmetric.  Let $u$ (respectively $u'$) be the node on $\ell_i$ (respectively $\ell_k$) which is on the axes of symmetry of $\Gamma(C)$. Assume without loss of generality that $u$ is occupied while $u'$ is empty. By our algorithm, the robot that is on $\ell_i$ which is the closest to $u$ is the one allowed to move. If there are more than one robot, using the rigidity of $C$ only one robot is elected to move. By moving, either it becomes the only closest robot to $u$ or it joins $u$. In the first case, by our algorithm the robot that has moved remains the only one allowed to move. Its destination is its adjacent empty node toward $u$. That is eventually it joins $u$. As there is only one robot that was allowed to move. When the robot joins $u$, the configuration does not contain any outdated robot. In the reached configuration $C'$, $nb_{\ell_i}(C') < nb_{\ell_k}(C')$ holds. By our algorithm, robots on $u$ are not allowed to move anymore. That is, if $C'$ is rigid then by Lemma \ref{lem:undefined-1} we can deduce that the lemma holds. By contrast if $C'$ is symmetric, then as no robot moved from $\Gamma(C)$, the only possible axes of symmetry in $C'$ is the one that lies on the axes of $\Gamma(C)$. By our algorithm, in this case, robots on $\ell_i$ move to join $u$ taking the shortest path and starting from the closest robots. That is, eventually, all robots on $\ell_i$ are located on $u$. A configuration $C'' \in \mathcal{C}_{oriented }$ is reached. As robots are not allowed to move anymore once on $u$, $C''$ does not contain outdated robots. Hence, the lemma holds. Now assume that both $u$ and $u'$ are occupied. Let $R_i$ (respectively $R_k$) be the closest robot to $u$ (respectively $u'$) on $\ell_i$ (respectively $\ell_k$). By our algorithm, since $C$ is rigid, exactly one robot is selected from either $R_i$ or $R_k$. Assume without loss of generality that the selected robot is in $R_i$. This robot is the one allowed to move. Its destination is its adjacent empty on its current $\ell$-ring toward $u$. By moving, the robot either joins $u$ and we retrieve the case in which $nb_{\ell_i}(C') < nb_{\ell_k}(C')$ discussed earlier or the robot remains the only closest robot to $u$ and hence it is the only one allowed to move. That is, eventually, the robot joins $u$ and we can deduce that the lemma holds. Finally, assume that both $u$ and $u'$ are empty. Observe that in this case $nb_{\ell_i}(C) \ne \ell-1$ (otherwise $C$ is symmetric and not rigid). That is, by Corollary \ref{cor:Unique_Multi} neither $\ell_i$ nor $\ell_k$ contain a multiplicity. Let $R_i$ (respectively $R_k$) be the closest robot to $u$ (respectively $u'$) on $\ell_i$ (respectively $\ell_k$). By our algorithm, since $C$ is rigid, exactly one robot is selected from either $R_i$ or $R_k$. Assume without loss of generality that the selected robot is in $R_i$. This robot is the one allowed to move. Its destination is its adjacent empty node toward $u$. By moving, the robot either joins $u$ or it becomes the only robot allowed to move (as it is the only closest robot to $u$). That is eventually, the robot joins $u$. If the reached configuration is rigid then we retrieve the previous discussed case. Otherwise, thanks to the unique $\ell_{max}$, both $\ell_i$ and $\ell_k$ can still be identify, the same for $u$ as no robot from $\Gamma(C)$ has moved. By our algorithm, the closest robots to $u$ on $\ell_i$ are the ones allowed to move, their destination is their adjacent node towards $u$. That is, all robots eventually move to join $u$. Since they are no more allowed to move, we can deduce that the lemma holds in this case. 
    
    \item $\Gamma(C)$ is node-node symmetric. Let $u_i$ and $u'_i$ (respectively $u_k$ and $u'_k$) be the two nodes on $\ell_i$ (respectively $\ell_k$) on which the axes of symmetry of $\Gamma(C)$ passes through. We consider the following two cases:
    \begin{enumerate}
        \item $\forall u \in \{u_i, u'_i, u_k, u'_k\}$, $u$ is occupied. Let $U \subseteq \{u_i, u'_i, u_k, u'_k\}$ be the set of nodes that have an occupied adjacent node on their $\ell$-ring. By our algorithm, if $|U|\geq 1$ then by our algorithm, using the rigidity of $C$ a single node in $U$ is elected. Assume without loss of generality that this node is $u_i$. Robots on $u_i$ are the ones allowed to move. Their destination is their adjacent occupied node on their $\ell$-ring. Observe that if $u_i$ hosts a multiplicity then by Corollary \ref{cor:Unique_Multi}, there are only $2$ robots part of the multiplicity. If the scheduler activates only one robot, as the destination node is occupied, the robot on $u_i$ that were supposed to move remains the only one allowed to move. Its destination remains the same. That is,  by moving, we reach a configuration in which $\exists u \in  \{u_i, u'_i, u_k, u'_k\}$ such that $u$ is empty. We retrieve Case \ref{Case:UaxesEmpty}. By contrast, if  $|U| = 0$, by our algorithm, using the rigidity of $C$, one node of $ u \in  \{u_i, u'_i, u_k, u'_k\}$ is elected. The robot on $u$ node moves to one of its adjacent occupied node (the scheduler chooses the direction to take). Assume without loss of generality that $u=u'_i$. From Corollary \ref{cor:Unique_Multi}, we are sure that $u$ does not contain a multiplicity (A multiplicity exists only in the case where $nb_{\ell_i}(C)=\ell-1$). By moving, $u'_i$ becomes empty and we retrieve Case \ref{Case:UaxesEmpty}.

        \item \label{Case:UaxesEmpty} $\exists u \in  \{u_i, u'_i, u_k, u'_k\}$ such that $u$ is empty. If there is a unique node $u \in  \{u_i, u'_i, u_k, u'_k\}$ which is occupied then the closest robot to $u$ on the same $\ell$-ring as $u$ moves to their adjacent empty node toward $u$. Assume without loss of generality that $u=u_i$. Once one robot joins $u_i$, a configuration $C' \in \mathcal{C}_{Undefined}$ with $nb_{\ell_i}(C') < b_{\ell_k}(C')$ and no outdated robots is reached.  By contrast, let us consider now the case in which there are three node in  $\{u_i, u'_i, u_k, u'_k\}$ which are occupied. Assume without loss of generality that $u'_i$ is the empty node. By our algorithm, robots on $\ell_i$ which are the closest to $u_i$ are the only one allowed to move. By moving they keep being the closest ones. That is eventually, one robot on $\ell_i$ joins $u_i$, a configuration  $C' \in \mathcal{C}_{Undefined}$ with $nb_{\ell_i}(C') < b_{\ell_k}(C')$ is then reached. Finally, if there are two nodes of $ \{u_i, u'_i, u_k, u'_k\}$ which are occupied then first note that $nb_{\ell_i}(C)<\ell-1$ (otherwise three nodes of the set $\{u_i, u'_i, u_k, u'_k\}$ should have been occupied). That is, by Corollary \ref{cor:Unique_Multi}, neither $\ell_i$ nor $\ell_k$ contain a multiplicity. If the two occupied nodes of $\{u_i, u'_i, u_k, u'_k\}$ are part of the same $\ell$-ring (assume without loss of generality that these two nodes are $u_i$ and $u'_i$) then by our algorithm, using the rigidity of $C$ one node among $u_i$ and $u'_i$ is elected. The robot on the elected node is the one allowed to move. Its destination is its adjacent node on its current $\ell$-ring. By moving a configuration $C'$ in which $nb_{\ell_i}(C') < b_{\ell_k}(C')$ is reached. By contrast, if the two occupied nodes are on two different $\ell$-rings then let $R(C)$ be the set of robots on $\ell_i$ and $\ell_k$ which are the closest to the occupied node of $\{u_i, u'_i, u_k, u'_k\}$ on their $\ell$-ring. One robot from $R(C)$ is elected to move. Its destination is its adjacent node on its current $\ell$-ring toward the closest occupied node of $\{u_i, u'_i, u_k, u'_k\}$ which is on its $\ell$-ring. By moving it either joins the node or becomes the unique closest robot. In the latter case, the robot remains the only one allowed to move. That is eventually,  a configuration  $C' \in \mathcal{C}_{Undefined}$ with $nb_{\ell_i}(C') < b_{\ell_k}(C')$ is eventually reached.  
      
       Once a configuration  $C' \in \mathcal{C}_{Undefined}$ in which $nb_{\ell_i}(C') < b_{\ell_k}(C')$ is reached then if $C'$ is rigid then by Lemma \ref{lem:undefined-1} we can deduce that the lemma holds. By contrast, if $C'$ is symmetric then the axes of symmetry of $C'$ lies on the axis of symmetry of $\Gamma(C)$. By our algorithm, the robots on $\ell_i$ which are the closest to $u_i$ are the ones allowed to move. Their destination is their adjacent empty node on their current $\ell$-ring. That is, eventually all robots on $\ell_i$ joins $u_i$. A configuration $C'' \in \mathcal{C}_{oriented}$ is then reached. As robot on $u_i$ are not allowed to move we can deduce that the lemma holds.
        
    \end{enumerate}
    
    \item $\Gamma(C)$ is edge-edge symmetric. Assume without loss of generality that: 
        \begin{itemize}
            \item the axes of symmetry of $\Gamma(C)$ passes through respectively $e_1=(u_1,u_2)$ and $e_2=(u_3,u_4)$ on $\ell_i$ and $e'_1=(u'_1,u'_2)$ and $e'_2=(u'_3,u'_4)$ on $\ell_k$.
            \item nodes $u_1$ and $u_3$ (respectively $u'_1$ and $u'_3$) are the same side of the axes of symmetry on $\ell_i$ (respectively $\ell_k$). 
        \end{itemize}
    
     Let $\mathcal{L}=\{\ell_i, \ell_k\}$, $U_i=\{u_1, u_2, u_3, u_4\}$, $U_k=\{u'_1, u'_2, u'_3, u'_4\}$ and  $U=U_i \cup U_k$. By our algorithm, a single $\ell$-ring is selected and robots that are on the selected $\ell$-ring behave exactly in the same manner as in the case in which $nb_{\ell_i}(C) \ne nb_{\ell_k}(C)$. To determine which $\ell$-ring to select, robots check if a given number of properties are verified in a given order: 
    
    \begin{enumerate}
        \item \label{case:ui=2Empty} There exists an $\ell$-ring in $\mathcal{L}$, let this $\ell$-ring be, without loss of generality $\ell_i$, in which $|U_i|=2$ with the two nodes of $U_i$ being neighbors to reach other and $Free(u_1,u_3) \wedge Free(u_1,u_3)$ holds. As $C$ is rigid and $\Gamma(C)$ is symmetric, we are sure that if $|U|_k=2$, the the two nodes of $U_k$ cannot be neighbors (Otherwise $C$ is symmetric). By our algorithm, $\ell_i$ is the one that is selected. The robot to move is the one that is on an occupied node of $U_i$ with the largest view. Its destination is adjacent occupied node. That is, the robot moves to an occupied node of $U_i$. By moving, a configuration $C'$ in which $nb_{\ell_i}(C')=1$ is reached. As no other robot is allowed to move and since all robots that were allowed to move have moved, we can deduce deduce that there are no outdated robots. Hence, we can deduce that the lemma holds.  
        
        \item \label{case:ui=1} There exists an $\ell$-ring in $\mathcal{L}$, let this $\ell$-ring be without loss of generality $\ell_i$ in which $|U_i|=1$ (Assume that $u_1$ is the occupied node of $U_i$), $Free(u_2,u_4) \wedge \neg Free(u_1,u_3)$ holds. If $\ell_k$ does not satisfy the same properties then $\ell_i$ is the one that is selected. The robot to move is the one that is on $\ell_i$ which the closest to $u_1$. Note that since $nb_{\ell_i}(C) \ne \ell -1$, by Corollary \ref{cor:Unique_Multi}, we are sure that the robot to move is not part of a multiplicity. That is, by moving, it either joins $u_1$ or becomes the only robot that is the closest to $u_1$. In the latter case, this robot remains the only one a allowed to move. Its destination is its adjacent node toward $u_1$. Hence, eventually it joins $u_1$. By joining $u_1$, a configuration $C'$ in which $nb_{\ell_i}(C') < nb_{\ell_k}(C')$. Since there is no other robot that was allowed to move, in $C'$, we are sure that there is no robot with an outdated view. If $nb_{\ell_i}(C') = 1$ then the lemma holds. By contrast, if $nb_{\ell_i}(C')>1$ then, since $\Gamma(C)=\Gamma(C')$ and there are robots only on one side of $\Gamma(C')$'s axes of symmetry, $C'$ is rigid. Moreover, robots on $u_1$ are not allowed to move anymore, the robots that are on $\ell_i$ which are the closest to $u_1$ are the ones to move. Their destination is their adjacent node on their $\ell$-ring toward $u_1$. That is, eventually all robots on $\ell_i$ joins $u_1$. Thus, lemma holds. 
        
        Now assume that $\ell_k$ verify the same properties as $\ell_i$. That is, $|U_k|=1$ (Assume without loss of generality that $u'_1$ is the occupied node of $U_k$), $Free(u'_2,u'_4) \wedge \neg Free(u'_1,u'_3)$. In this case, by our algorithm, the selection is done with respect to distance of a robot to the occupied node of $U$ located on its $\ell$-ring. More precisely, let $d_i$ (respectively $d_k$) be the smallest distance between a robot on $\ell_i$ (respectively $\ell_k$) from $u_1$ (respectively $u'_1$). Assume without loss of generality that $d_i < d_k$. In this case, the robot on $\ell_i$ which is the closest to $u_1$ is the one allowed to move (let us refer to this robot by $r$). The destination of $r$ is its adjacent node on its $\ell$-ring toward $u_1$. By moving, $r$ either joins $u_1$ or becomes the only closest robot to $u_1$. That is, in the configuration reached $d_i$ remains smaller than $d_k$ and hence $\ell_i$ keeps being the only $\ell$-ring to be selected. Note this holds until $r$ joins $u_1$. When $r$ joins $u_1$ as discussed previously, we can deduce that the lemma holds. Finally, in the case in which $d_i = d_k$ then the $\ell$-ring that hosts a robot which is at distance $d_i$ from the occupied node of $U$ being on its current $\ell$-ring, with the largest view is the one which is selected. Note that this is possible as $C$ is rigid. By moving, the robot either joins the occupied node of $U$ or becomes the only closest robot to an occupied node of $U$ which has been already discussed. Thus, We can deduce that the Lemma holds. 
        
        \item \label{case:ui=2-neighbor} Next, there exists an $\ell$-ring in $\mathcal{L}$, let this $\ell$-ring be without loss of generality $\ell_i$ in which $|U_i|=2$ and the two occupied nodes of $U_i$ are neighbors (let these two nodes be respectively $u_1$ and $u_2$), $Free(u_1,u_3) \vee Free(u_2,u_4)$ holds. Note that both $Free(u_1,u_3) \wedge Free(u_2, u_4)$ does not hold otherwise we are in Case \ref{case:ui=2Empty}. By our algorithm, if $\ell_k$ does not satisfy the same properties then, $\ell_i$ is selected. Otherwise, let the two occupied nodes of $U_k$ be respectively $u'_1$ and $u'_2$ and let $r$ and $r'$ be the robots located on respectively $u_2$ and $u'_2$. By our algorithm, $view_{r}(t)(1) < view_{r'}(t)(1)$ then $\ell_i$ is selected. Otherwise $\ell_k$ is selected (Recall that since $C$ is rigid, we are sure that $view_{r}(t)(1) \ne view_{r'}(t)(1)$). Assume without loss of generality that $\ell_i$ is selected. By our algorithm, if $Free(u_1,u_3)$ holds then the robot on $u_1$ is the one to move. Its destination is $u_2$. By contrast, if $Free(u_2,u_4)$, then the robot on $u_2$ is the the one allowed to move. Its destination is $u_1$. By moving, $u_2$ becomes empty. As $u_1$ was occupied, in the configuration reached $C'$, $nb_{\ell_i}(C') < nb_{\ell_k}(C)$. Moreover, as there is one side of $\ell_i$ which is occupied, we are sure that the configuration reached $C'$ is rigid. Robots on $u_1$ are no more allowed to move as robots being on the same side as $u_1$ are the ones that move, starting from the ones that are the closest to $u_1$, toward $u_1$. That is, eventually, all robots on $\ell_i$ join $u_1$ and we can deduce that the lemma holds. 
        
        \item There exists an $\ell$-ring in $\mathcal{L}$, let this $\ell$-ring be, without loss of generality $\ell_i$, in which $|U_i|=2$ and the two occupied nodes of $U_i$ are on the same side of $\Gamma(C)$'s axes of symmetry (let these robots be respectively $u_1$ and $u_3$), $Free(u_2, u_4)$ and $\neg Free(u_1,u_3)$. By our algorithm, if $\ell_k$ does not satisfy the same properties then, $\ell_i$ is selected. Otherwise, the $\ell$-ring that hosts a robot of $U$ with the largest view is the one to be elected (note that this is possible as $C$ is rigid). Assume without loss of generality that $\ell_i$ is elected. Let $r$ and $r'$ be the robots on respectively $u_1$ and $u_3$. If $view_{r}(t)(1) < view_{r'}(t)(1)$ then robot $r'$ is the one allowed to move. Its destination is its adjacent node on its $\ell$-ring in the opposite direction of $u_4$. By moving $u_3$ becomes empty. If $r'$ moved to an empty node, as $Free(u_2, u_4)$ holds, we retrieve Case \ref{case:ui=1}. Otherwise ($r'$ moves to an occupied node), a configuration $C'$ in which $nb_{\ell_i}(C')<nb_{\ell_k}(C')$ is reached. By our algorithm, robots that are the closest to $u_1$, are the ones allowed to move. Their destination is their adjacent node toward $u_1$. Observe that if $r'$ is the closest to $u_1$, as it is part of a multiplicity of size $2$, by moving, we can reach alternatively a configuration $C'$ in which $nb_{\ell_i}(C')<nb_{\ell_k}(C')$ and a configuration $C''$ in which $nb_{\ell_i}(C'')=nb_{\ell_k}(C'')$. However, since in both configurations, robots keep the same destination (recall that $|U_i|=1$ and $Free(u_2, u_4)$ holds) then eventually, they will join $u_1$ and we can deduce that the lemma holds. 
        
        \item \label{case:ui=3} there exists an $\ell$-ring in $\mathcal{L}$, let this $\ell$-ring be without loss of generality $\ell_i$ in which $|U_i|=3$ (let $u_1$ be the unique empty node in $U_i$) and $Free(u_1,u_3)$. By our algorithm, if $\ell_k$ does not satisfy the same properties then, $\ell_i$ is elected. Otherwise, assume loss of generality that the empty node on $\ell_k$ is $u'_1$. Let $r$ and $r'$ be the two robots located on respectively $u_2$ and $u'_2$. By our algorithm, if $view_{r}(t)(1) > view_{r'}(t)(1)$ then, $\ell_i$ is elected (recall that $C$ is rigid and hence we are sure that $view_{r}(t)(1) \ne view_{r'}(t)(1)$). Otherwise $\ell_k$ is selected. Assume without loss of generality that $\ell_i$ is the elected $\ell$-ring. The robot, $r$ on $u_2$ is the one allowed to move. Its destination is its adjacent node in the opposite direction of $u_1$. If $r$ moves to an empty node then, since $Free(u_1,u_3)$ holds, we retrieve Case \ref{case:ui=2-neighbor}. By contrast, if it moves to an occupied node then, a configuration $C'$ in which $nb_{\ell_i}(C')<nb_{\ell_k}(C')$ is reached. Again, as $Free(u_1,u_3)$ holds, the robot on $u_3$ is the one that is allowed to move. Its destination is $u_2$. By moving, a configuration $C''$ in which the only node of $U_i$ to be occupied is $u_4$. Moreover, $nb_{\ell_i}(C'')<nb_{\ell_k}(C'')-1$. That is, when $r$ moves toward $u_4$, as it is part of a multiplicity of size $2$, the scheduler can break the multiplicity. However, as $nb_{\ell_i}(C'')<nb_{\ell_k}(C'')-1$, $\ell_i$ remains the one to be elected and $u_4$ remains the target of the robots on $\ell_i$. That is, eventually, all robots join $u_4$ and the lemma holds. 
        
        \item Non of the cases above is verified. Assume without loss of generality that $|U_i| \leq |U_k|$. The following cases are possible: 
            \begin{enumerate}
                \item $|U_i|=4$. Observe that in this case $|U|_i = |U|_k$. Let $R$ be the set of robots on a node of $U$. The $\ell$-ring that hosts the robot of $R$ with the largest view is the elected $\ell$-ring (Recall that this is possible as $C$ is rigid). Assume without loss of generality that $\ell_i$ is selected. By our algorithm, using the rigidity of $C$, a unique robot, say $r$, on a node of $U_i$ is selected to move. Assume without loss of generality that $r$ is located on $u_1$. The destination of $r$ is $u_2$. By Corollary \ref{cor:Unique_Multi}, $u_1$ hosts no more than $2$ robots in $C$. If $u_1$ hosts a multiplicity and the scheduler activates only one robot on $u_1$, as the robot moves to an occupied node, the configuration remains the same and hence, the robot that was supposed to move is the only one allowed to move. That is eventually, $u_1$ becomes empty and $u_2$ hosts at least $2$ and at most $3$ robots ($3$ robots in the case in which $nb_{ell_i}(C)=\ell-1$. In the configuration reached $C'$, $nb_{\ell_i}(C') < nb_{\ell_k}(C')$. 
                
                Let us first discuss the case in which $u_2$ hosts a multiplicity of size $3$. Since $Free(u_1,u_3)$ does not hold (recall that $nb_{ell_i}(C)=\ell-1$). By our algorithm, the robot that is on the same side as $u_1$ which is the closest to $u_3$ is the one allowed to move. Its destination is its adjacent node toward $u_3$. That is, eventually, $Free(u_1,u_3)$ becomes true. Robots on $u_2$ are then allowed to move as we retrieve Case \ref{case:ui=3}. Note that, in every configuration reached $C'$, $nb_{\ell_i}(C')<nb_{\ell_k}(C')$ (even if the scheduler activates only a subset of robots, as by moving they add at most one occupied node) and hence $\ell_i$ is uniquely identified. When $u_2$ becomes empty, robots that are on $u_3$ are the one toward $u_4$. As long as $u_4$ is occupied, the configuration remains the same and hence only robots on $u_3$ are allowed to move. Eventually, $u_3$ becomes empty and in the configuration reached $|U_i|=1$ and $Free(u_1,u_3)$ holds. As discussed previously, all robots on $\ell_i$ move to join $u_4$ starting from the closest ones. Eventually, a configuration $C''$ in which $nb_{\ell_i}(C'')=1$ is reached. Hence the lemma holds.  
                
                Lastly, let us consider the case in which $u_2$ hosts two robots. If $Free(u_1,u_3)$ does not hold then robots located on $\ell_i$ which are on the same side as $u_1$ move to join $u_3$ starting from the closest one. That is, as for the case in which $u_2$ hosts a multiplicity of size $3$, $\ell_i$ is uniquely identified as the number of its occupied nodes is strictly smaller than the number of occupied nodes on $\ell_k$. By contrast, if $Free(u_1,u_3)$ holds, then since in the configuration $|U|_k=4$ and $|U|_i=3$ (recall that $u_1$ became empty), $\ell_i$ is also uniquely identified. As discussed in Case \ref{case:ui=3}, we can deduce that the lemma holds.  
                
                \item $|U_i|=3$. By our algorithm if $|U_i| < |U_k|$ then $\ell_i$ is the $\ell$-ring that is selected. By contrast, if $|U_i| = |U_k|$ then, assume without loss of generality that $u_1$ and $u'_1$ are the empty nodes on $\ell_i$ and $\ell_k$ respectively. Observe that $\neg Free(u_1,u_3)$ and $\neg Free(u'_1,u'_3)$ holds (otherwise we are in Case \ref{case:ui=3}).  Let $R_i$ (respectively $R_k$) be the set of robots on $\ell_i$ (respectively $\ell_k$) being on the same side of the axes of symmetry as $u_1$ (respectively $u'_1$). Let $R= R_i \cup R_k$. If without loss of generality $|R_i| < |R_k|$ then, $\ell_i$ is selected. By contrast, if $|R_i| = |R_k|$ then, let $r_1$ (respectively $r'_1$) be the robot on $\ell_i$ (respectively $\ell_k$) which is the closest to $u_3$ (respectively $u'_3$). If without loss of generality  $dist(r_1, u_3) < dist(r'_1, u'_3)$ then $\ell_i$ is selected. Otherwise, if $dist(r_1, u_3) = dist(r'_1, u'_3)$ then if $view_{r_1}(t)(1)> view_{r'_1}(t)(1)$ then $\ell_i$ is selected. Otherwise, $\ell_k$ is selected. Assume without loss of generality that $\ell_i$ is the one that is selected. By our algorithm, the robot on $\ell_i$ which is the closest to $u_3$, is the one allowed to move. Its destination is its adjacent node toward $u_3$. By moving, the robot joins $u_3$ or becomes the only robot that is closest to $u_3$. Let $C'$ be the reached configuration. If the robot joined $u_3$ then $nb_{\ell_i}(C') < nb_{\ell_k}(C')$. Since $U_i=3$ in $C'$, $C'$ is rigid. By Lemma \ref{lem:undefined-1}, we can deduce that the lemma holds. By contrast, if the robot does not join $u_3$, as the robot get even closer to $u_3$, $\ell_i$ remains the selected $\ell$-ring and the robot that has moved remains the only one allowed to move. That is eventually, it joins $u_3$ and as discussed previously, we can deduce that the lemma holds.

                \item $|U_i|=2$. By our algorithm if $|U_i| < |U_k|$ then $\ell_i$ is the $\ell$-ring that is selected. By contrast, if $|U_i| = |U_k|$ then, three cases are possible on each $\ell$-ring of $\mathcal{L}$ depending on the nodes of $U$ that are occupied: (I) the two nodes are neighbors, (II) the two nodes are on the same side of $\Gamma(C)$'s axes of symmetry. (III) the two nodes are not neighbors and are on different side of the axes of symmetry. We set: $Case(I) > Case(II) > Case(III)$ where $Case(a) > Case(b)$ means that $Case(a)$ has a higher priority than $Case(b)$. That is, if $\ell_i$ and $\ell_k$ have two different priorities then, the $\ell$-ring with the largest priority is one that is selected. By contrast, if the two $\ell$-rings have the same priority (they belong to the same case), the selection is done in the following manner: if both $\ell_i$ and $\ell_k$ belongs to $Case(I)$ (assume without loss of generality that $u_1$, $u_2$, $u'_1$ and $u'_2$ are the nodes of $U$ that are occupied) then, let $F_1$, $F_2$ (respectively $F'_1$, $F'_2$) be the number of robots on $\ell_i$ (respectively $\ell_k$) being on each side of $\Gamma(C)$'s axes of symmetry (Observe that $\forall i \in \{1,2\}$, $F_j \ne 0$ and $F'_j \ne 0$, otherwise we are in Case \ref{case:ui=2-neighbor}). Let $F = \mathit{min}(F_1,F_2,F'_1,F'_2)$. The $\ell$-ring that has a side with $F$ robots is elected. If both $\ell$-rings has a side with $F$ robots, we use the distance to break the symmetry. That is, let $d$ be the largest distance between an occupied node on an $\ell$-ring of $\mathcal{L}$ and the occupied node of $U$ on the same $\ell$-ring. Let $R_1$ and $R_2$ be the set of these nodes located on respectively $\ell_i$ and $\ell_k$. Let $R= R_1 \cup R_2$.  If without loss of generality $R_1 < R_2$ then $\ell_i$ is selected. Otherwise, let $d_i$ (respectively $d_k$) be the smallest distance between a robot on $\ell_i$ (respectively $\ell_k$) that are in a different side from the node of $R_i$ (respectively $R_k$). If without loss of generality $d_i < d_k$ then $\ell_i$ is selected. If $d_i = d_k$, the $\ell$ that hosts a robot that is at distance $d_i$ from an occupied node of $U$ with the largest view is one that is selected.  Now, if both $\ell$-ring of $\mathcal{L}$ belong to $Case(II)$ or if they both belong to $Case(III)$ then, let $r$ (respectively $r'$) be the robot allowed to move on $\ell_i$ (respectively $\ell_k$) with respect to our algorithm. If $view_r(t)(1) > view_{r'}(t)(1)$ then $\ell_i$ is elected. Otherwise, $\ell_k$ is elected. From here, we know that a unique $\ell$-ring is selected. Assume without loss of generality that this $\ell$-ring is $\ell_i$. We discuss each case separately:
                    \begin{itemize}
                        \item The two occupied nodes of $U_i$ are neighbors. Let $R$ be the set of occupied nodes on $\ell_i$ that are the farthest from an occupied node of $U_i$ being on the same side of $\Gamma(C)$'s axes of symmetry. Note that $ 1 \leq |R| \leq 2$ (one at each side of the axes of symmetry). If $|R|=2$ then, as $C$ is symmetry the robot that is on a node of $R$ with the largest view is the one allowed to move. Its destination is its adjacent node on its $\ell$-ring toward the occupied node of $U_i$, being on the same side, taking the shortest path. By moving, a configuration $C'$ which is still in $Case(I)$ and in which $|R|=1$ is reached. When $|R|=1$, the robot that is on $\ell_i$ in the opposite side of the axes from the node of $u$ and which is the closest to the occupied node of $U_i$ is the one allowed to move. By moving, it either joins the occupied node of $U_i$ or it becomes the only closest robot to an occupied node of $U$. Hence, in the latter case, $\ell_i$ remains the one that is elected and the robot that has moved remains the only one allowed to move. That is, eventually it joins the occupied node of $U_i$. In the configuration reached $C'$, $nb_{\ell_i}(C') < nb_{\ell_k}(C')$ and thanks to the farthest occupied node in $R$, the configuration is rigid. Hence, by Lemma \ref{lem:undefined-1}, we can deduce that the lemma holds. 
                        
                        \item The two occupied nodes of $U_i$ are located on the same side or different sides of $\Gamma(C)$'s axes of symmetry. By our algorithm, the robot on an occupied node of $U_i$ having the largest view is the one allowed to move. Its destination is its adjacent node on its current $\ell$-ring in the opposite direction of the adjacent node in $U_i$. Assume without loss of generality that this node is $u_1$. Once the robot moves, $u_1$ becomes empty. If the robot of $u_1$ joined an occupied node, then in the configuration reached $C'$, $C'$, $nb_{\ell_i}(C') < nb_{\ell_k}(C')$. As $C'$ is rigid (thanks to the unique occupied node of $U_i$), by Lemma \ref{lem:undefined-1}, we can deduce that the lemma holds. By contrast, if the robot of $u_1$ joined an empty node then $nb_{\ell_i}(C') = nb_{\ell_k}(C')$ but $U_i=1$. We thus retrieve Case \ref{case:ui=1nsp}. 
                        
                    \end{itemize}
                
                 \item \label{case:ui=1nsp} $|U_i|=1$. By our algorithm if $|U_i| < |U_k|$ then $\ell_i$ is the $\ell$-ring that is selected. By contrast, if $|U_i| = |U_k|$ then, let $d$ be the smallest distance between a robot on an $\ell$-ring of $\mathcal{L}$ and a node of $U$ located on the same $\ell$-ring. The $\ell$-ring that hosts a robot at distance $d$ from a node of $U$ with the largest view is the one that is elected (this is possible as $C$ is symmetric). Assume without loss of generality that $\ell_i$ is the one that is selected and that $u_1$ is the only occupied node of $U_i$. By our algorithm, the robot that is the closest to $u_1$ is the one allowed to move. Its destination is its adjacent node on its current $\ell$-ring toward $u_1$. By moving, it either joins $u_1$ or get closer to $u_1$. If the robot of $u_1$ joined an occupied node, then in the configuration reached $C'$, $C'$, $nb_{\ell_i}(C') < nb_{\ell_k}(C')$. As $C'$ is rigid (thanks to the unique occupied node of $U_i$), by Lemma \ref{lem:undefined-1}, we can deduce that the lemma holds. By contrast, if the robot did not join $u_1$, then as this robot is the only one that is the closest to $u_1$, $\ell_i$ remains the selected $\ell$-ring and the same robot remains the only one allowed to move. That is, eventually, it joins $u_1$. Hence, we can deduce that the lemma holds in this case too. 
                 
                 \item $|U_i|=0$.if $|U_i| < |U_k|$ then $\ell_i$ is the $\ell$-ring that is selected. By contrast, if $|U_i| = |U_k|$ then, let $d$ be the smallest distance between a robot on an $\ell$-ring of $\mathcal{L}$ and a node of $U$ on its $\ell$-ring. The $\ell$-ring that hosts a robot at distance $d$ from a node of $U$ with the largest view is the one that is elected (Again, this is possible as $C$ is rigid). Assume without loss of generality that $\ell_i$ is selected. By our algorithm, the robot that is the closest to a node of $U_i$ with the largest view is the one that is allowed to move. Its destination is its adjacent node toward the closest node of $u_i$ taking the shortest path. By moving it either joins a node of $U_i$ and in this case, we retrieve Case \ref{case:ui=1nsp} or it becomes the only robot that is the closest to a node of $U$. Hence, $\ell_i$ remains the one that is selected and the robot that has moved remains the only one allowed to move. That is, the robot eventually joins a node of $U_i$ and we retrieve Case \ref{case:ui=1nsp}.

            \end{enumerate}

    \end{enumerate}

 \end{enumerate}
From the cases above we can deduce that the lemma holds. 
\end{proof}

\begin{lemma}\label{lem:periodicGamma}
Let  $C \in \mathcal{C}_{Undefined}$ be a rigid configuration with no outdated robots such that $\Gamma(C)$ is periodic with at least four occupied $\ell$-rings . There exists a robot $r$ on an $\ell$-ring $\ell_t \not \in \{\ell_i, \ell_k\}$ such that if $r$ moves then a configuration $C' \in \mathcal{C}_{Undefined}$ with no outdated robots is reached with $\Gamma(C')$ either rigid or contains a single axes of symmetry. 
\end{lemma}

\begin{proof}
By contradiction assume that such a robot does not exist. Let $d$ and $\mathcal{D}$ be respectively the smallest distance between two occupied nodes on $\ell_t$ in configuration $C$ and the largest $d$.block on $\ell_t$ IN $C$. Note that since $\Gamma(C)$ is periodic, $nb_{\ell_t}(C)<\ell-1$ (Otherwise $\Gamma(C)$ is not periodic as $d=1$ and $\ell_t$ contains a single block $\mathcal{D}$ of size $\ell$-1). Let $R(C)$ be the set of robots on $\ell_t$ in $C$ that are the closest to a block $\mathcal{D}$. We distinguish two cases: 
\begin{itemize}
    \item Set $R(C)$ is not empty. As $C$ is rigid, one robot in $R$ can be elected to move. Let this robot be the one with the smallest view. We refer to this robot by $r$. Let $C'$ be the configuration reached when $r$ moves. In $C'$ either there is a single biggest $d$.block $\mathcal{D}$ as $r$ joins one of the biggest $d$.block in $C$ or $r$ becomes the only robot that is the closest to a $d$.block $\mathcal{D}$. In the case in which $\Gamma(C')$ contains an axes of symmetry then this axes is unique (it crosses the unique biggest block $\mathcal{D}$ in $C'$) which is a contradiction. In the later case, $C'$ is rigid otherwise in $C$ there was another robot $r'$ which was closer to a block $\mathcal{D}$ which is also a contradiction. 
    
    \item Set $R(C)$ is empty (there is a single $d$.block on $\ell_t$ in $C$). In this case, let $r$ be the robot at the border of the unique $d$.block with the smallest view. Note that as $C$ is rigid, such a robot is unique. If $r$ moves to its adjacent node on $\ell_t$ inside the $d$.block it belongs to then in $C'$, the configuration reached once $r$ moves, there is a unique $d-1$.block. Hence, if $\Gamma(C')$ contains an axes of symmetry then this axes is unique which a contradiction.    
\end{itemize}

From the cases above we can deduce that there exists a robot $r$ on  an $\ell$-ring $\ell_t \not \in \{\ell_i, \ell_k\}$ such that if $r$ moves then a configuration $C' \in \mathcal{C}_{Undefined}$ is reached with $\Gamma(C')$ being either rigid or contains a single axes of symmetry. Note that since there is a unique robot that is allowed to move. In $C'$ there are no outdated robots. Hence the Lemma holds. 
\end{proof}

From Lemmas \ref{lem:undefined-1}, \ref{lem:undefined-2} and Lemma \ref{lem:periodicGamma} we can deduce the following corollary: 

\begin{corollary}\label{cor:undefined}
Starting from a configuration $C \in \mathcal{C}_{Undefined}$ with no outdated robots, a configuration $C' \in \mathcal{C}_{Oriented}$ with no outdated robots is eventually reached. 
\end{corollary}


\begin{lemma}\label{lem:empty}
Starting from Configuration $C \in \mathcal{C}_{Empty}$ in which there are no outdated robots, a configuration $C' \in \mathcal{C}_{p_2}$ with no outdated robots is eventually reached. 
\end{lemma}

\begin{proof}
Two cases are possible as follows: 
\begin{enumerate}
\item $\forall j \in \{1,2,\dots,L\}$ such that $\ell_j \ne \ell_{\mathit{max}}$, $nb_{\ell_j}(C)=0$. Remember that $C$ contains a single $\ell$-ring that is occupied which is $\ell_{\mathit{max}}$ by default. Since the configuration is rigid, by our algorithm, one robot is selected from $\ell_{\mathit{max}}$ to move to its adjacent empty node outside the $\ell$-ring it belongs to (the direction is chosen by the adversary). Once it moves, a configuration $C' \in \mathcal{C}_{p_2}$ is then reached. 
 
\item Otherwise \ie $\exists~j \in \{1,2 \dots, L\}$ such that $\ell_{j} \ne \ell_{\mathit{max}}$ and $nb_{\ell_j}(C) \ne 0$. According to our algorithm robots in $R_m$ are the ones allowed to move where $R_m$ is the set of robots that are the closest to a node on $\ell_{\mathit{max}}$. Using the rigidity of $C$, a unique robot from $R_m$ is selected. Let us refer to this robot by $r$. The destination of $r$ is its adjacent empty node outside its $\ell$-ring toward $\ell_{\mathit{max}}$ taking the shortest path. Observe that when the robot moves, if it is neither on $\ell_i$ nor $\ell_{k}$, $r$ remains the only one allowed to move (since it is the closest one to $\ell_{\mathit{max}}$, $|R_m|=1$). Robot $r$ keeps the same destination and hence keep moving toward $\ell_{\mathit{max}}$ taking the shortest path. That is, eventually, $r$ becomes on either $\ell_{i}$ or $\ell_{k}$. That is, a configuration $C \in \mathcal{C}_{p_2}$. 
\end{enumerate}

Observe from the cases above that at each instant, there is only one robot which is allowed to move. That is, as long as such a robot remains idle, it keeps being the only one allowed to move. Hence, we can deduce that the lemma holds.
\end{proof}
 
 \begin{lemma}
Starting from a configuration $C \in \mathcal{C}_{Semi-Empty}$ in which there are no outdated robots, a configuration $C' \in \mathcal{C}_{Oriented}$ with no outdated robots is eventually reached.
\end{lemma}

\begin{proof}
Let $\ell_i$ and $\ell_k$ be the two $\ell$-rings that are adjacent to $\ell_{\mathit{max}}$. Assume without loss of generality that $nb_{\ell_k}(C)>1$ and $nb_{\ell_i}(C)=0$.  Let $\ell_n$ be the $\ell$-ring that is neighbor to $\ell_i$. Note that $\ell_n=\ell_k$ is a possible case. Let $\rightarrow$ be the direction from $\ell_{\mathit{max}}$ to $\ell_{k}$ taking the shortest path. By our algorithm, since $C$ is rigid, exactly one robot from $\ell_{n}$ is selected to move. Its destination is its adjacent empty node outside its $\ell$-ring with respect to $\rightarrow$. Let us refer to this robot by $r$. Observe that when $r$ moves, if $r$ is not on $\ell_i$, then no other robot is allowed to move except $r$ (since $r$ is the only closest robot now to $\ell_i$). Hence, eventually, $r$ joins $\ell_{i}$. Let us refer to the configuration reached by $C'$. Then, $nb_{\ell_i}(C')=1$ and $nb_{\ell_k}(C')>1$. Therefore $C' \in \mathcal{C}_{Oriented}$ is eventually reached. Moreover, as there is a unique robot that is activated at each time, when $C'$ is reached there are no robots with outdated views. Hence the lemma holds. 

\end{proof}


\begin{lemma}\label{lem:oriented2}
Starting from a configuration $C \in \mathcal{C}_{Oriented-2}$ with no outdated robots, a configuration $C' \in \mathcal{C}_{Oriented-1}$ with no outdated robots is eventually reached. 
\end{lemma}

\begin{proof}
Let $u$ be the node on $\ell_k$ that is on the same $L$-ring as $v_{targer}$. According to our algorithm, as long as $nb_{\ell_k}(C)>3$, robots that are the closest to $u$ moves to their adjacent node towards $u$ taking the shortest path. As robots move to join $u$, the number of robots decreases on $\ell_k$ and eventually a configuration $C'$ in which $nb_{\ell_k}(C')=3$ or $nb_{\ell_k}(C')=2$ is eventually reached. Once such such a configuration is reached, \textbf{Align}($\ell_k,\ell_i$) is executed. By Lemma \ref{lem:align}, a configuration $C''\in \mathcal{C}_{Oriented-1}$ with no outdated robot is reached and the lemma holds.

\end{proof}


\begin{lemma}\label{lem:oriented1}
Starting from a configuration $C \in \mathcal{C}_{Oriented-1}$ with no outdated robots, a configuration $C' \in \mathcal{C}_{p_2}$ with no outdated robot is eventually reached. Moreover, $nb_{\ell_{\mathit{max}}}(C) \ne 4$ and $\ell_{\mathit{max}}$ contains at most one multiplicity node. This node is adjacent to $v_{\mathit{target}}$.  
\end{lemma}

\begin{proof}

Let $u$ be the node on $\ell_{\mathit{max}}$ such that $u$ is on the same $L$-ring as the unique occupied node on $\ell_i$.  Some extra steps are taken in the case where $nb_{{\ell}_{\mathit{max}}}(C)<5$. By our algorithm, if $nb_{{\ell}_{\mathit{max}}}(C)=3$ then robots on $\ell_{\mathit{max}}$ executes \textbf{Align}$(\ell_{\mathit{max}}, \ell_i)$. By Lemma \ref{lem:align}, robots on $\ell_{\mathit{max}}$ eventually form a $1$.block of size $3$ whose middle node is adjacent to the unique robot on $\ell_i$ and no robot on $\ell_{\mathit{max}}$ has an outdated view. From the reached configuration, the unique robot on $\ell_i$ moves to its adjacent node on $\ell_{\mathit{max}}$ (note that this node is occupied). That is a configuration $C' \in \mathcal{C}_{p_2}$ with no outdated robot is eventually reached. Moreover, in $C'$, $\ell_{\mathit{max}}$ contains one multiplicity node. This node is adjacent to $v_{\mathit{target}}$.  By contrast, if $nb_{{\ell}_{\mathit{max}}}(C)=4$ then if $u$ is empty then the unique robot on $\ell_i$ moves to join $u$ and the lemma holds (Observe that in this case $\ell_{\mathit{max}}$ contains no multiplicity node). If $u$ is occupied then by our algorithm if $u$ has an empty adjacent node on $\ell_{\mathit{max}}$ then the robot on $u$ moves to its adjacent empty node on $\ell_{\mathit{max}}$ (the scheduler chooses the direction to take if the robot has a choice to make). If $u$ has no adjacent empty node on $\ell_{\mathit{max}}$ then by our algorithm the robot on $\ell_{\mathit{max}}$ which is adjacent to $u$ and does not have a neighboring robot at distance $\lfloor \ell / 2 \rfloor $ is the one allowed to move. Its destination is its adjacent empty node on $\ell_{\mathit{max}}$. That is, eventually $u$ becomes empty. The unique robot on $\ell_i$ is then the only one allowed to move. Its destination is $u$. Hence, a configuration $C' \in \mathcal{C}_{p_2}$ in which there are no multiplicity nodes and no outdated robots on $\ell_{\mathit{max}}$ is reached. 

Finally, if $nb_{{\ell}_{\mathit{max}}}(C)\geq 5$ then by our algorithm, the unique robot on $\ell_i$ is the one allowed to move. Its destination is its adjacent node on $\ell_{\mathit{max}}$. We can thus deduce that the lemma holds.

\end{proof}


From Lemmas \ref{lem:oriented1} and \ref{lem:oriented2}, we can deduce the following corollary: 

\begin{corollary}\label{cor:oriented}
Starting from a configuration $C \in \mathcal{C}_{Oriented}$, a configuration $C' \in \mathcal{C}_{p2}$ is eventually reached. 
\end{corollary}

\begin{lemma}\label{lem:semiO}
Starting from a configuration $C \in \mathcal{C}_{Semi-Oriented}$ with no outdated robots, a configuration $C' \in \mathcal{C}_{Oriented}$ with no outdated robots is eventually reached. 
\end{lemma}

\begin{proof}
Recall that when $C \in \mathcal{C}_{Semi-Oriented}$, $nb_{\ell_{i}}(C)= nb_{\ell_{k}}(C)=1$ where $\ell_i$ and $\ell_k$ are the two $\ell$-rings that are adjacent to $\ell_{\mathit{max}}$. Let $\ell_{n_i}$ and $\ell_{n_k}$ be the two $\ell$-rings that are neighbors of respectively $\ell_i$ and $\ell_k$. In the case where $\ell_i=\ell_{n_k}$ and hence $\ell_k=\ell_{n_i}$ (the configuration contains only three occupied $\ell$-rings) then by our algorithm, using the rigidity of $C$ one robot on either $\ell_k$ or $\ell_i$ is elected to move. Its destination is its adjacent empty node in the opposite direction of $\ell_{\mathit{max}}$. Once the robot moves, a configuration $C' \in \mathcal{C}_{p_2}$ is eventually reached. By contrast, if $\ell_i\neq \ell_{n_k}$ and hence $\ell_k \neq \ell_{n_i}$ then using the rigidity of $C$, one of the closest robot to either $\ell_i$ or $\ell_k$ is elected to move. Its destination is its adjacent empty node towards the closest node to either $\ell_i$ or $\ell_k$. Note that once the robot moves, it is the only one allowed to move as it is the closest one. That is a configuration $'C' \in \mathcal{C}_{Oriented}$ is reached. Note also that since at each time only one robot is allowed to move, $C'$ does not contain outdated robots. 

\end{proof}

By Corollary \ref{cor:oriented} and Lemma \ref{lem:semiO}, we can deduce the following:

\begin{corollary}\label{cor:semi-oriented}
Starting from a configuration $C \in \mathcal{C}_{Semi-Oriented}$, a configuration $C' \in \mathcal{C}_{p2}$ with no outdated robots is eventually reached. 
\end{corollary}

\begin{theorem}\label{theo:Phase1}
Starting from any rigid configuration $C\in \mathcal{C}_{p_1}$ with no outdated robots, a configuration $C'\in \mathcal{C}_{p_2}$ with no outdated robots is eventually reached. Moreover, $\ell_{\mathit{max}}$ contains at most one multiplicity node. This node is the one hat is on the same $L$-ring as $v_{\mathit{target}}$. 
\end{theorem}

\begin{proof}
Derived from Corollaries \ref{cor:uniqueC}, \ref{cor:undefined},  \ref{cor:oriented} and \ref{cor:semi-oriented} and Lemma \ref{lem:empty}. 
\end{proof}

Figure \ref{fig:Preparation_transition} summarizes the transitions within Preparation phase configurations.  

    \begin{figure}[h]
        \begin{center}
            \includegraphics[scale=0.17]{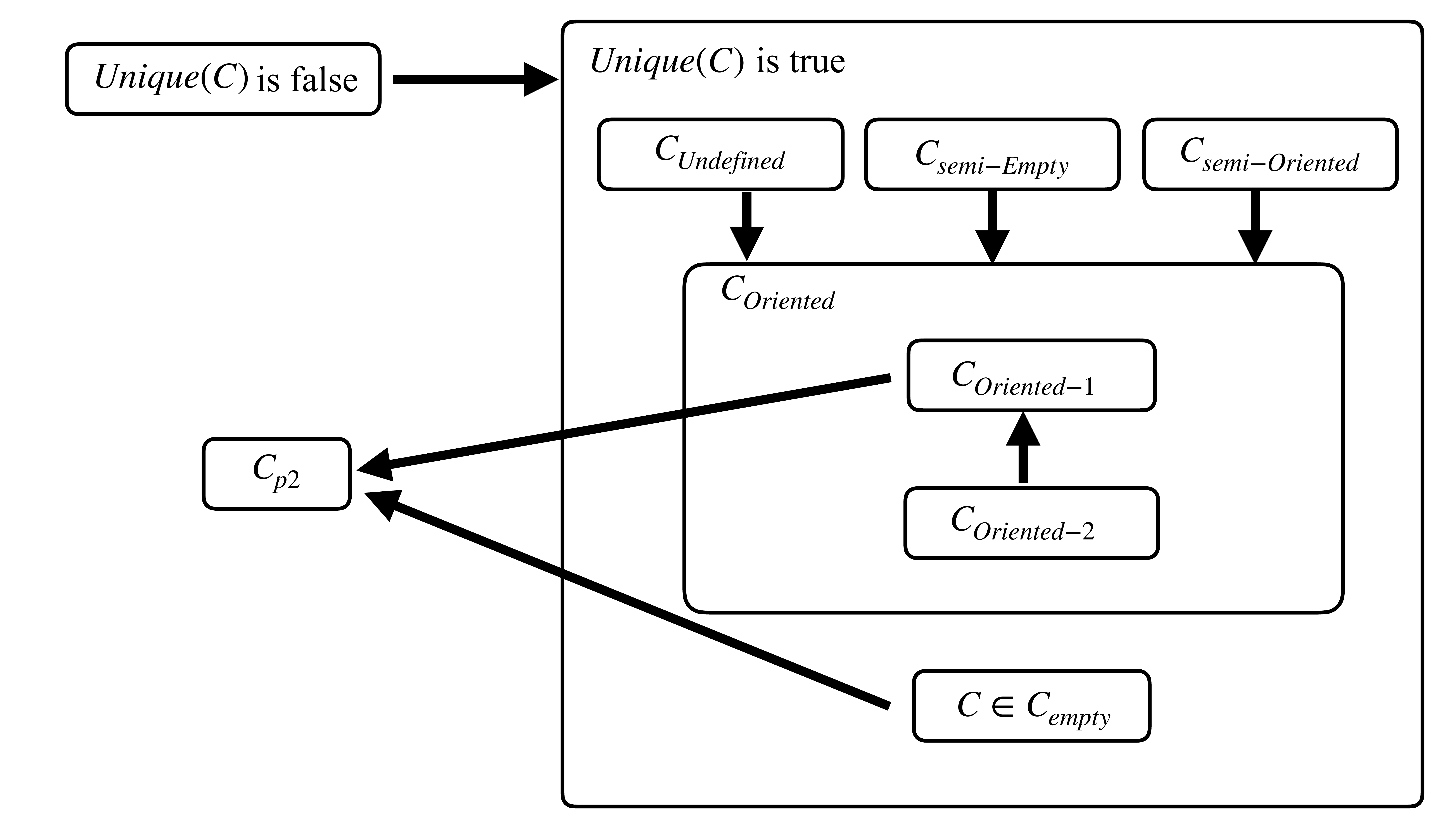}
            \caption{Transitions within Preparation phase}\label{fig:Preparation_transition}
        \end{center}
    \end{figure}


We now show that starting from a configuration $C \in \mathcal{C}_{p_2}$ with no outdated robots and a potential one multiplicity node on $\ell_{\mathit{max}}$ which is on the same $L$-ring as $v_{\mathit{target}}$, the gathering is eventually achieved.

We now show that starting from a configuration in either $\mathcal{C}_{pr}$ or $\mathcal{C}_{ls}$, a configuration in $\mathcal{C}_{sp}$ is eventually reached.  

\begin{lemma}\label{lem:prtols}
Starting from $C \in \mathcal{C}_{pr}$ with no outdated robots, a configuration $C'\in \mathcal{C}_{ls}$ with no outdated robots is eventually reached. 
\end{lemma}

\begin{proof}
Recall that for any $C \in \mathcal{C}_{pr}$, $ C \in \mathcal{C}_{\mathit{target}}$ holds. Let $\ell_i$ be the $\ell$-ring adjacent to $\ell_{\mathit{target}}$ such that $\ell_i\ne \ell_{\mathit{max}}$. We distinguish two cases:

\begin{enumerate}
\item $nb_{\ell_i}(C)>0$ \label{Case:li_occupied}. Recall that in some cases during the preparation phase, to reduce the number of maximal $\ell$-rings, a single tower of size $2$ is created in some $\ell$-rings that were maximal in the initial configuration (the case in which the number of occupied nodes on each $\ell_{\mathit{max}}$ is equal to $\ell$). Robots need to move carefully so that the number of $\ell$-rings that are maximal never increases again. Hence, in the proposed solution, depending on the number of robots on $\ell_i$, robots behave differently. Let $R_{{\ell}_i}$ be the set of robots on $\ell_i$ and let $R_m \subseteq R_{{\ell}_i}$ be the set of robots that are the closest to $v_{\mathit{target}}$. We refer by $u$ to the node on $\ell_i$ which is adjacent to $v_{\mathit{target}}$. The following sub-cases are possible:
    \begin{enumerate}
       \item \label{case:neighbor} $u$ is occupied. With respect to the algorithm, robots on $u$ move to join $v_{\mathit{target}}$. Observe that as long as there are robots on $u$, these robots remain the only ones that are allowed to move. Since the scheduler is weakly fair, eventually, $u$ becomes empty, $R_{{\ell}_i}$ decreases and we retrieve Case~\ref{case:notminus1}.
       
       \item \label{case:notminus1} $u$ is empty. With respect to the proposed solution, if $nb_{\ell_i}(C)<\ell-1$, robots in $R_m$ are allowed to move, their destination is their adjacent empty node on $\ell_i$ toward $u$ taking the shorted path. That is, at least one robot on $\ell_i$ eventually joins $u$ and we retrieve Case \ref{case:neighbor} (Observe that in this case, there are no towers on $\ell_i$). By contrast, if $nb_{\ell_i}(C)=\ell-1$ then by Lemma \ref{cor:Unique_Multi}, $\ell_i$ can hosts at most one multiplicity node. That is, there is at least one robot in $R_m$ which is not part of a multiplicity node. According to the algorithm, this robot is the one allowed to move, its destination is its adjacent empty node on $\ell_i$ toward $u$ taking the shortest path. By moving, we retrieve eventually Case \ref{case:neighbor}.

    \end{enumerate}
    
According to the algorithm, as long as $nb_{\ell_i}(C)>0$, no robot other than those which are on $\ell_i$ are allowed to move. That is, $R_{{\ell}_i}$ never increases. On another hand, at each time Case~\ref{case:neighbor} occurs, $R_{{\ell}_i}$ decreases. From Cases~\ref{case:neighbor} and \ref{case:notminus1}, we can deduce that a configuration $C'$ in which $nb_{\ell_i}(C')=0$ is eventually reached. We then retrieve Case~\ref{case:neighbor0}.

\item \label{case:neighbor0}$nb_{\ell_i}(C)=0$. Let $\ell_k$ be the neighboring $\ell$-ring of $\ell_{\mathit{target}}$ with respect to direction $\rightarrow$ (from $\ell_{\mathit{target}}$ to $\ell_{\mathit{max}}$ taking the shortest path). Let $R_{m}$ be the set of robots on $\ell_{k}$ that are the closest to $v_{\mathit{target}}$ with respect to $\rightarrow$. From the proposed algorithm, robots on $R_{m}$ are the ones allowed to move, their destination is their adjacent node outside $\ell_{k}$ and toward $\ell_{\mathit{target}}$ with respect to $\rightarrow$. Let us refer to the $\ell$-ring to which the robots have moved by $\ell_{k-1}$. Note that $\ell_{k-1}$ is empty in $C$ (By definition of neighboring $\ell$-ring). Once the scheduler activates at least one robot from $R_m$ in $C$, a configuration $C'$ in which $1<=nb_{\ell_{k-1}}(C')<=2$ is eventually reached. Since no other robots on $\ell_{k}$ is allowed to move, by induction, we can show that at least one robot from the robots that moved from $\ell_k$ eventually reaches $\ell_i$. Thus, we retrieve Case \ref{Case:li_occupied}. 

\end{enumerate}

From the cases above, we can deduce that robots on an $\ell$-ring different from $\ell_{\mathit{max}}$ and $\ell_{\mathit{target}}$ keep getting closer to $v_{\mathit{target}}$ to eventually join it. Once they join $v_{\mathit{target}}$ they are not allowed to move anymore. Hence, we can deduce that the lemma eventually holds. 

\end{proof}


\begin{lemma} \label{lem:Cls5-to-Csp}
Starting from a configuration $C \in \mathcal{C}_{ls}$ in which $nb_{\ell_{\mathit{target}}}(C)=1$, a configuration $C' \in \mathcal{C}_{sp-3}$ or $C' \in \mathcal{C}_{sp-2}$ is eventually reached. Moreover, the only multiplicity node on $\ell_{\mathit{max}}$ is the one that is adjacent to $v_{\mathit{target}}$. 
\end{lemma}

\begin{proof}
From Theorem \ref{theo:Phase1}, we know that when the second phase starts, no robot has an outdated view. On another hand, by Lemma \ref{lem:oriented2}, if $nb_{\mathit{max}}(C)=4$ then no robot on $\ell_{\mathit{max}}$ contains a multiplicity node. That is, by Lemma \ref{lem:align}, we can deduce that a configuration $C' \in \mathcal{C}_{sp-3}$ or $C' \in \mathcal{C}_{sp-2}$ is eventually reached and no multiplicity node exists on $\ell_{\mathit{max}}$. By contrast, if $nb_{\mathit{max}}(C) \neq 4$ then by Lemmas \ref{theo:Phase1} and Lemma \ref{lem:prtols} $\ell_{\mathit{max}}$ contains no outdated robots and a most one multiplicity node. This multiplicity node (if it exists) is the one that is adjacent to $v_{\mathit{target}}$. 

Let $R$ be the set of robots on $\ell_{\mathit{max}}$ which are the closest to $u_3$. By our algorithm. If $|R|=2$ then both robots of $R$ are allowed to move. Their destination is their adjacent node toward $u_3$. If only one of the two robots move then by our algorithm the robot that was supposed to move is the only one allowed to move (the robot that is at distance $d$+1 from $u_3$ or the robot that is adjacent to $u_3$). That is a configuration $C'$ in which both robots of $R$ have moved is eventually reached.  Note that in $C'$, either $nb_{\ell_{\mathit{max}}}(C)= nb_{\ell_{\mathit{max}}}(C')$ (robots do not join $u_3$) or $nb_{\ell_{\mathit{max}}}(C)= nb_{\ell_{\mathit{max}}}(C')+1$ (both robots join $u_3$). In the first case, robots in $R$ remain the only one allowed to move. That is eventually they join node $u_3$. In the later case, if $nb_{\ell_{\mathit{max}}}(C')>5$ then by our algorithm, robots on $u_3$ are the ones allowed to move. That is, eventually $u_3$ becomes empty in the configuration reached $C''$, $nb_{\ell_{\mathit{max}}}(C)= nb_{\ell_{\mathit{max}}}(C'')+2$. By contrast, if $nb_{\ell_{\mathit{max}}}(C')=5$ (respectively $nb_{\ell_{\mathit{max}}}(C')=3$), robots on $\ell_{\mathit{max}}$ execute \textbf{Align}$(\ell_{\mathit{max}}, \ell_i)$. Note that during the execution of \textbf{Align}, if $u_3$ is occupied and possibly hosts a multiplicity, robots on $u_3$ do not move. That is, through out the execution of \textbf{Align}, the number of robots on $\ell_{\mathit{max}}$ remains equal to $5$ (respectively $3$). By Lemma \ref{lem:align}, we can deduce that a configuration $C' \in \mathcal{C}_{sp-3}$ or $C' \in \mathcal{C}_{sp-2}$ is eventually reached. Moreover, the only multiplicity node on $\ell_{\mathit{max}}$ is the one that is adjacent to $v_{\mathit{target}}$. Hence the lemma holds. 

\end{proof}

Similarly to Lemma \ref{lem:Cls5-to-Csp}, we can show the following lemma:

\begin{lemma}\label{lem:Cls-to-Csp2}
Starting from a configuration $C \in \mathcal{C}_{ls}$ with $nb_{\ell_{\mathit{target}}}(C)>1$ a configuration $C' \in \mathcal{C}_{sp-1}$ is eventually reached. Moreover, if $\ell_{\mathit{max}}$ contains a multiplicity node in $C'$ then this multiplicity is adjacent to $v_{\mathit{target}}$.
\end{lemma}





\begin{lemma}\label{lem:Csp-1toCsp-2}
Starting from a a configuration $C \in \mathcal{C}_{sp-1}$, a configuration $C' \in \mathcal{C}_{sp-2}$ is eventually reached. 
\end{lemma}

\begin{proof}

Let $C \in \mathcal{C}_{sp-1}$.  We refer to the non empty $\ell$-ring that is adjacent to $\ell_{\mathit{max}}$ by $\ell_{mark}$. Depending on the number of robots on $\ell_{mark}$, the following cases are considered:

\begin{enumerate}
    \item \label{case:2onli} $nb_{\ell_{mark}}(C)=2$ and $\ell_{mark}$ contains a $1$.block of size 2. By our algorithm, robots on $\ell_{mark}$ that have two adjacent occupied nodes are the ones allowed to move. If the scheduler activates all the robots allowed to move, $C' \in C_{sp-2}$ is reached and hence the lemma holds. By contrast, if the scheduler activates only a subset of robots, the configuration remains in $\mathcal{C}_{sp-1}$. Moreover, the same robots keep being the only ones allowed to move. Since the scheduler is weakly fair, these robots are eventually activated and hence a configuration $C' \in C_{sp-2}$ is eventually reached. 
    
    \item \label{case:3-To-2} $nb_{\ell_{mark}}(C)=2$ and $\ell_{mark}$ contains a $2$.block of size 2. Observe that in this case $C \in C_{\mathit{target}}$ and $\ell_{\mathit{target}}=\ell_{mark}$. By our algorithm, robots adjacent to $v_{\mathit{target}}$ are the ones allowed to move. Their destination is $v_{\mathit{target}}$. If the scheduler activates all the robots allowed to move, a configuration $C' \in C_{sp-2}$ is reached and the lemma holds. If the scheduler activates only robots at one border of the $2$.block, we retrieve Case~\ref{case:2onli}. Finally if the scheduler activates a subset of robots then a $1$.block of size $3$ is created and we retrieve Case \ref{case:3onli}. 
    
    \item \label{case:3onli} $nb_{\ell_{mark}}(C)=3$. Observe that, in this case  $C \in C_{\mathit{target}}$ and $\ell_{\mathit{target}}=\ell_{mark}$. According to our algorithm, robots on $\ell_{mark}$ that are at the border of the $1$.block are the only ones allowed to move. Their destination is their adjacent on $\ell_{mark}$ inside the block they belong to. If the scheduler activates all the robots allowed to move, $C' \in C_{sp-2}$ is reached and the lemma holds. If by contrast, the scheduler activates only robots on one border of the block, $\ell_{\mathit{target}}$ contains a single $1$.block of size 2 and hence we retrieve Case~\ref{case:2onli}. Finally, if the scheduler activates only a subset of robots that move then the configuration remains in $\mathcal{C}_{sp-1}$ and the same robots remain the only one allowed to move. As the scheduler is weakly fair, the number of robots on the borders of the $1$.block decreases. Thus, eventually, a configuration $C' \in C_{sp-2}$ is reached.
    
\end{enumerate}

From the cases above, we can deduce that eventually all robots on $\ell_{\mathit{target}}$ join $v_{\mathit{target}}$ and hence we can deduce that the lemma holds. 

\end{proof}


\begin{lemma}\label{lem:Csp4-Csp-3}
Starting from a configuration $C \in \mathcal{C}_{sp-2}$ in which $\ell_{\mathit{max}}$ contains at most one multiplicity node and such a multiplicity (if it exists) is adjacent to $v_{\mathrm{\mathit{target}}}$, a configuration $C' \in \mathcal{C}_{sp-3}$ is eventually reached. 
\end{lemma}

\begin{proof}

Let $C \in \mathcal{C}_{sp-2}$. The following cases are possible:
\begin{enumerate}
    \item $nb_{\ell_{\mathit{max}}}(C)=5$ \label{Case:5}. By our algorithm, robots on $\ell_{\max}$ which are adjacent to the borders of the $1$.block of size $5$ move to join the middle node of the $1$.block. If all robots allowed to move, execute the move phase then a configuration in which $\ell_{\mathit{max}}$ contains a single $2$.block of size $3$ is reached. By our algorithm, the robots on $\ell_{\mathit{max}}$ execute \textbf{Align}$(\ell_{\mathit{max}}, \ell_{\mathit{target}})$ and hence the robots at the extremities of the $2$.block eventually move to create a single $1$.block of size $3$ whose middle robot is adjacent to $v_{\mathit{target}}$. Thus a configuration $C' \in \mathcal{C}_{sp-3}$ is reached in this case. By contrast, if only a subset of robots which are allowed to move perform the move phase then we retrieve Case \ref{Case:3-1}. 
    
     \item $nb_{\ell_{\mathit{max}}}(C)=4$  and $\ell_{\mathit{max}}$ contains two $1$.blocks of size $2$. According to our algorithm, robots in each $1$.block who has an adjacent occupied node at distance $2$ are the one allowed to move. Depending on the robots that have moved, we retrieve either Case \ref{Case:5} or Case \ref{Case:3-1} or a configuration in which  a single $1$.block of size $3$ whose middle robot is adjacent to $v_{\mathit{target}}$ is created. In the later case, by our algorithm \textbf{Align}$(\ell_{\mathit{max}}, \ell_{\mathit{target}})$ is executed. That is, the extremities of the $2$.block eventually move to create a single $1$.block of size $3$ whose middle robot is adjacent to $v_{\mathit{target}}$. A configuration $C' \in \mathcal{C}_{sp-3}$ is then reached.
     \item  $nb_{\ell_{\mathit{max}}}(C)=4$ and $\ell_{\mathit{max}}$ contains a $1$.block of size $3$ \label{Case:3-1}. By our algorithm, robots in the middle of the $1$.block are the ones allowed to move. As the scheduler is weakly fair, eventually a $2$.block of size $3$ is created. By our algorithm, as \textbf{Align}$(\ell_{\mathit{max}}, \ell_{\mathit{target}})$ is executed the extremities of the $2$.block eventually move to create a single $1$.block of size $3$ whose middle robot is adjacent to $v_{\mathit{target}}$. Thus the Lemma holds.
    
\end{enumerate}

From the cases above we can deduce that the lemma holds. 

\end{proof}

\begin{lemma} \label{lem:Csp-3--Csp-4}
Starting from a configuration $C \in \mathcal{C}_{sp-3}$ in which the only node that hosts a multiplicity  on $\ell_{\mathit{max}}$ is the node that is adjacent to $v_{\mathit{target}}$, a configuration $C' \in \mathcal{C}_{sp-4}$ is eventually achieved. Moreover, in $\mathcal{C}_{sp-4}$, there is at most one occupied node that hosts a multiplicity. This node is the onde that is in the middle of the $1$.clock of size $3$. 
\end{lemma}

\begin{proof}

Let $C \in \mathcal{C}_{sp-3}$. Depending on the number of robots on $\ell_{\mathit{max}}$, two cases are distinguished:
   \begin{enumerate}
      \item $nb_{\ell_{\mathit{max}}}(C)=2$. By assumption, since the node on $\ell_{\mathit{max}}$ which is adjacent to $v_{\mathit{target}}$ is empty, $\ell_{\mathit{max}}$ does not host a multiplicity. From our algorithm, robots on $v_{\mathit{target}}$ move to their adjacent node on $\ell_{\mathit{max}}$. If $v_{\mathit{target}}$ hosts only one robot, then when the robot moves, a configuration $C' \in \mathcal{C}_{sp-4}$ is reached and the lemma holds. By contrast, if $v_{\mathit{target}}$ hosts more than one robot and the scheduler activates all robots on $v_{\mathit{target}}$ to move then a configuration $C \in C_{sp-4}$ is also reached and the lemma holds. Finally, if the scheduler activates only a subset of robots on $v_{\mathit{target}}$ then we retrieve Case~\ref{Case:3onmax}. 
      \item \label{Case:3onmax} $nb_{\ell_{\mathit{max}}}(C)=3$. Robots on $v_{\mathit{target}}$ are the only one allowed to move. Their destination is their adjacent node on $\ell_{\mathit{max}}$. Note that as long as $v_{\mathit{target}}$ is occupied, the only robots allowed to move are the ones that are on $v_{\mathit{target}}$. Hence, eventually all robots on $v_{\mathit{target}}$ move to their adjacent node on $\ell_{\mathit{max}}$ and a configuration $C' \in \mathcal{C}_{sp-4}$ is then reached. 
   \end{enumerate}

From the cases above, we can deduce that the lemma holds. 
\end{proof}

\begin{lemma}\label{lem:noTowerCsp2}
In a configuration $C\in \mathcal{C}_{sp-4}$, if $nb_{\ell_{\mathit{max}}}(C)=3$ (respectively $nb_{\ell_{\mathit{max}}}(C)=2$) then the border robots (respectively at least one of the two border robots) of the $3$.block (respectively $2$.block) do not host a multiplicity.  
\end{lemma}

\begin{proof}
From Theorem \ref{theo:Phase1}, we know that when the second phase starts, no robot has an outdated view and the only multiplicity node that can exist on $\ell_{\mathit{max}}$ is the one that is adjacent to $v_{\mathit{target}}$. On another hand, in any configuration $C \in \mathcal{C}_{ls}$ or $C \in \mathcal{C}_{pr}$, no robot on $\ell_{\mathit{max}}$ is enabled to move and no robot moves to a node of $\ell_{\mathit{max}}$. That is when a configuration $C' \in \mathcal{C}_{sp}$ is reached, the only possible multiplicity node on $\ell_{\mathit{max}}$ is the one that is adjacent to $v_{\mathit{target}}$. 
By Lemmas \ref{lem:Cls5-to-Csp}, \ref{lem:Cls-to-Csp2}, \ref{lem:Csp-1toCsp-2}, the only node on $\ell_{\mathit{max}}$ that can host a multiplicity is the one that is adjacent to $v_{\mathit{target}}$. From Lemma \ref{lem:Csp-3--Csp-4}, the only node that hosts a multiplicity is the one that is in the middle of the $1$.block of size $3$. By our algorithm, in a configuration $C \in \mathcal{C}_{sp-4}$ in which there are $3$ occupied nodes, the robot that is at the border of the $1$.block is the one allowed to move. If the scheduler activates one robot then the configuration remains in $\mathcal{C}_{sp-4}$ but with only two occupied nodes. That is, one of the two occupied nodes does not belong to a multiplicity. Thus, the lemma holds. 

\end{proof}

\begin{lemma}\label{lem:gethering}
Starting from a configuration $C \in \mathcal{C}_{sp-4}$, the gathering is eventually achieved. 
\end{lemma}

\begin{proof}
With respect to our algorithm, the following cases are considered :
  \begin{enumerate}

      \item $C$ contains a $1$.block of size $3$ \label{Case:1block3}. From Lemma~\ref{lem:noTowerCsp2}, the nodes at the border of the $1$.block do not host a multiplicity. According to our algorithm, the border robots are the ones allowed to move. Their destination is their adjacent occupied node. If both robots move at the same time then the gathering is achieved and the lemma holds. Otherwise, we retrieve Case~\ref{case:2robots}.
      
      \item \label{case:2robots} $C$ contains a $1$.block of size $2$. From Lemma~\ref{lem:noTowerCsp2}, exactly one of the two occupied nodes hosts a multiplicity. The unique robot that is allowed to move is the one that is not part of a multiplicity. By moving, the gathering is achieved and the lemma holds. 
  \end{enumerate}
From the cases above we can deduce that the lemma holds.   
  
\end{proof}

\begin{lemma}\label{lem:gathering}
Starting from a configuration $C \in \mathcal{C}_{sp}$, the gathering is eventually achieved. 
\end{lemma}

\begin{proof}
Holds from Lemmas \ref{lem:Csp-3--Csp-4}, \ref{lem:Csp-1toCsp-2}, \ref{lem:Csp4-Csp-3} and \ref{lem:gethering} (refer to Figure \ref{fig:spTransition}). 
\end{proof}

 \begin{figure}[htb]
    \begin{center}
        \includegraphics[scale=0.46]{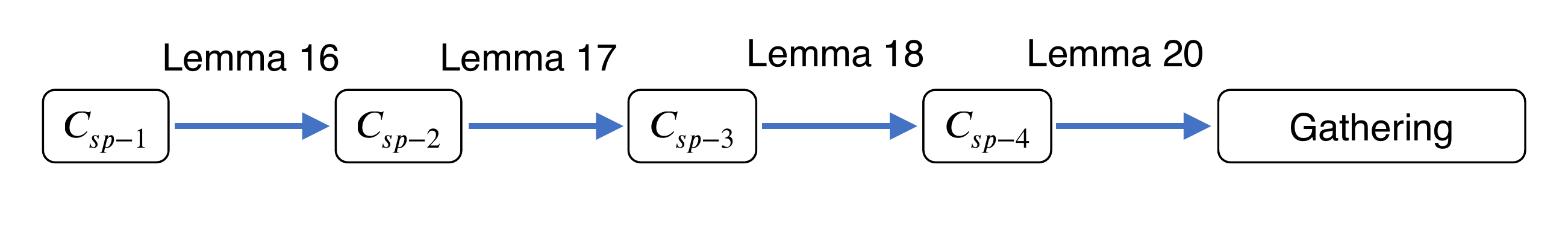}
        \caption{Transitions within the set $\mathcal{C}_{sp}$}\label{fig:spTransition}
    \end{center}
\end{figure}


\begin{theorem} \label{theo:phase2}
Starting from a configuration $C \in \mathcal{C}_{p_2}$, the gathering is eventually achieved. 
\end{theorem}

\begin{proof}
Can be deduced from Lemmas \ref{lem:prtols}, \ref{lem:Cls5-to-Csp}, \ref{lem:Cls-to-Csp2}, \ref{lem:Csp-1toCsp-2}, \ref{lem:Csp4-Csp-3}, \ref{lem:Csp-3--Csp-4} and \ref{lem:gathering}.
\end{proof}


\begin{theorem}
Assuming an $(\ell, L)$-torus in which $L <\ell$ and $L>4$ and starting from an arbitrary rigid configuration, Protocol 1 solves the gathering problem for any $k \geq 3$. 
\end{theorem}

\section{Concluding remarks}

We presented the first algorithm for gathering oblivious mobile robots in a fully asynchronous execution model in a torus-shaped space graph. Our work raises several interesting open questions:
\begin{enumerate}
\item What is the exact set of initial configurations that are gatherable? Our work considers initial rigid configurations only, and we know that periodic and edge-symmetric configurations make the problem impossible to solve. As in the case of the ring, there may exist special classes of symmetric configuration that are still gatherable.
\item The case of a square torus is intriguing: the robots would loose the ability to distinguish between the big side and the small side of the torus, so additional constraints are likely to hold if gathering remains feasible.
\item Following recent work by Kamei et al.~\cite{KLOTW19c} on the ring, it would be interesting to consider myopic (\emph{i.e.} robot whose visibility radius is limited) yet luminous (\emph{i.e.} robots that maintain a constant size state that can be communicated to other robots in the visibility range) robots in a torus. 
\end{enumerate}

\bibliography{biblio}
\bibliographystyle{plain}

\end{document}